\crefname{algorithm}{Alg.}{Algs.}
\Crefname{algorithm}{Algorithm}{Algorithms}
\crefname{appendix}{App.}{App.}
\Crefname{appendix}{Appendix}{Appendices}
\crefname{corollary}{Corol.}{Corolls.}
\Crefname{corollary}{Corollary}{Corollaries}
\crefname{conjecture}{Conjecture}{Conjectures}
\Crefname{conjecture}{Conjecture}{Conjectures}
\crefname{definition}{Def.}{Defs.}
\Crefname{definition}{Definition}{Definition}
\crefname{figure}{Fig.}{Figs.}
\Crefname{figure}{Figure}{Figures}
\crefname{lemma}{Lemma}{Lemmas}
\Crefname{lemma}{Lemma}{Lemmas}
\crefname{proposition}{Prop.}{Props.}
\Crefname{proposition}{Proposition}{Propositions}
\Crefname{section}{Section}{Sections}
\crefname{section}{Sect.}{Sect.}
\crefname{subsection}{Sect.}{Sect.}
\Crefname{subsection}{Section}{Sections}
\crefname{subsubsection}{Sect.}{Sect.}
\Crefname{subsubsection}{Section}{Sections}
\crefname{table}{Table}{Tables}
\Crefname{table}{Table}{Tables}
\crefname{theorem}{Thm.}{Thms.}
\Crefname{theorem}{Theorem}{Theorems}
\newcommand{\rev}[1]{#1}
\newcommand{\spara}[1]{\smallskip\noindent{\bf #1}}
\DeclarePairedDelimiter\abs{\lvert}{\rvert} % absolute value
\let\oldabs\abs%
\def\abs{\@ifstar{\oldabs}{\oldabs*}}
\newcommand{\algo}{\textsc{Alice}\xspace}
\newcommand{\algorso}{\textsc{Alice-A}\xspace}
\newcommand{\algocurve}{\textsc{Alice-B}\xspace}
\newcommand{\algoseq}{\textsc{Alice-S}\xspace}
\newcommand{\bjdm}[1]{\ensuremath{\mathsf{J}_{#1}}\xspace} % BJDM of graph #1
\newcommand{\card}[1]{\ensuremath{\abs{#1}}\xspace} % cardinality of a set
\newcommand{\copiesnum}[1]{\ensuremath{\mathsf{c}\lparen#1\rparen}\xspace} % no. of matrices corresponding to dataset #1
\newcommand{\caternum}[1]{\ensuremath{\mathsf{z}(#1)}\xspace} % number of caterpillars in a graph % chktex 36
\newcommand{\critval}{\ensuremath{\alpha}\xspace} % critical value
\newcommand{\critvaldatnum}{\ensuremath{T}\xspace} % number of datasets to compute the adjusted critical value
\newcommand{\dataset}{\ensuremath{\datasetsym}\xspace} % dataset
\newcommand{\datasetsym}{\ensuremath{\mathcal{D}}\xspace} % symbol for dataset
\newcommand{\dattomat}[1]{\ensuremath{\mathsf{mat}\lparen#1\rparen}\xspace} % set of matrices corresponding to dataset #1
\newcommand{\degree}[1]{\ensuremath{\mathsf{deg}\lparen#1\rparen}\xspace} % degree of vertex #1 % chktex 35
\newcommand{\fis}[2]{\ensuremath{\mathsf{FI}_{#2}(#1)}\xspace} % frequent itemsets in dataset #2 wrt to threshold #1 % chktex 36
\newcommand{\graphs}{\ensuremath{\mathcal{G}}\xspace} % set of multi-graphs
\newcommand{\items}{\ensuremath{\mathcal{I}}\xspace} % alphabet of items
\newcommand{\matrices}{\ensuremath{\mathcal{M}}\xspace} % set of matrices
\newcommand{\mattodat}[1]{\ensuremath{\mathsf{dat}\lparen#1\rparen}\xspace} % function to map matrices to datasets
\newcommand{\msupp}[2]{\rho_{#1}\lparen#2\rparen} % multi-support of #2 in #1
\newcommand{\neigh}[1]{\ensuremath{\Gamma\left(#1\right)}\xspace} % neighborhood
\newcommand{\neighdistr}[1]{\ensuremath{\xi_{#1}}\xspace} % distribution over the neighbors of #1
\newcommand{\odataset}{\ensuremath{{\smash{\nonsmashedodataset}}}\xspace} % observed dataset
\newcommand{\nonsmashedodataset}{\ensuremath{\mathring{\dataset}}\xspace}
\newcommand{\nullmodel}{\ensuremath{\Pi}\xspace} % null model
\newcommand{\nullprob}{\ensuremath{\pi}\xspace} % null probability distribution
\renewcommand{\nullset}{\ensuremath{\mathcal{Z}}\xspace} % null set of datasets
\newcommand{\pvalsym}{\ensuremath{p}\xspace} % symbol for (estimated) p-value
\newcommand{\pval}[1]{\ensuremath{\pvalsym_{#1}}\xspace} % p-value of a dataset
\newcommand{\epval}[1]{\ensuremath{\tilde{\pvalsym}_{#1}}\xspace} % estimated p-value of a dataset
\newcommand{\epvaldatnum}{\ensuremath{T}\xspace} % number of datasets to estimate the p-value
\newcommand{\statdistr}{\ensuremath{\phi}\xspace} % stationary distribution for MH
\newcommand{\suchthat}{\ensuremath{\mathrel{:}}\xspace} % "such that" symbol for set definitions, with the right spacing
\newcommand{\supp}[2]{\ensuremath{\sigma_{#1}(#2)}\xspace} % support of itemset #1 in dataset #2 % chktex 36
\newcommand{\thresh}{\ensuremath{\theta}\xspace} % minimum support threshold
\newcommand{\ourtitle}{\algo~and~the~Caterpillar:
A~More~Descriptive~Null~Model for~Assessing~Data~Mining~Results}
\newcommand{\slfrac}[2]{\raisebox{5pt}{\ensuremath{\left.\raisebox{5pt}{\ensuremath{\displaystyle#1}}\middle/\raisebox{-5pt}{\ensuremath{\displaystyle#2}}\right.}}}
\theoremstyle{thmstyleone}%
\newtheorem{theorem}{Theorem}
\newtheorem{lemma}[theorem]{Lemma}%
\newtheorem{fact}[theorem]{Fact}
\newtheorem{corollary}[theorem]{Corollary}%
\theoremstyle{thmstylethree}%
\newtheorem{definition}{Definition}%
\newenvironment{squishlist}
{\begin{list}{$\bullet$}
 {\setlength{\itemsep}{0pt}
     \setlength{\parsep}{3pt}
     \setlength{\topsep}{3pt}
     \setlength{\partopsep}{0pt}
     \setlength{\leftmargin}{1.5em}
     \setlength{\labelwidth}{1em}
     \setlength{\labelsep}{0.5em} } }
{\end{list}}
\begin{document}

\title[\algo and the Caterpillar]{\ourtitle}

\author*[1]{\fnm{Giulia} \sur{Preti}}\email{giulia.preti@centai.eu}
\author[1]{\fnm{Gianmarco} \sur{De Francisci Morales}}\email{gdfm@acm.org}
\author[2]{\fnm{Matteo} \sur{Riondato}}\email{mriondato@amherst.edu}
\affil*[1]{\orgname{CENTAI}, \orgaddress{\street{Corso Inghilterra 3}, \city{Turin}, \postcode{10138}, \state{TO}, \country{Italy}}}
\affil[2]{\orgdiv{Dept.\ of Computer Science} \orgname{Amherst College}, \orgaddress{\street{25 East
Drive} \city{Amherst}, \postcode{01002} \state{MA}, \country{USA}}}

\abstract{We introduce novel null models for assessing the results obtained from
  observed binary transactional \rev{and} sequence datasets, using statistical
  hypothesis testing. Our null models maintain more properties of the observed
  dataset than existing ones. Specifically, they preserve the Bipartite Joint
  Degree Matrix of the bipartite (multi-)graph corresponding % chktex 36
  to the dataset, which ensures that the number of caterpillars, i.e., paths of
  length three, is preserved, in addition to other properties considered by other
  models. We describe \algo, a suite of Markov-Chain Monte-Carlo algorithms for
  sampling datasets from our null models, based on a carefully defined set of
  states and efficient operations to move between them. The results of our
  experimental evaluation show that \algo mixes fast and scales well, and that our
  null model finds different significant results than ones previously considered
  in the literature.}

\keywords{Hypothesis Testing, Markov Chain Monte Carlo Methods, Sequence
Datasets, Significant Pattern Mining, Swap Randomization, Transactional Datasets}

\maketitle

\noindent {\normalfont\small \emph{``One side will make you grow taller, and the
  other side will make you grow shorter.''} --- The Caterpillar, % chktex 38
  \textit{Alice in Wonderland}}

% !TEX root = ../main-kais.tex
\section{Introduction}\label{sec:intro}

Binary transactional datasets and sequence datasets are the object of study in
several areas, from marketing to network analysis, to finance modeling,
processing of satellite images, and more. In genomics, for example, transactions
represent individuals and the items in a transaction represent their gene
mutations. Many fundamental data mining tasks can be defined on them, such as
frequent itemset/sequence mining, clustering, and anomaly detection.

The goal of knowledge discovery from a dataset is not simply to analyze the
dataset, but to obtain \emph{new understanding} of the stochastic, often noisy,
\emph{process that generated the dataset}. Such novel insights can only be
obtained by subjecting the results of the analysis to a rigorous validation,
which allows to separate those results that give new information about the
process from those that are due to the randomness of the process itself. This
kind of validation is actually necessary in many scientific fields, for example
in microbiology and genomics, when the observed dataset represents individuals
with their gene mutations, or protein interactions
\citep{FerkingstadHS15,RelatorTS18,SeseTST14}.

The \emph{statistical hypothesis testing} framework~\citep{LehmannR22}
is a very rigorous validation process for the results obtained from an observed
dataset. Hypotheses about the results are formulated, and then tested by
comparing the results (or appropriate statistics about them) to their
distribution over the \emph{null model}, i.e., a set of datasets enriched with a
user-specified probability distribution (see \cref{sec:prelims:hyptest}), which
contains all and only the datasets that preserve a user-specified subset of the
properties of the observed dataset (e.g., the size, or some cumulative
statistics). The testing of hypotheses requires, in \emph{resampling-based methods}
\citep{WestfallY93}, to be able to efficiently draw multiple datasets
from the null model. These samples are then used to obtain an approximation of
the distribution of results from the null model, to which the actually observed
results are compared. When the probability of obtaining results as or more
extreme than those observed is low, the observed results are deemed
\emph{statistically significant}, i.e., they are deemed to give previously
unknown information about the data-generating process.

Informally, the properties preserved by the null model, and the sampling
distribution, capture the existing or assumed knowledge about the process that
generated the observed dataset. Testing the hypotheses can be understood as
trying to ascertain whether the observed results can be explained by the
existing knowledge. The choice of the null model must be made by the user, based
on their domain knowledge, and should be deliberate. Null models that capture
more properties of the observed dataset are usually more descriptive and
therefore to be preferred. The challenge in using such models is the need for
efficient computational procedures to draw datasets from the null model
according to the user-specified distribution, as many such sampled datasets are
necessary to test complex or multiple hypotheses.

\subsection*{Contributions}\label{sec:intro:contr}

We study the problem of assessing results obtained from an observed
binary\footnote{In the rest of the work, we drop the attribute ``binary'': all
datasets we refer to are binary.} transactional or sequence dataset by
performing statistical hypothesis tests via resampling methods from a
descriptive null model. Specifically, our contributions are the following.

\begin{itemize}
  \item We introduce novel null models (\cref{sec:model,sec:seq:null}) that preserve
    additional properties of the observed dataset than those preserved by
    existing null models \citep{GionisMMT07,TononV19}. Specifically, all datasets in our
    null models have the same \emph{Bipartite Joint Degree Matrix (BJDM)} of the
    bipartite (multi-)graph corresponding to the observed dataset % chktex 36
    (\cref{sec:model:matrices,sec:model:bjdm}). Maintaining the BJDM captures
    additional ``structure'' of the observed dataset: e.g., on transactional
    datasets, in addition to dataset size, transaction lengths, and item or
    itemset supports, the number of \emph{caterpillars} in the observed
    dataset is also preserved (\cref{lem:caternum}). We also explain why more natural properties,
    such as the supports of itemsets of length two on transactional datasets,
    are not as informative as one may think.
  \item We present \algo,\footnote{Like the eponymous character of \textit{Alice
    in Wonderland}, our algorithms explore a large strange world, and interact
  with caterpillars.} a suite of Markov-Chain-Monte-Carlo algorithms for
  sampling datasets from our null models according to a user-specified
  distribution. \algorso (\cref{sec:sampling:swaps}) is based on
  \emph{Restricted Swap Operations (RSOs)} on biadjacency matrices, which
  preserve the BJDM\@. Our contributions include a sampling algorithm to draw
  such RSOs much more efficiently than with the natural rejection sampling
  approach. A second algorithm, \algocurve, (\cref{sec:sampling:curve}) adapts
  the \textsc{CurveBall}
  approach\rev{~\citep{verhelst2008efficient,carstens2015proof}} to RSOs, to
  essentially perform multiple RSOs at every step, thus leading to faster
  mixing. Finally, \algoseq samples from the null model for sequence datasets,
  using Metropolis-Hastings and a variant of RSOs, to take into account the fact
  that the bipartite graph corresponding to a sequence dataset is a
  \emph{multi-graph}.
\item The results of our experimental evaluation show that \algo mixes fast,
  it is scalable as the dataset grows, and that our new null model differs from
  previous ones, as it marks different results as significant.
\end{itemize}

The present article extends the conference version~\citep{PretiDFMR22} in
multiple ways, including:
\begin{itemize}
  \item The extension to sequence datasets and the development of \algoseq
    (\cref{sec:seq}) is
    entirely new. In addition to introducing a novel null model and algorithm, to the best of our knowledge, 
    our work is the first to look at sequence datasets as bipartite
    multi-graphs, which is a generic
    representation that can be used in other works.
  \item We give an explicit counterexample (\cref{fig:counter}) showing that
    preserving the number of caterpillars and other fundamental properties is
    not sufficient to preserve the BJDM, while the opposite is true
    (\cref{sec:model:bjdm}).
  \item We include a discussion of Gram mates~\citep{Kirkland18,KimK22}, to
    explain why a model preserving the supports of itemsets of length two may
    not be very interesting.
  \item We add examples and figures to help the understanding of important
    concepts.
\end{itemize}

\paragraph{Outline} After discussing related work in \cref{sec:related}, we focus
the presentation on binary transactional datasets, with preliminaries
(\cref{sec:intro}) also covering statistical hypothesis testing. Then we
describe the null model for transactional datasets (\cref{sec:model}), and then
the two algorithms to sample datasets from this null model
(\cref{sec:sampling}). Covering first only transactional datasets allows us to
discuss sequence datasets, the null model, and the specific algorithm
for this case in \cref{sec:seq}. Our experimental evaluation and its results are
presented in \cref{sec:exper}.

% !TEX root = ../main-kais.tex
\section{Related Work}\label{sec:related}

The need for statistically validating results from transactional datasets was
understood immediately after the first efficient algorithm for obtaining these
results was introduced \citep{BrinMS97,MegiddoS98}. A long line of works also
studies how to filter out uninteresting patterns, or directly mine
\emph{interesting} ones \citep{VreekenT14}. This direction is orthogonal to the
study of the \emph{statistical validity} of the results, which is our focus.

Many works concentrate on the case of \emph{labeled} transactional datasets \citep{TeradaKS15,TeradaOHTS13,TeradaTS13,PellegrinaRV19a,Hamalainen16,PellegrinaV20,PapaxanthosLLBB16,MinatoUTTS14,LlinaresLopezSPB15,KomiyamaIANM17,WuHGLZY16,DuivesteijnK11}, where each transaction comes with a binary label.
Most of these works use resampling-based approaches, as we do, but the very different nature of the studied tasks and data, as we study the \emph{unlabeled} case, make them inapplicable to our problems.
We refer to the tutorial by \citet{PellegrinaRV19b} for a detailed survey of the work done in \emph{unlabeled datasets}, including resampling methods.
The different nature of the data makes these approaches inapplicable to our case.

Most work has been on mining \emph{significant frequent itemsets}, tiles, or
association rules~\citep{Hamalainen10,Webb07,LijffijtPP14}. The survey by
\citet{HamalainenW19} presents many of these works in depth. The most relevant
to ours are those by \citet{GionisMMT07} and \citet{Hanhijarvi11}, who present
resampling methods for drawing transactional datasets from a null model which
preserves the number of transactions, the transaction lengths, and the item
supports as in an observed dataset. These approaches, like ours, can be used for
testing any result from transactional datasets, not just for significant pattern
mining. We present a null model that is more descriptive than the ones studied
in these works, because it preserves additional properties of the observed
dataset. \Citet{DeBie10} proposes a method to \emph{uniformly} sample datasets
from a null model that preserves, \emph{in expectation}, the same constraints.
While it can partially be extended to preserve the constraints exactly, it
cannot be used to sample according to any user-specified distribution, which
we believe to be a fundamental ingredient of the null model, as it includes
already available knowledge of the data generating process \emph{in addition to}
the constraints.

Assessing results obtained from sequence datasets has also generated
interest~\citep{PinxterenC21,TononV19,JenkinsWGR22}. We refer the interested
reader for an in-depth discussion of related work in this area to~\citep[Sect.\
2]{JenkinsWGR22}. To the best of our knowledge, we are the first to look at
sequence datasets as bipartite \emph{multi-}graphs, and to propose a null model that explicitly preserves properties of such multi-graphs. Our null model for
sequence datasets preserves additional properties than the one introduced by
\citet{TononV19}, similarly to how our null model for transactional datasets
preserves additional properties than the one by \citet{GionisMMT07}, as indeed
the \citeauthor{TononV19}'s model is essentially an adaptation of the
\citeauthor{GionisMMT07}'s model to sequence datasets. \Citet{TononV19} and
\citet{JenkinsWGR22} present other null models for sequence datasets. Extending
these models to preserve the additional properties we consider is an interesting
direction for future work.

Beyond binary transactional and sequence datasets, resampling methods for
assessing data mining results have been proposed for graphs
\citep{HanhijarviGP09,SugiyamaLLKB15,silva2017network,GunnemannDJE12},
real-valued and mixed-valued matrices \citep{Ojala10}, and database tables
\citep{OjalaGGM10}. None of these works proposes a null model similar to the one
we introduce, nor presents similar sampling algorithms. Our approach can be a
starting point to develop more descriptive null models for these richer types of
data.

\algo, our algorithm for sampling from a null model of datasets, can also be
seen as sampling from the set of
% 20220906 MR: While to me the concept of "non-right-isomorphic" makes sense, I
% can't find it mentioned anywhere else in the literature, so let's just not
% say anything at this point, as there is no problem with that.
%non-right-isomorphic
bipartite graphs with a prescribed BJDM, according to a desired sampling
distribution. In this sense, our contributions belong to a long line of works
that studies how to generate (bipartite) graphs with prescribed properties and
according to a desired probability distribution~\citep{cimini2019statistical,bonifati2020graph,greenhill2022generating,akoglu2009rtg,aksoy2017measuring,saracco2015randomizing,karrer2011stochastic,van2021random,fischer2015sampling,ritchie2017generation,silva2017network,orsini2015quantifying,tillman20192k+}.
The surveys by \citet{cimini2019statistical}, \citet{bonifati2020graph},
and \citet{greenhill2022generating} give complete coverage of this field. These
approaches have been studied in the context of complex networks, while we use
\emph{bipartite} graphs to represent transactional datasets, and our main goal
is to statistically assess results obtained from such datasets, not to study the
properties of the graphs.

No previous work on sampling bipartite graphs deals with the question we study.
\Citet{saracco2015randomizing} presents a configuration model to sample bipartite
networks that, \emph{in expectation}, have the same degree sequences as a
prescribed one. \algo \emph{exactly} maintains the BJDM, which preserves the
exact degree sequences, and also other additional properties (see
\cref{sec:model}); thus our null model preserves more characteristics of the
observed dataset. \Citet{aksoy2017measuring} proposes a method to generate
bipartite networks that preserve also the clustering coefficient, which is not
related to the BJDM\@. \Citet{amanatidis2015graphic} gives necessary and
sufficient conditions for a matrix to be the BJDM of a bipartite graph. We
always start from such a matrix, so we do not have to address its
realizability. The concept of \emph{Restricted Swap Operation (RSO)} was
introduced by \citet{czabarka2015realizations}, but not for the purpose used in \algo. 
\Citet{boroojeni2017generating} presents randomized
algorithms to generate a bipartite graph from a BJDM, but there is no proof that
their approaches can generate all possible graphs with that BJDM nor there is an
analysis on the probability that such a graph is generated. Both aspects are
important in order to use the samples for statistical hypothesis testing (see
\cref{sec:prelims:hyptest}), and \algo achieves these goals.

We are interested in sampling graphs (but really, datasets) from a set of graphs that
preserve the same properties as some observed graph (i.e., dataset). This task
is different from the problem of generating a graph from a random family, such
as Erd\H{o}s-R\'{e}nyi graphs, stochastic block models, Kronecker graphs,
preferential attachment graphs, and others, or fitting the parameters of such a
family on the basis of one or more observed graphs.

% 20220405 MR: commenting out as these works are not entirely relevant, but may
% be useful for the future, either for this project or others.
%
%\cite{wang2020fast} introduces the \emph{Rectangle Loop} algorithm, which also
%improves the performance of the swap algorithms, by using a conditional sampling
%scheme: rather than selecting 2 rows and 2 columns, it \textbf{(i)} selects one
%row $R_1$ and one column $C_1$ with entry 1, \textbf{(ii)} selects uniformly at
%random a column $C_2$ among those with entry 0 in $R_1$, \textbf{(iii)} selects
%uniformly at random a row $R_2$ among those with entry 1 in $C_2$, and
%\textbf{(iv)} performs the swap if the entry of $(R_2, C_1)$ is 0.
%
%\cite{erdHos2022mixing} presents some cases in which the Markov chain is rapidly
%mixing.
%
%\cite{amanatidis2022rapid} proves that the switch Markov chain for generating
%random networks with a given JDM is rapidly mixing.
%
%\cite{arman2019fast} states that MCMC algorithms are almost uniform generators,
%and proposes an exactly uniform generator. The algorithm first generates a
%random multigraph, and then switches the double edges to get a simple graph with
%pre-defined degree sequence. The paper shows how the algorithm can be adapted to
%work with bipartite graphs (i.e., two degree sequences in input).

% !TEX root = ../main-kais.tex
\section{Preliminaries}\label{sec:prelims}

We now define the key concepts and notation used in this work.
\Cref{tab:reference} summarizes the most important notation. Preliminaries for
sequence datasets are deferred to \cref{sec:seq:prelims}.

\subsection{Transactional Datasets}\label{sec:prelims:trans}

Let $\items \doteq \{ a_1, \dotsc, a_{\card{\items}} \}$ be a finite alphabet of
\emph{items}. W.l.o.g., we can assume $\items = \{1, \dotsc, \card{\items}\}$.
Any $A \subseteq \items$ is an \emph{itemset}. A \emph{transactional
  dataset}\footnote{From here to the end of \cref{sec:sampling}, we only discuss
    \emph{transactional} datasets, so we drop the attribute and just refer to
  them as ``datasets''.} $\dataset$ is a finite bag of itemsets, which are known
also as \emph{transactions} when considered as the elements of a dataset. The
\emph{size} $\card{\dataset}$ of the dataset is the number of transactions it
contains. The \emph{length} $\card{t}$ of a transaction $t \in \dataset$ is the
number of items in it. \Cref{fig:transformation} (lower) shows a dataset of
shopping baskets with three baskets (transactions) of length $6$, $5$, and $4$,
respectively.

\begin{figure}[htb]
  \centering
  \includegraphics[width=.8\columnwidth]{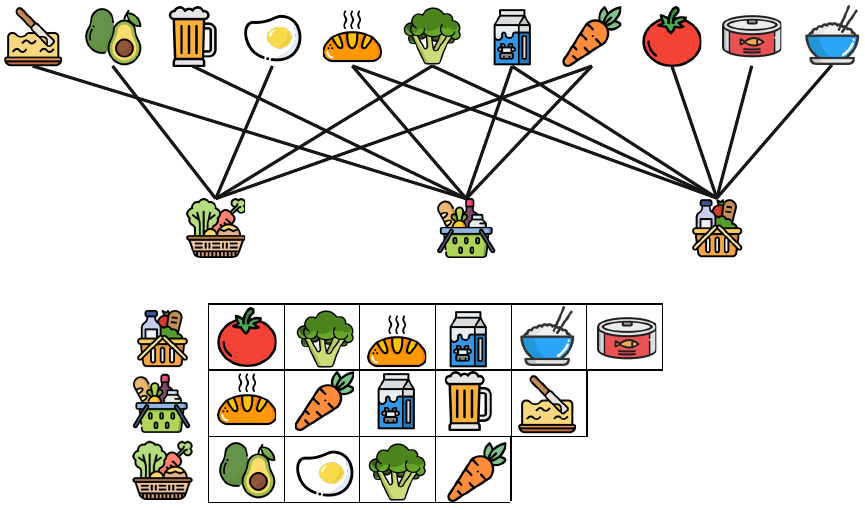}
  \caption{A dataset of shopping baskets (lower) and the respective bipartite graph (upper).}%
  \label{fig:transformation}
\end{figure}

For any itemset $A \subseteq \items$, the \emph{support $\supp{\dataset}{A}$ of $A$
in $\dataset$} is the number of transactions of $\dataset$ which contain $A$:
\[
  \supp{\dataset}{A} \doteq \card{\{t \in
  \dataset \suchthat A \subseteq t\}} \enspace.
\]
The support is a natural (albeit not without drawbacks) measure of
interestingness. A foundational knowledge discovery task requires to find, given
a \emph{minimum support threshold $\thresh \in [0, \card{\dataset}]$}, the
collection $\fis{\dataset}{\thresh}$ of \emph{Frequent Itemsets (FIs) in
$\dataset$ w.r.t.\ $\thresh$}: $\fis{\dataset}{\thresh} \doteq \{ A \subseteq
  \items \suchthat \supp{\dataset}{A} \ge \thresh\}$~\citep{AgrawalS94}.
Given $\thresh = 2$, for \dataset in \cref{fig:transformation} (lower), $\fis{\dataset}{\thresh}$ contains the itemsets $\{\text{ carrot }\}$, $\{\text{ broccoli }\}$, $\{\text{ bread }\}$, $\{\text{ milk }\}$, and $\{\text{ bread }, \text{ milk }\}$.   

\subsection{Null Models and Hypothesis Testing}\label{sec:prelims:hyptest}

The statistical hypothesis testing framework~\citep{LehmannR22} allows
to rigorously understand whether the results obtained from an \emph{observed
dataset $\odataset$} (e.g., the collection of frequent itemsets, or its size,
among many others) are actually interesting or are just due to randomness in the
(unknown, at least partially) data generation process. Informally, the
observed results are compared to the distribution of results that would be
obtained from a \emph{null model} (see below); if results as or more extreme
than the observed ones are sufficiently unlikely, the observed results are
deemed \emph{statistically significant}.

A \emph{null model} $\nullmodel = (\nullset, \nullprob)$ is a pair where
$\nullset$ is a set of datasets, and $\nullprob$ is a (user-specified)
probability distribution over $\nullset$. The datasets in $\nullset$ are all and
only those that share some descriptive characteristics with an \emph{observed
dataset} $\odataset$, which also belongs to $\nullset$.\footnote{Thus,
  $\nullmodel$ depends on $\odataset$, but we hide it in the
notation to keep it light.} Null models in previous
works~\citep{GionisMMT07,DeBie10} preserve the following two \emph{fundamental
properties}:
\begin{squishlist}
  \item the distribution of the transaction lengths, i.e., for any possible
    transaction length $\ell \in [1, \card{\items}]$, $\dataset \in \nullset$
    contains the same number of transactions of length $\ell$ as
    $\odataset$;\footnote{This property implies that the size of the dataset is
      preserved as well, i.e., $\card{\dataset} = \card{\odataset}$ for
    any $\dataset \in \nullset$.} and
  \item the support of the items, i.e., for any $i \in \items$ and $\dataset \in
  \nullset$, $\supp{\dataset}{i} = \supp{\odataset}{i}$.
\end{squishlist}

The intuition behind wanting to preserve some properties of $\odataset$ is that
these properties, together with $\nullprob$, capture what is known or assumed
about the process that generated the data, and the goal is to understand whether
the results obtained from $\odataset$ are, informally, ``typical'' for datasets
with these characteristics. Formally, given $\odataset$ and a null model
$\nullmodel = (\nullset, \nullprob)$, one formulates a \emph{null hypothesis}
$H_0$ involving $\nullmodel$ and a result $R_\odataset$ obtained from
$\odataset$. For example, let $R_\odataset = \card{\fis{\odataset}{\thresh}}$,
and
\begin{equation}\label{eq:nullhyp}
  H_0 \doteq \text{``} \mathop{\mathbb{E}}_{\dataset \sim
  \nullprob}[\card{\fis{\dataset}{\thresh}}] = R_\odataset
  \text{''}.\footnote{This hypothesis is just one simple example of many possible
    different hypotheses that could be tested.}
\end{equation}
The hypothesis is then tested by computing the \emph{$p$-value
$\pval{\odataset,H_0}$ of $H_0$}, defined as the probability that, in a dataset
$\dataset'$ sampled from $\nullset$ according to $\nullprob$, the results
$R_{\dataset'}$ (e.g., $\card{\fis{\dataset'}{\thresh}}$) are \emph{more
extreme} (e.g., larger) than $R_\odataset$, i.e.,
\begin{equation}\label{eq:pval}
  \pval{\odataset, H_0} \doteq \Pr_{\dataset' \sim \nullprob}(
  R_{\dataset'}\ \text{more extreme than}\ R_\odataset) \enspace.
\end{equation}
The notion of ``more extreme'' depends on the nature of
$R_\odataset$. When $\pval{\odataset, H_0}$ is \emph{not larger}
than a user-specified \emph{critical value} $\critval$, then the observed
results $R_\odataset$ are deemed to be \emph{statistically significant}, i.e.,
unlikely to be due to random chance (in other words, the null hypothesis
$H_0$ is rejected as not sufficiently supported by
the available data).
%The probability of wrongly marking $R_\odataset$ as
%significant when it is not, i.e., of making a \emph{false discovery}, is at most
%$\critval$.

Computing the $p$-value $\pval{\odataset,H_0}$ from~\eqref{eq:pval} exactly is
often essentially impossible. \rev{E.g., for statistically-sound knowledge
discovery tasks on
sequence datasets, the exact distribution of test statistics is known only in
very restricted cases~\citep{PinxterenC21}, while all other approaches use
resampling~\citep{TononV19,JenkinsWGR22}.} Thus, an empirical estimate
$\epval{\odataset,H_0}$ is obtained as follows and used in place of
$\pval{\odataset}$ when testing the hypothesis~\citep{WestfallY93}. Let
$\dataset_1, \dotsc, \dataset_{\epvaldatnum}$ be $\epvaldatnum$ datasets
\emph{independently sampled} from $\nullset$ according to $\nullprob$, then
\begin{equation}\label{eq:epval}
  \epval{\odataset,H_0} \doteq \frac{1 + \card{\{ \dataset_i \suchthat
  R_{\dataset_i} \ \text{is more extreme than} \ R_\odataset\}}}{1 +
  \epvaldatnum} \enspace.
\end{equation}
\rev{Such \emph{resampling methods}, of which the well-known bootstrap is also
  an instance, are often to be preferred to the explicit derivation of the statistics
  for multiple reasons:
\begin{itemize}
  \item they are, in some sense, independent from the test being conducted, as
    the test statistic distribution (or better, the $p$-value) is estimated from
    the sampled datasets, as in~\eqref{eq:epval};
  \item they leverage data-dependent distributional characteristics, which tend
    to result in higher statistical power; and
  \item they scale to high-dimensional settings.
\end{itemize}
}

In many knowledge discovery tasks, \rev{and in many applications such as during
clinical trials for drug approvals~\citep{he2021resampling}, or in genomics
studies~\citep{goeman2014multiple}, one is} interested in
testing \emph{multiple hypotheses}. For example, \emph{significant itemset
mining} (see \cref{sec:related}) requires testing one hypothesis
\[
  H_0^{A} \doteq \text{``} \mathop{\mathbb{E}}_{\dataset \sim
  \nullprob}[\supp{\dataset}{A}] = \supp{\odataset}{A} \text{''}
\]
for each itemset $A$.\footnote{This hypothesis is one of many kinds of
hypotheses that can be tested by using the support as the test statistic.}
When testing multiple hypotheses, i.e., all hypotheses in a class $\mathcal{H}$,
one is interested in ensuring that the \emph{Family-Wise Error Rate}, i.e., the
probability of making \emph{any} false discovery, is at most a user-specified
acceptable threshold $\delta$.  Classic methods for controlling the FWER, such
as the Bonferroni correction~\citep{Bonferroni37}, lack the \emph{statistical
power} to be useful in knowledge discovery settings, i.e., the probability that
a \emph{true} significant discovery is marked as such is very low, due to the
large number $\card{\mathcal{H}}$ of hypotheses. \emph{Resampling-based
methods}~\citep{WestfallY93} perform better for these tasks because they
empirically estimate the distribution of the minimum $p$-value of the hypotheses
in $\mathcal{H}$ by \emph{sampling datasets from $\nullset$}, and use this
information to compute an \emph{adjusted critical value} $\hat{\critval}$.

For example, the
Westfall-Young approach works as follows. Let $\dataset'_1, \dotsc,
\dataset'_{\critvaldatnum}$ be $\critvaldatnum$ datasets \emph{sampled
independently} from $\nullset$ according to $\nullprob$, and let
\begin{equation}\label{eq:minpval}
  \check{\pvalsym}_i \doteq \min_{h \in \mathcal{H}} \pval{\dataset'_i,h}
\end{equation}
be the minimum $p$-value, on $\dataset'$, of any hypothesis $h \in \mathcal{H}$.
The \emph{adjusted critical value} $\hat{\critval}$ to which the $p$-values of
the hypotheses are compared is
\[
  \hat{\critval} \doteq \max \left\{ \critval \suchthat \frac{\card{\{ \dataset'_i
  \suchthat \check{\pvalsym}_i \le \critval \}}}{\critvaldatnum} \le \delta
  \right\} \enspace.
\]
That is, $\hat{\critval}$ is the largest $\alpha \in [0,1]$ such that the
fraction of the $\critvaldatnum$ datasets $\dataset'_i$ whose minimum $p$-value
$\check{\pvalsym}$ is at most $\alpha$ is not greater than $\delta$.
Estimates computed as in~\eqref{eq:epval} are used in place of the exact
$p$-values in the r.h.s.\ of~\eqref{eq:minpval}. Comparing the (estimated)
$p$-value of each hypothesis in $\mathcal{H}$ to $\hat{\critval}$ guarantees
that the FWER is at most $\delta$.
Thus,  efficiently drawing random datasets from $\nullset$ according to
$\nullprob$ plays a key role in statistical hypothesis testing.
Our goal in this work is to develop efficient methods to
sample a dataset from $\nullset$ according to $\nullprob$ where $\nullset$ is
the set of datasets that, in addition to preserving the aforementioned three
properties from $\odataset$, also preserve an additional important
characteristic property that we describe in \cref{sec:model:bjdm}.

\subsection{Markov Chain Monte Carlo Methods}\label{sec:prelims:mcmc}

\algo follows the \emph{Markov chain Monte Carlo (MCMC) method}, and uses the
\emph{Metropolis-Hastings (MH) algorithm}~\citep[Ch.\ 7 and 10]{MitzenmacherU05}.
Next is an introduction tailored to our work.

Let $G = (V,E)$ be a directed, weighted, strongly connected, aperiodic
graph, potentially with self-loops. The vertices $V$ are known as \emph{states}
in this context. W.l.o.g., we can assume $V = \{1, 2, \dotsc, \card{V}\}$. For
any state $v$, let $\neigh{v}$ be the set of (out-)neighbors of $v$, % chktex 36
i.e., the set of states $u$ such that $(v,u) \in E$ (it holds $v \in \neigh{v}$
if there is a self-loop). For any neighbor $u \in \neigh{v}$, the weight
$\mathsf{w}(v,u)$ of the edge $(v,u)$ is strictly positive, and it holds
$\sum_{u \in \neigh{v}} \mathsf{w}(v,u) = 1$. In other words, there is a
probability distribution $\neighdistr{v}$ over $\neigh{v}$ such that
$\neighdistr{v}(u) = \mathsf{w}(v,u)$. Let $W$ be the $\card{V} \times \card{V}$
matrix such that $W[v,u] = \mathsf{w}(v,u)$ if $(v,u) \in E$, and $0$
otherwise.\footnote{The strong-connectivity and aperiodicity of $G$, together
  with having $W[u,v] \ge 0$ iff $(u,v) \in E$, ensure that the Markov chain on
$V$ whose matrix of transition probabilities is $W$ has a unique stationary
distribution~\citep[Thm.\ 7.7]{MitzenmacherU05}.}

Let $G = (V,E)$ be a directed, weighted, strongly connected, aperiodic
graph, potentially with self-loops.
The \emph{Metropolis-Hastings (MH) algorithm} gives a way to sample an
element of $V$ according to a user-specified probability distribution
$\statdistr$. Let $v \in V$ be any state, chosen arbitrarily. We first draw a
neighbor $u \in \neigh{v}$ of $v$ according to the distribution
$\neighdistr{v}$. Then we ``move'' from $v$ to $u$ with probability
\begin{equation}\label{eq:mh}
  \min \left\{ 1, \frac{\statdistr(u) \neighdistr{u}(v)}{\statdistr(v)
  \neighdistr{v}(u)} \right\},
\end{equation}
otherwise, we stay in $v$. After a sufficiently large number of
steps $t$, the state $v_t$ is (either approximately or exactly) distributed
according to $\statdistr$ and can be taken as a sample.

In summary, to be able to use MH, one must define the graph $G = (V,E)$, the
neighbor-sampling probability $\neighdistr{v}$ for every $v \in V$, a procedure
to sample a neighbor of $v$ according to $\neighdistr{v}$, and the desired
sampling distribution $\statdistr$ over $V$.

\begin{table}[htbp]
\caption{Table of symbols.}\label{tab:reference}
\small
\begin{center}
\begin{tabular}{cr p{0.8\textwidth}}
\toprule
& \textbf{Symbol} & \textbf{Description} \\
  \midrule
  \parbox[t]{3mm}{\multirow{6}{*}{\rotatebox[origin=c]{90}{Dataset}}}
  & \items          & Set of items \\
  & S               & Ordered list of itemsets \\
  & \dataset        & Dataset (bag of itemsets in the transactional case) \\
  &                 & \textcolor{white}{Dataset} (bag of sequences in the sequence case)\\
  & $M_{\dataset}$  & Binary matrix associated to the transactional dataset \dataset \\
  & $\dattomat{\dataset}$  & Set of binary matrices associated to the transactional dataset \dataset \\
  & $\mattodat{M}$  & Transactional dataset whose binary matrix is $M$ \\
  & \odataset       & Observed dataset \\
  \midrule
  \parbox[t]{3mm}{\multirow{8}{*}{\rotatebox[origin=c]{90}{Bipartite (multi-)Graph}}}
  & $G$             & Bipartite (multi-)graph \\
  & $L \cup R$      & Set of left ($L$) and right ($R$) vertices of $G$ \\
  & $E$             & Set of (multi-)edges of $G$ \\
  & $\graphs$       & Set of bipartite multi-graphs \\
  & $\neigh{v}$     & Set of nodes connected to $v$ in $G$ \\
  & $\bjdm{G}$      & Bipartite Joint Degree Matrix (BJDM) of $G$ \\
  & $\caternum{G}$  & Number of simple paths of length 3 (caterpillars) in $G$\\
  & $\matrices$     & Set of binary matrices of graphs with the same BJDM \\[9pt]
  \midrule
  \parbox[t]{3mm}{\multirow{3}{*}{\rotatebox[origin=c]{90}{Null Model}}}
  & $\nullmodel$    & Null model \\
  & $\nullset$      & Set of datasets sharing some properties of \odataset \\
  & $\nullprob$     & Probability distribution over $\nullset$ \\
  & $\pval{\odataset,H_0}$ & p-value of a null hypothesis $H_0$ involving $\nullmodel$ and $\odataset$\\[3pt]
  % & $\neighdistr{v}$ & Neighbor-sampling probability of $v$ \\
  % & $\supp{\dataset}{A}$    & Bag of itemsets in \dataset containing $A$ \\
  % & $\fis{\dataset}{\thresh}$   & Set of itemsets contained in at least $\thresh$ itemsets in \dataset \\
\bottomrule
\end{tabular}
\end{center}
\end{table}%

% !TEX root = ../main-kais.tex
\section{A More Descriptive Null Model}\label{sec:model}

As discussed in \cref{sec:prelims:hyptest}, a good null model should preserve
important characteristics of the observed dataset $\odataset$, and we mentioned
the two fundamental properties that \rev{were the focus of previous
work}~\citep{GionisMMT07,DeBie10}. We now introduce a null model that preserves
an additional property, and then show efficient methods to sample datasets from
it.

\subsection{Datasets, Matrices, and Bipartite Graphs}\label{sec:model:matrices}

Before defining the additional characteristic quantity of $\odataset$ that we want to
preserve, we must describe
``alternative'' representations of a dataset $\dataset$. The most natural one is a
\emph{binary matrix} $M_\dataset$ with $\card{\dataset}$ rows and
$\card{\items}$ columns, where the $(i,j)$ entry is 1 iff transaction $i \in
\dataset$ contains item $j \in \items$, and where the order of the
transactions (i.e., of the rows) is arbitrary~\citep[Sect.\
4.1]{GionisMMT07}. Since the order is arbitrary, there are \emph{multiple
matrices} that correspond to the same dataset, differing by the ordering of the
rows. This fact is of key importance for the correctness of methods that
sample datasets (and not matrices) from a null model\rev{, i.e., that are 
\emph{row-order agnostic}~\citep{AbuissaLR23}}.
%: \algo uses a set of matrices, which
%contains, for every
%dataset $\dataset \in \nullset$, a (potentially improper, but non-empty)
%\emph{subset} of the matrices corresponding to $\dataset$ (see
%\cref{sec:sampling}).

\emph{Any} matrix $M_\dataset$ corresponding to $\dataset$
can be seen as the \emph{biadjacency matrix} of an
\emph{undirected bipartite graph $G_\dataset=(\dataset \cup \items, E)$}
corresponding to $\dataset$, where there is an edge\footnote{We always denote an
  edge of a bipartite graph corresponding to a dataset as $(a,b)$ with $a \in
  \dataset$ and $b \in \items$, i.e., as an element of $\dataset \times\items$,
  to make it clear which endpoint is a transaction and which is an item.} $(t,i)
\in E$ iff transaction $t$ contains the item $i$. 
\Cref{fig:transformation} (upper) depicts the bipartite graph corresponding to the dataset in the lower part of the figure. The left nodes (bottom nodes) model the three shopping baskets, while the right nodes (top nodes) represent the product bought.
Different matrices $M'$ and
$M''$ corresponding to $\dataset$ are the biadjacency matrices of bipartite
graphs that are \emph{structurally equivalent}, up to the labeling of the
transactions in $\dataset$.
In other words, all graphs corresponding to a dataset share the \emph{same
structural properties}, no matter their biadjacency matrices. To define our new
null model we use the graph $G_\odataset$.

\subsection{Preserving the Bipartite Joint Degree Matrix}\label{sec:model:bjdm}

One of our goals is to define a null model $\nullmodel = (\nullset, \nullprob)$
such that the datasets in $\nullset$ preserve not only the two fundamental
properties, but also an additional descriptive property of $\odataset$:
%Before formally defining $\nullset$, we must introduce the key concept of
the \emph{Bipartite Joint Degree Matrix (BJDM) $\bjdm{G_\odataset}$} of its
bipartite graph representation $G_\odataset$.

\begin{definition}[BJDM]\label{def:bjdm}
  Let $G = (L \cup R, E)$ be a bipartite graph, $k_L$ and $k_R$ be the largest degree 
  of a node in $L$ and $R$, respectively. 
  The \emph{Bipartite Joint Degree Matrix (BJDM) $\bjdm{G}$ of $G$}, is a $k_L
  \times k_R$ matrix whose $(i,j)$-th entry $\bjdm{G}[i,j]$ is the number of
  edges connecting a node $u \in L$ with degree $\degree{u} = i$ to a node $v
  \in R$ with degree $\degree{v} = j$, i.e.,
  \[
    \bjdm{G}[i,j] \doteq \card{\{(u,v) \in E \suchthat \degree{u} = i \wedge
    \degree{v} = j\}} \enspace.
  \]
\end{definition}

The BJDM of the graph in \cref{fig:transformation} (upper) is the following:

 \[
    \begin{pmatrix}
      0 & 0\\
      0 & 0\\
      0 & 0\\
      2 & 2\\
      2 & 3\\
      3 & 3\\
    \end{pmatrix}
\]

We define $\nullset$ as the set of all datasets $\dataset$ whose
transactions are built on $\items$ and whose corresponding bipartite
graph $G_\dataset$ has the same BJDM $\bjdm{G_{\dataset}}$.
We justify this choice by first showing that preserving the BJDM also preserves
the two fundamental properties, and then that it preserves additional ones.

\begin{fact}\label{fact:degree}
  For every $1 \le j \le k_R$, it holds
  \begin{equation}\label{eq:rightdegree}
    \card{\{ v \in R \suchthat \degree{v} = j\}} = \frac{1}{j} \sum_{i=1}^{k_L}
    \bjdm{G}[i,j],
  \end{equation}
  i.e., the BJDM $\bjdm{G}$ determines, for every $1 \le j \le k_R$, the number
  of vertices $v \in R$ of degree $\degree{v} = j$.

  Similarly, for every $1 \le i \le k_L$, it holds
  \begin{equation}\label{eq:leftdegree}
    \card{\{ u \in L \suchthat \degree{u} = i\}} = \frac{1}{i} \sum_{j=1}^{k_R}
    \bjdm{G}[i,j],
  \end{equation}
  i.e., the BJDM $\bjdm{G}$ determines, for every $1 \le i \le k_L$, the number
  of vertices $u \in L$ with degree $\degree{u} = i$.
\end{fact}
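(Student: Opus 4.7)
The proof is a straightforward double-counting argument, so the plan is just to make the counting explicit for each side of the bipartition and observe that the two directions are symmetric.

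First I would fix $j$ with $1 \le j \le k_R$ and count in two ways the number of edges incident to right-vertices of degree exactly $j$. On one hand, each vertex $v \in R$ with $\degree{v}=j$ contributes exactly $j$ edges, so the total is $j \cdot \card{\{v \in R \suchthat \degree{v}=j\}}$. On the other hand, every edge incident to such a vertex goes to some $u \in L$, and if $\degree{u}=i$ then by definition of the BJDM this edge is counted in $\bjdm{G}[i,j]$. Summing over all possible left-degrees $i$ from $1$ to $k_L$ partitions the edges incident to right-vertices of degree $j$ (since every left-endpoint has some degree in $\{1,\dots,k_L\}$), giving $\sum_{i=1}^{k_L}\bjdm{G}[i,j]$. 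Equating the two expressions and dividing by $j$ yields~\eqref{eq:rightdegree}.

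The argument for~\eqref{eq:leftdegree} is obtained by swapping the roles of $L$ and $R$: fix $i$, count edges incident to left-vertices of degree $i$ as $i \cdot \card{\{u \in L \suchthat \degree{u}=i\}}$, and also as $\sum_{j=1}^{k_R}\bjdm{G}[i,j]$, then divide by $i$.

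There is no real obstacle here: the only thing to be mildly careful about is the observation that the sums do range over all possible degrees on the opposite side, so that no edge incident to a vertex of the fixed degree is missed; this follows because $k_L$ and $k_R$ are the maxima on their respective sides, so every left (resp.\ right) endpoint has degree in $\{1,\dots,k_L\}$ (resp.\ $\{1,\dots,k_R\}$). Division by $j$ (resp.\ $i$) is legitimate because we are considering strictly positive degrees.
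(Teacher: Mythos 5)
Your proof is correct: the double-counting of edges incident to vertices of a fixed degree, partitioned by the degree of the opposite endpoint, is exactly the argument needed, and your remarks about the sums covering all opposite-side degrees and the legitimacy of dividing by $j$ (resp.\ $i$) are the right points of care. The paper states this as a Fact without proof, and your argument is precisely the standard one it leaves implicit.
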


\begin{corollary}\label{corol:preserved}
  For any dataset $\dataset$, the BJDM $\bjdm{G_\dataset}$ determines, for every
  $1 \le j \le \card{\items}$, the number of transactions in $\dataset$ with length $j$.
  Also, it determines, for every $1 \le i \le \card{\dataset}$, the number of
  items with support $i$ in $\dataset$.
\end{corollary}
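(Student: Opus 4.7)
The plan is to recognize that \cref{corol:preserved} is essentially a translation of \cref{fact:degree} from graph language into dataset language via the bipartite representation $G_\dataset$ introduced in \cref{sec:model:matrices}. Since \cref{fact:degree} is already in hand, no new combinatorics is required, only a careful identification of sides and degrees.

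First I would fix the convention $L = \dataset$ and $R = \items$ for $G_\dataset$ (consistent with rows/columns of the biadjacency matrix $M_\dataset$), and record two elementary correspondences that follow straight from the definition $(t,i) \in E \iff i \in t$: the degree $\degree{t}$ of any transaction vertex $t \in L$ equals the transaction length $\card{t}$, and the degree $\degree{i}$ of any item vertex $i \in R$ equals the support $\supp{\dataset}{i}$. Under this dictionary, ``number of transactions of length $j$'' becomes ``number of left vertices of degree $j$'', and ``number of items with support $i$'' becomes ``number of right vertices of degree $i$''.

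Next I would apply \cref{fact:degree} twice. Equation~\eqref{eq:leftdegree}, read on $G_\dataset$, expresses $\card{\{t \in \dataset \suchthat \card{t} = j\}}$ as $\tfrac{1}{j}\sum_{i} \bjdm{G_\dataset}[j,i]$ for every $1 \le j \le k_L$; for $k_L < j \le \card{\items}$ the count is trivially $0$ by definition of $k_L$. Symmetrically, equation~\eqref{eq:rightdegree} yields the number of items of support $i$ for every $1 \le i \le k_R$, and $0$ for $k_R < i \le \card{\dataset}$. The maximum-degree bounds $k_L$ and $k_R$ are themselves readable off the BJDM as the largest row (resp.\ column) index containing a nonzero entry, so using them does not smuggle in any extra information.

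There is no real obstacle here — the only thing to be careful about is the out-of-range indices $j > k_L$ and $i > k_R$ that appear in the corollary's ranges $[1, \card{\items}]$ and $[1, \card{\dataset}]$ but not in the ranges of \cref{fact:degree}; these are handled by the trivial observation above. Consequently, the proof amounts to two lines invoking \cref{fact:degree} after the degree/length and degree/support identifications are stated.
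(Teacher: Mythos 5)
Your proposal is correct and matches the paper's intended argument: the paper states \cref{corol:preserved} as an immediate consequence of \cref{fact:degree} (it gives no separate proof), relying on exactly the identifications you make, namely that transaction lengths are left-vertex degrees and item supports are right-vertex degrees in $G_\dataset$. Your handling of the out-of-range indices $j > k_L$ and $i > k_R$ is a small but legitimate point of care that the paper leaves implicit.
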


\Cref{corol:preserved} states that preserving the BJDM also preserves the two
fundamental properties. We now show an additional property that is preserved,
among others.

Let $\caternum{G_\odataset}$ be the number of \emph{simple paths of length
three} in $G_\odataset$, which, since $G_\odataset$ is bipartite, is also known
as the number of \emph{caterpillars} of
$G_\odataset$~\citep{aksoy2017measuring}. \Cref{corol:caternum} shows that
preserving the BJDM of $G_\odataset$ preserves the number of caterpillars. The
numbers of simple paths of length one and two are already preserved by
preserving the two fundamental properties, thus preserving also the number of
simple paths of length three is a natural step. Our desired result is a
corollary of \cref{lem:caternum}, which shows that $\caternum{G}$ can be
expressed through the BJDM.%

\begin{lemma}\label{lem:caternum}
  It holds
  \[
    \caternum{G} = \sum_{i=2}^{k_L}\sum_{j=2}^{k_R} \bjdm{G}[i,j](i-1)(j-1)
    \enspace.
  \]
\end{lemma}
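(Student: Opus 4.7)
The plan is to count caterpillars by summing over their \emph{middle edges}. A simple path of length three in the bipartite graph $G = (L \cup R, E)$ has four vertices that alternate between $L$ and $R$, so it has the form $v' - u - v - u'$ with $u, u' \in L$, $v, v' \in R$, $u \neq u'$, and $v \neq v'$. The middle edge of such a path is the edge $(u, v) \in E$, and this edge is invariant under reversing the traversal of the path. Hence each caterpillar is identified with a unique edge of $G$ (its middle edge).

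First I would fix an arbitrary edge $e = (u,v) \in E$ and count the caterpillars whose middle edge is $e$. To build such a caterpillar, one chooses $v' \in \neigh{u} \setminus \{v\}$ and $u' \in \neigh{v} \setminus \{u\}$; there are exactly $\degree{u} - 1$ choices for $v'$ and $\degree{v} - 1$ choices for $u'$, giving $(\degree{u}-1)(\degree{v}-1)$ caterpillars with middle edge $e$. Simplicity of the resulting path is automatic: $v' \in R$ and $u' \in L$ so they cannot coincide, and by construction $v' \neq v$ and $u' \neq u$.

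Next I would sum over all edges. Since every caterpillar has exactly one middle edge, $\caternum{G} = \sum_{(u,v) \in E} (\degree{u} - 1)(\degree{v} - 1)$. Finally I would regroup the edges in this sum according to the degrees of their endpoints: for each pair $(i,j)$ with $1 \le i \le k_L$ and $1 \le j \le k_R$, there are exactly $\bjdm{G}[i,j]$ edges whose $L$-endpoint has degree $i$ and whose $R$-endpoint has degree $j$ (by \cref{def:bjdm}), each contributing $(i-1)(j-1)$. Terms with $i=1$ or $j=1$ vanish, so the sum can be restricted to $i \ge 2$ and $j \ge 2$, yielding exactly the claimed identity.

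There is no real obstacle here; the only point requiring care is to verify that each caterpillar is counted \emph{once}, not twice. This is where I would be most explicit: although each unordered path can be traversed in two directions, both traversals share the same middle edge $(u,v)$, so each caterpillar contributes exactly one unit to the sum $\sum_{(u,v) \in E} (\degree{u}-1)(\degree{v}-1)$ rather than two. The remaining step (rewriting the edge sum via the BJDM) is a straightforward reindexing.
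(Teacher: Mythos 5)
Your proof is correct and follows essentially the same route as the paper: count caterpillars by their unique middle edge, note that each edge $(u,v)$ is the middle edge of $(\degree{u}-1)(\degree{v}-1)$ caterpillars, and regroup the resulting edge sum by endpoint degrees via the BJDM. Your added remarks on the reversal-invariance of the middle edge and the vanishing of the $i=1$ or $j=1$ terms are just slightly more explicit versions of steps the paper leaves implicit.
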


\begin{proof}
  Each edge $(u,v) \in E$ is the middle edge of $(\degree{u} - 1) (\degree{v} -
  1)$ caterpillars, so
  \begin{equation}\label{eq:caternum}
    \caternum{G} =
    % 20220323 MR: Commenting out for now because it doesn't seem needed.
    %\sum_{\substack{u,v \in L \\ u \neq v}} \left(\degree{u} + \degree{v} - 2
    %\right) \cdot \card{\neigh{u, G} \cap \neigh{v, G}}\\
    %&=
    \sum_{(u,v) \in E} (\degree{u} - 1) (\degree{v} - 1)  \enspace.
  \end{equation}
  From here, we can conclude that
  \[
    \sum_{(u,v) \in E} (\degree{u} - 1) (\degree{v} -1) =
    \sum_{i=2}^{k_L}\sum_{j=2}^{k_R} \bjdm{G}[i,j](i-1)(j-1)
  \]
  because each edge $(u,v) \in E$ that connects a node $u \in L$ with degree
  $\degree{u} = i$ to a node $v \in R$ with degree $\degree{v} = j$ contributes
  $(i - 1) (j - 1)$ caterpillars to the summation in~\cref{eq:caternum}, and there are
  $\bjdm{G}[i,j]$ such edges.
\end{proof}

\begin{corollary}\label{corol:caternum}
  For any $\dataset$, the BJDM $\bjdm{G_\dataset}$ determines
  $\caternum{G_\dataset}$.
\end{corollary}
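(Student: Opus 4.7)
The plan is to derive the corollary as an immediate consequence of \cref{lem:caternum} applied to the specific graph $G_\dataset$. The lemma already does the heavy lifting by expressing $\caternum{G}$ as a closed-form expression whose only inputs are the entries of the BJDM $\bjdm{G}$ and the dimensions $k_L, k_R$ (which themselves are just the indices of the last nonzero row/column of $\bjdm{G}$, hence are also determined by $\bjdm{G}$).

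Concretely, I would instantiate the lemma with $G := G_\dataset$ to obtain
\[
  \caternum{G_\dataset} = \sum_{i=2}^{k_L}\sum_{j=2}^{k_R} \bjdm{G_\dataset}[i,j](i-1)(j-1),
\]
and then observe that every quantity appearing on the right-hand side is a function of $\bjdm{G_\dataset}$ alone: the summation indices range over index sets determined by the shape of the BJDM, and the summands multiply entries of the BJDM by constants depending only on the indices. Hence two datasets $\dataset$ and $\dataset'$ sharing the same BJDM must yield the same value of \caternum{\cdot}, which is precisely the claim that $\bjdm{G_\dataset}$ \emph{determines} $\caternum{G_\dataset}$.

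There is essentially no obstacle here: the corollary is a one-line consequence of \cref{lem:caternum}, once one notes that the formula in the lemma is expressed purely in terms of the BJDM. The only minor subtlety worth flagging is that $k_L$ and $k_R$ themselves are determined by $\bjdm{G_\dataset}$ (either as explicit dimensions, or equivalently via \cref{fact:degree}, which shows that the degree sequences on both sides are recoverable from the BJDM), so no extra information beyond the BJDM is needed to evaluate the formula.
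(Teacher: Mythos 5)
Your proof is correct and follows exactly the paper's route: the corollary is presented there as an immediate consequence of \cref{lem:caternum}, whose formula expresses $\caternum{G_\dataset}$ purely in terms of the entries (and dimensions) of $\bjdm{G_\dataset}$. Your added remark that $k_L$ and $k_R$ are themselves recoverable from the BJDM is a sensible clarification but does not change the argument.
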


On the other hand, preserving the two fundamental properties and the number of
caterpillars is not sufficient to preserve the BJDM\@: as we now show, it is
easy to construct datasets that have the same transaction lengths, same item
supports, and same number of caterpillars as an observed dataset $\odataset$,
but whose BJDM is different than $\bjdm{G_\odataset}$. %
We show an example in \cref{fig:counter}. \rev{Both} bipartite graph\rev{s} in
\cref{fig:counter} \rev{have} three connected components \rev{each}, with a
total of 27 left-hand side nodes (light-blue, striped nodes) and 8 right-hand
side nodes (yellow, dotted nodes). It is easy to see that the two graphs have
the same degree distributions, and the same number of caterpillars (48). In the
upper graph, the leftmost component contains 36 caterpillars, while each of the
other two components contains 6 caterpillars, for a total of 48 caterpillars.
Similarly, in the lower graph, the leftmost component contains 36 caterpillars,
and the other two 6 caterpillars each. The two graphs have, nevertheless,
different BJDMs: in the upper graph there are edges connecting nodes with degree
4 to nodes with degree 5 (top left), but the lower graph has no such edge.

\begin{figure}[htb]
  \centering
  \includegraphics[width=\columnwidth]{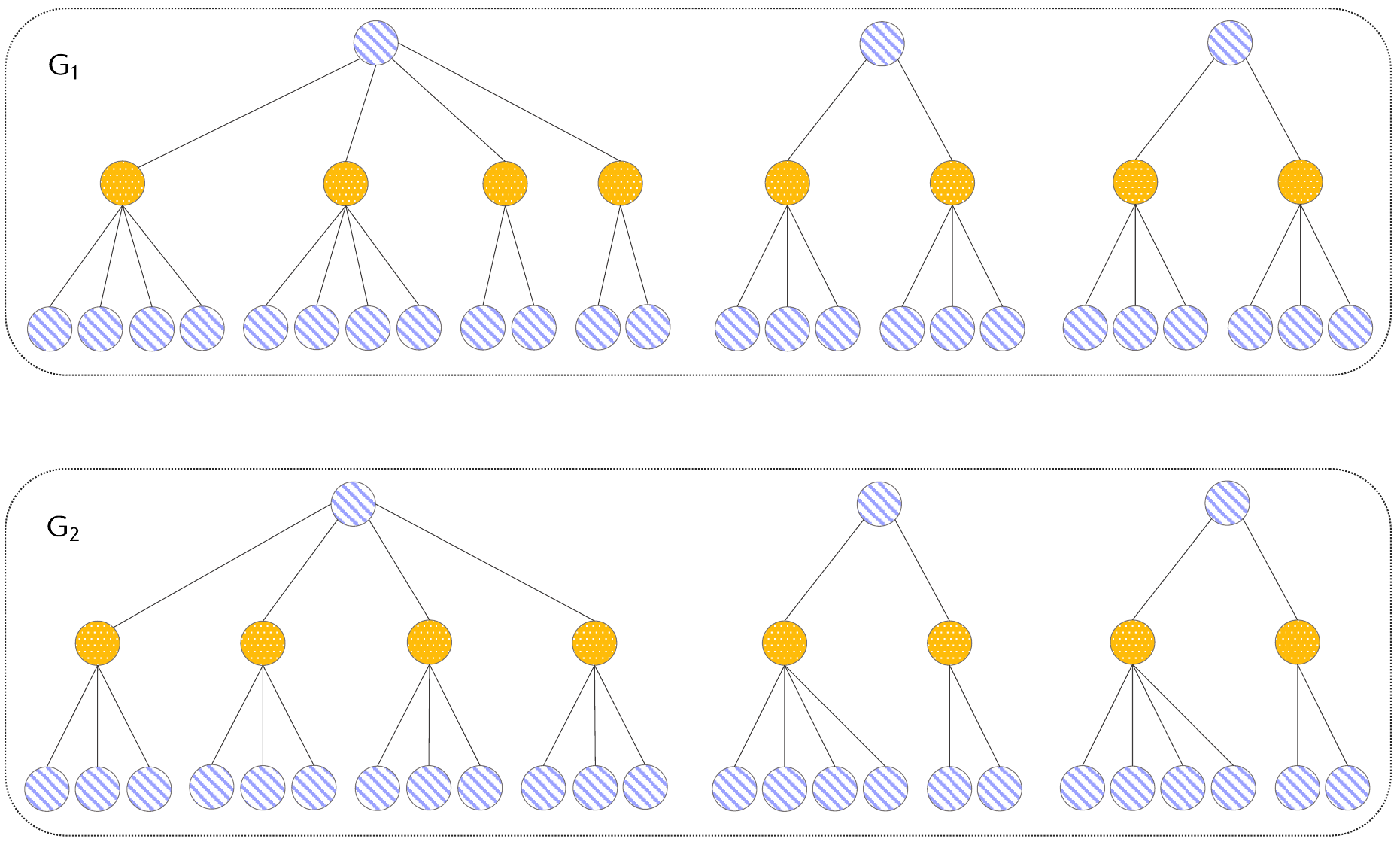}
  \caption{Two bipartite graphs with the same degree distributions and
  the same number of caterpillars, but different BJDMs.}%
  \label{fig:counter}
\end{figure}

%There is no immediate interpretation of the BJDM in terms of item(set)s % chktex 36
%and/or transactions, but the above result confirms that it captures additional
%structure of the observed dataset that is not preserved by the three basic
%properties.

We considered preserving more ``natural'' characteristics than the BJDM, such as
the support of each itemset of length two. However, doing so would lead to null
sets $\nullset$ that contain very few datasets in most cases, and are therefore
not very informative about the data generation process, as they are likely
overly constrained. Informally, the reason is that the biadjacency matrix
$M_\dataset$ of the graph $G_\dataset$ corresponding to any dataset $\dataset$
in such a $\nullset$ must satisfy $M_\dataset^{\phantom{T}} M_\dataset^\intercal = M_\odataset^{\phantom{T}}
M_\odataset^{\hspace{-.5pt}\intercal}$. %
% 20221221 MR: Commenting out because it is implied by the previous conditions
%and have the same row and column sums as $M_\odataset$.
Binary matrices $A$ and $B$ satisfying $A A^\intercal = B B^\intercal$ are known
as \emph{Gram mates}~\citep{Kirkland18,KimK22}. \Citet[Corol.\
1.1.1]{Kirkland18} shows an upper bound to the relative size of the set of Gram
mates w.r.t.\ the set of all binary matrices, which decreases as
the number of transactions in $\odataset$ and/or the number of items in $\items$
grow. While \citet{Kirkland18} and \citet{KimK22} construct infinite families of
Gram mates, they observe that these families ``possess a tremendous amount of
structure''~\citep[Sect.\ 4]{Kirkland18}, and it seems unlikely that such a
structure would ever occur on matrices corresponding to real datasets, to the
point that it is still an open question to determine whether a matrix $A$ even
admits \emph{any} Gram mate, which would at least allow us to determine whether
or not $\card{\nullset}=1$. On the other hand, if one can find at least one pair
of Gram mates, \citet[Sect.\ 5]{KimK22} give methods to build others (but
possibly not \emph{all}), thus if the open question is settled in a constructive
way, one may be able to sample from (a subset of) $\nullset$, if so
interested.

\rev{
Finally, we give an intuition about the properties that \algo preserves in addition to the fundamental ones.
Preserving the BJDM of a bipartite graph means preserving the number of edges connecting two nodes with given degrees.
This property implies, for instance, that the \emph{assortativity} of the graph~\citep{newman2002assortative}, i.e., the Pearson correlation coefficient of the vectors of degrees of nodes connected by an edge, is also maintained.
\Cref{fig:joint-distributions} shows an example of this property.
Assume to have a dataset with an empirical joint degree distribution as in \cref{fig:joint-alice}.
\algo preserves this joint degree distribution exactly.
Conversely, by preserving only the two fundamental properties, we only preserve the marginal distributions as in \cref{fig:joint-gmmt}.
In this latter case, the joint distribution is simply the product of the marginals, i.e., the marginals are assumed independent.
}

\begin{figure}[htb]
  \centering
  \begin{subfigure}[t]{0.49\textwidth}
    \includegraphics[width=\columnwidth]{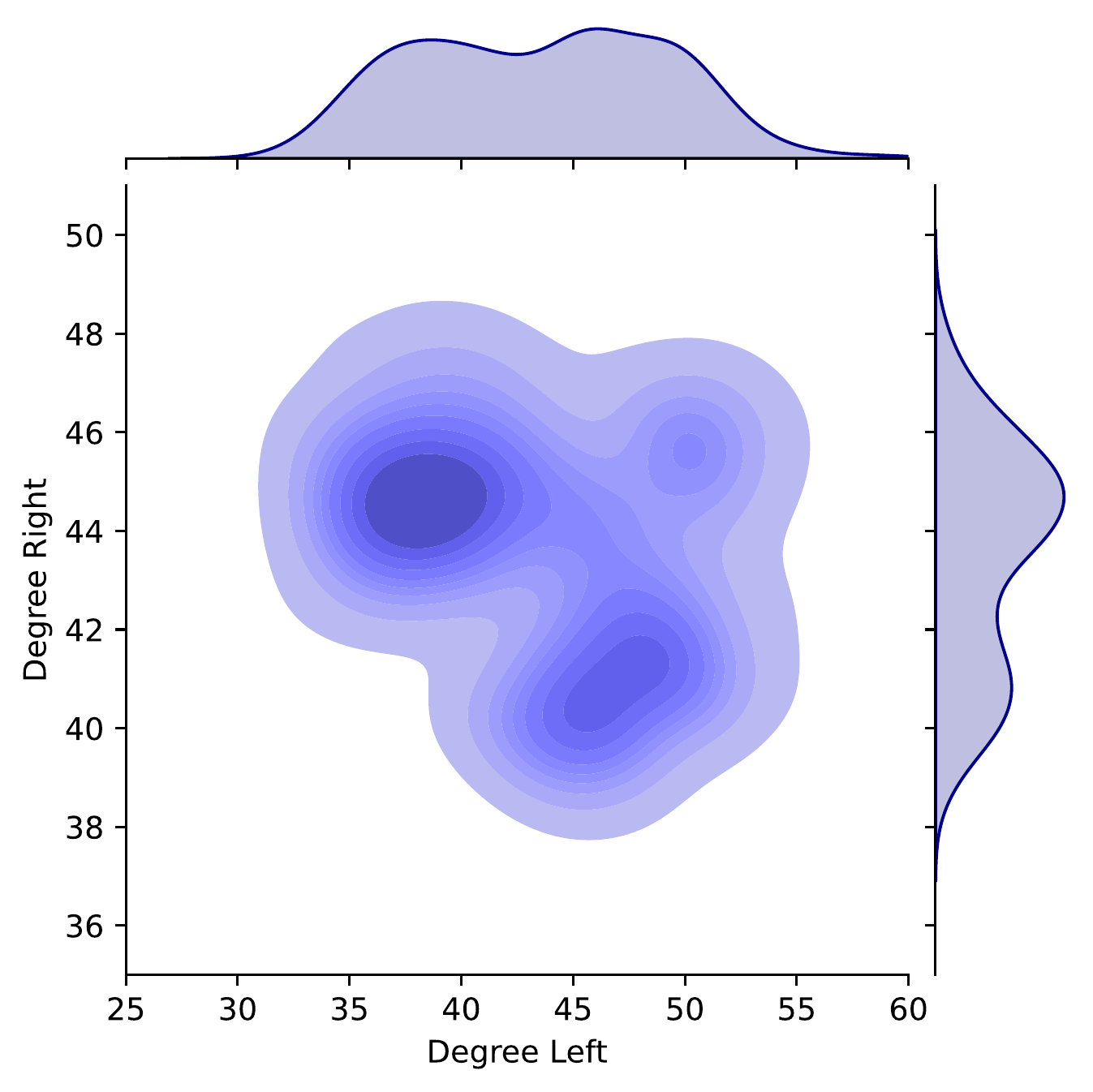}
    \caption{\rev{Joint distribution under \algo, which preserves the BJDM and maintains the degree assortativity of the dataset.}}
    \label{fig:joint-alice}
  \end{subfigure}  
  \begin{subfigure}[t]{0.49\textwidth}
    \includegraphics[width=\columnwidth]{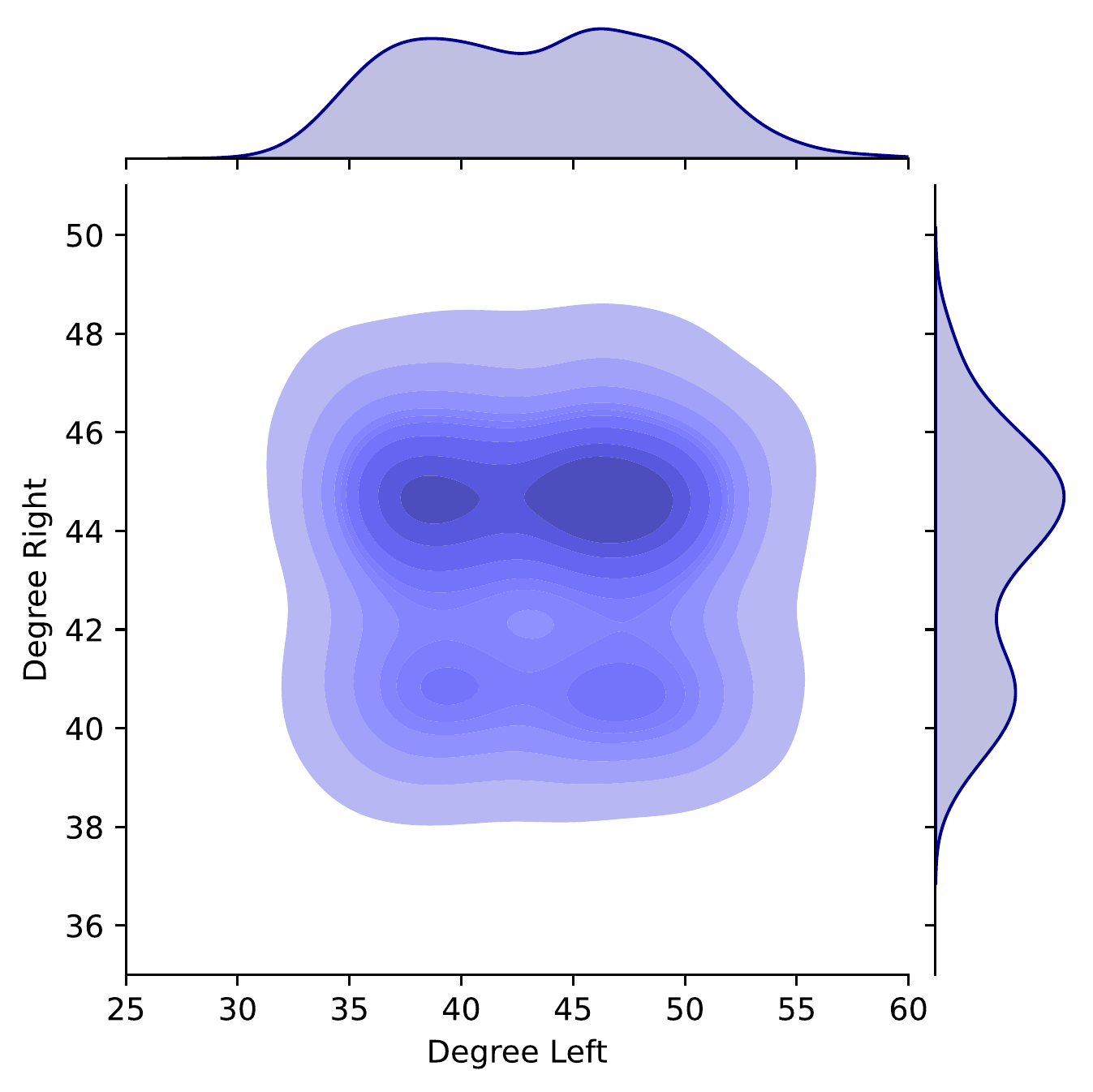}
    \caption{\rev{Joint degree distribution when preserving the two fundamental properties, where the left and right degree distributions are independent.}}
    \label{fig:joint-gmmt}
  \end{subfigure}
  \caption{\rev{Example of two different joint degree distributions of bipartite graphs with the same marginal degree distributions.}}
  \label{fig:joint-distributions}
\end{figure}

\section{Sampling from the Null Model}\label{sec:sampling}

We now present \algorso and \algocurve, two algorithms for sampling datasets
from the null model $\nullmodel = (\nullset, \nullprob)$.
%, i.e., for sampling elements of
%$\nullset$ according to the user-defined probability distribution $\nullprob$.

These algorithms take the MCMC approach with MH (see \cref{sec:prelims:mcmc}).
Their set of
states is the set $\matrices$ of matrices defined as follows. Fix $M_\odataset$
to be any of the biadjacency matrices of a bipartite graph corresponding to the
observed dataset $\odataset$. $\matrices$ contains all and only the matrices $M$
of size $\card{\odataset} \times \card{\items}$ such that, when considering $M$
as the biadjacency matrix of a bipartite graph $G_M$, it holds
$\bjdm{G_M} = \bjdm{G_\odataset}$.

$\matrices$ may contain multiple matrices associated to the same dataset (see
\cref{sec:model:matrices}), and different datasets may have a different number
of matrices in $\matrices$ associated to them. \algorso and \algocurve take this fact
into account to ensure that the sampling of datasets from $\nullset$ is done
according to $\nullprob$. For $M \in \matrices$, we use $\mattodat{M}$ to denote
the unique dataset corresponding to $M$, and for a dataset $\dataset \in
\nullset$, we use $\dattomat{\dataset}$ to denote the \emph{set} of matrices in
$\matrices$ corresponding to $\dataset$. \Citet[Lemma 3]{AbuissaLR23} give an
expression for the size $\copiesnum{\dataset } \doteq
\card{\dattomat{\dataset}}$ of $\dattomat{\dataset}$. The correctness of the two
algorithms relies on it so we report it here.

\begin{lemma}[\citealp{AbuissaLR23}, Lemma 3]\label{lem:numcopies}
  For any dataset $\dataset \in \nullset$, let $\{\ell_1,
  \dotsc, \ell_{z_\dataset}\}$ be the set of the $z_\dataset$ distinct lengths
  of the transactions in $\dataset$. For each $1 \le i \le z_\dataset$, let
  $T_i$ be the \emph{bag} of transactions of length $\ell_i$ in $\dataset$.
  Let $\bar{T}_i = \{\tau_{i,1}, \dotsc, \tau_{i,r_i}\}$ be the \emph{set} of
  transactions of length $\ell_i$ in $\dataset$, i.e., without duplicates. For
  each $1 \le j \le r_i$, let $Q_{i,j} \doteq \{ t' \in T_i \suchthat  t' =
  \tau_{i,j} \}$ be the \emph{bag} of transactions in $T_i$ equal to $\tau_{i,j}$ 
   (including $\tau_{i,j}$). Then, the number of matrices $M$ in
  $\matrices$ such that $\mattodat{M} = \dataset$ is
  \begin{equation}\label{eq:copiesnum}
    \copiesnum{\dataset} = \prod_{i=1}^{z_\dataset}
    \underbracket[0.187ex]{\binom{\card{T_i}}{\card{Q_{i,1}}, \dotsc,
    \card{Q_{i,r_i}}}}_{\text{multinomial coefficient}}
    = \prod_{i=1}^{z_\dataset} \frac{\card{T_i}!}{\prod_{j=1}^{r_i}
    \card{Q_{i,j}}!} \enspace.
  \end{equation}
\end{lemma}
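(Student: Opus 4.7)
The plan is to enumerate $\dattomat{\dataset}$ by intra-length-block row orderings. First I would argue that every matrix in $\matrices$ shares a common partition of row positions into length-blocks: once $M_\odataset$ is fixed, \cref{corol:preserved} pins down the bag of row degrees (equivalently, of transaction lengths) at $\card{T_i}$ copies of $\ell_i$, and under the convention of \citet{AbuissaLR23} that the rows of every $M \in \matrices$ are listed with length-blocks in a canonical (say, increasing-length) order, the positions reserved for length-$\ell_i$ rows are the same for every $M \in \matrices$. Consequently, a matrix $M \in \dattomat{\dataset}$ is specified exactly by an ordering of the bag $T_i$ in the $\card{T_i}$ positions of the $i$-th length-block, chosen independently for each $1 \le i \le z_\dataset$; conversely, every such tuple of orderings produces a distinct matrix in $\dattomat{\dataset}$, so I have a bijection to enumerate.

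Second, for each block $i$ I would apply the classical bag-permutation count. Since $T_i$ contains the $r_i$ distinct transactions $\tau_{i,1}, \dotsc, \tau_{i,r_i}$ with respective multiplicities $\card{Q_{i,1}}, \dotsc, \card{Q_{i,r_i}}$ summing to $\card{T_i}$, the number of distinct orderings of $T_i$ is
\[
  \binom{\card{T_i}}{\card{Q_{i,1}}, \dotsc, \card{Q_{i,r_i}}} = \frac{\card{T_i}!}{\prod_{j=1}^{r_i} \card{Q_{i,j}}!}.
\]
The standard derivation places the $\card{Q_{i,1}}$ copies of $\tau_{i,1}$ in $\binom{\card{T_i}}{\card{Q_{i,1}}}$ ways, then the copies of $\tau_{i,2}$ among the remaining positions, and so on, with the product telescoping to the stated multinomial.

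Finally, since the length-blocks occupy disjoint ranges of row positions, the intra-block orderings can be chosen independently across $i$, so $\copiesnum{\dataset}$ is the product over $i$ of the per-block multinomials, which is exactly the stated formula. The main hurdle is the first step: if one instead allowed row permutations \emph{across} length-blocks, an extra factor $\binom{\card{\odataset}}{\card{T_1}, \dotsc, \card{T_{z_\dataset}}}$ would enter the count. The formula as stated therefore commits to the canonical block-ordering convention of \citet{AbuissaLR23}; once that convention is explicit, everything else is routine multinomial arithmetic and the independence of the per-block choices.
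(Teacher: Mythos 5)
The paper does not prove this lemma; it imports it verbatim from \citet{AbuissaLR23}, so there is no in-paper proof to compare against. Judged on its own, your argument is correct in its combinatorial core: the per-block multiset-permutation count and the telescoping to the multinomial are standard, and the product over blocks follows from the disjointness of the row positions. You have also put your finger on exactly the right crux, namely that the stated formula counts only \emph{intra}-block row orderings and would acquire the extra factor $\binom{\card{\odataset}}{\card{T_1}, \dotsc, \card{T_{z_\dataset}}}$ if rows could migrate across length-blocks.

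The one refinement I would ask for is in how you discharge that crux. You appeal to a ``canonical increasing-length ordering convention,'' but the actual mechanism is different and worth stating precisely: the set $\matrices$ that the algorithms operate on is the RSO-reachable state space from a fixed $M_\odataset$, and an RSO preserves the sum of \emph{each individual row} (row $a$ loses a one at column $c$ and gains one at column $d$), hence the entire row-sum \emph{vector} of $M_\odataset$, position by position. This is also how \citet{AbuissaLR23} define their matrix class (fixed row- and column-margin vectors). Consequently the $\card{T_i}$ row indices at which $M_\odataset$ has row sum $\ell_i$ --- whatever order they appear in, canonical or not --- are the only positions a length-$\ell_i$ transaction can occupy in any $M \in \matrices$, which is what your bijection needs. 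Read literally, the present paper's definition of $\matrices$ (``all matrices with the same BJDM'') would indeed permit cross-block row permutations and inflate the count by the multinomial factor you identify; but since the transaction-length distribution, and hence each $\card{T_i}$, is identical for every $\dataset \in \nullset$, that factor is a constant over $\nullset$ and cancels in the Metropolis--Hastings ratio $\copiesnum{\dataset}/\copiesnum{\dataset'}$, so neither reading affects the correctness of \textsc{Alice}. With the fixed-margin reading made explicit, your proof is complete.
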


\algorso and \algocurve take as inputs $\nullprob$ and the observed dataset $\odataset$.
It uses MH (see \cref{sec:prelims:mcmc}) to sample a matrix $M \in
\matrices$ according to a distribution $\statdistr$ (defined below), and returns
$\dataset = \mattodat{M} \in \nullset$ distributed according to $\nullprob$.
Both algorithms we present share the same set $\matrices$ of states, but they
have different neighborhood structures (i.e., the graphs used by MH for the two
algorithms have different sets of edges), different neighbor distributions
$\neighdistr{M}$, $M \in \matrices$, and different neighbor sampling procedures.

\subsection{\algorso: RSO-based Algorithm}\label{sec:sampling:swaps}

In our first algorithm, \algorso, the neighborhood structure over $\matrices$ is
defined using \emph{Restricted Swap Operations (RSOs)}~\citep[Sect.\
2]{czabarka2015realizations}.

\begin{definition}[Restricted Swap Operation (RSO)]\label{def:rso}
  Let $M$ be the $\card{L} \times \card{R}$ biadjacency matrix of a bipartite
  graph $G = (L \cup R, E)$.
  Let $1 \le a \neq b \le \card{L}$ and $1 \le c \neq d \le \card{R}$ be the
  indices of two rows and columns of $M$, respectively, such that
  \[
    M[a,c]=M[b,d] = 1  \wedge M[a,d]=M[b,c] = 0
  \]
  and such that \emph{at least one} of the following conditions holds
  \begin{align*}
    C_{ab} &= \text{``} \sum_{j=1}^{\card{R}} M[a,j] = \sum_{j=1}^{\card{R}}
    M[b,j] \text{''}\\
    C_{cd} &= \text{``} \sum_{i=1}^{\card{L}} M[i,c] = \sum_{i=1}^{\card{L}}
    M[i,d] \text{''} \enspace.
  \end{align*}
  The \emph{Restricted Swap Operation (RSO) $(a, c), (b, d) \rightarrow (a, d),
  (b, c)$ on $M$} is the operation that obtains the matrix $M'$ which is
  the same as $M$ but $M'[a,c] = M[a,d]$, $M'[a,d] = M[a,c]$, $M'[b,c] =
  M[b,d]$, and $M'[b,d] = M[b,c]$.
\end{definition}

\Cref{fig:rso} (left) depicts a bipartite graph, where dotted nodes indicate left nodes, and striped nodes indicate right nodes. For ease of presentation, we use different colors to denote nodes with the same degree.
A RSO in this graph is $(A, 1), (B, 5) \rightarrow (A, 5), (B, 1)$, because $A$
and $B$ satisfy condition $C_{ab}$ and the edges $(A, 5)$ and $(B, 1)$ are not
part of the graph. \Cref{fig:rso} (right) shows the graph resulting from the
application of the RSO\@. Dashed edges are edges involved in the RSO.%

\begin{figure}[htb]
  \centering
  \includegraphics[width=.7\columnwidth]{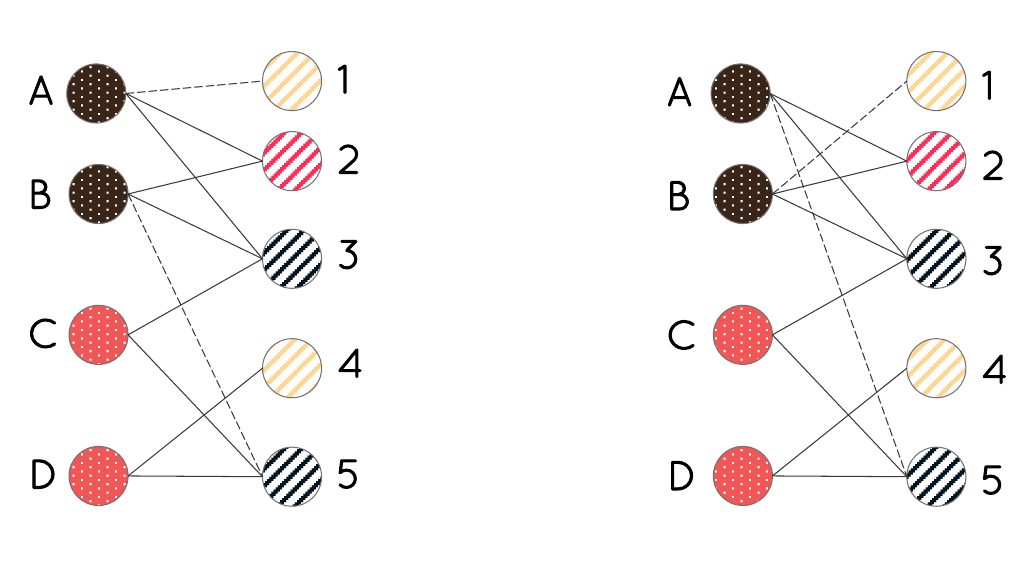}
  \caption{The RSO denoted with dashed edges transforms the left graph into the right graph. Different patterns denote nodes on different sides of the graph, while different colors denote different degrees.}%
  \label{fig:rso}
\end{figure}

Any RSO on $M \in \matrices$ results in a matrix $M'$ that
belongs to $\matrices$ as well. In the graph $G=(\matrices,
E)$ needed for MH, there is an edge from $M$ to $M'$
if there is a RSO from $M$ to $M'$. Additionally, there are \emph{self-loops}
from any $M \in \matrices$ to itself. These self-loops do not correspond to
RSOs, but they simplify the neighbor sampling procedure (described next). There
are zero or one RSOs between any pair of matrices in $\matrices$, but
$\matrices$ is strongly connected by RSOs~\citep[Thm.\
8]{czabarka2015realizations}.\footnote{The proof of \citep[Thm.\
8]{czabarka2015realizations} must be adapted, in a straightforward
way, to account for the fact that $\matrices$ contains biadjacency matrices of
bipartite graphs.}

RSOs are just one of the many possible operations that make $\nullset$
strongly connected. We discuss one such different operation in \cref{sec:sampling:curve}. %
% 20220609 MR: Commenting out for space saving. Restore for journal
%
Finding other operations to replace RSOs or to use in addition to RSOs is an
interesting research direction.

We now discuss the second ingredient needed to use MH\@: the distribution
$\neighdistr{M}$ over the set of neighbors $\neigh{M}$ of any $M \in
\matrices$. At first, using a distribution $\neighdistr{M}$ of the form
\[
  \neighdistr{M}(M') \doteq \begin{cases}
    \frac{2}{\card{\items}^2 \card{\dataset}^2} & M' \in \neigh{M} \setminus
    \{M\}\\
    1 - \frac{2(\card{\neigh{M}} - 1)}{\card{\items}^2 \card{\dataset}^2} &
    M' = M
  \end{cases}
\]
may seem an appealing option, because it could be realized by first drawing a
4-tuple $(a,b,c,d)$ uniformly at random from $\dataset \times \dataset \times
\items \times \items$, and then verifying whether $(a, c), (b, d)
\rightarrow (a, d), (b, c)$ is a RSO\@: if it is, one would set $M'$ to be the matrix resulting
from applying the RSO to $M$, otherwise $M' = M$. The major issue with this
approach is that, depending on $M$, the number of tuples that
must be drawn before finding one that is a RSO may be very large, thus slowing
down the process of moving on the graph. 
\rev{We briefly touch upon the convergence problem of this approach in \Cref{sec:exper}.}
Conversely, more complex
probability distributions that ensure drawing a neighbor different than $M$ are
quite easy to define, but come with the serious drawback that they need
expensive computation and bookkeeping of quantities such as $\card{\neigh{M}}$
and $\card{\neigh{M'}}$ for $M' \in \neigh{M}$ (due to~\cref{eq:mh}), or the
number of pairs of different rows or columns of the same lengths in $M$ and $M' \in
\neigh{M}$. The process of sampling a neighbor would then be much more
expensive, thus again slowing down the walk on the graph. We propose a
distribution over $\neigh{M}$ and a procedure to sample from it that strikes a
balance between statistical and computational ``efficiency'': the
probability of sampling $M$ is smaller than in the na\"{\i}ve case described
above, and sampling a neighbor is still quite efficient.

Let $M \in \matrices$ be the current state. For any $1 \le m \le \card{\items}$
(resp.~$1 \le n \le \card{\dataset}$), let $A_m$ be the set of row indices in $M$
whose rows have sum $m$ (resp.~let $B_n$ be set of column indices in $M$ whose
columns have sum $n$). To sample a neighbor $M'$ of $M$, we start by flipping a
fair coin. If the outcome is \emph{heads}, we first draw a row sum $1 \le m \le
\card{\items}$ with probability
\begin{equation}\label{eq:row-probability}
  \beta(m) = \slfrac{\binom{\card{A_m}}{2}}{\sum_{j=1}^{\card{\items}}
  \binom{\card{A_j}}{2}},
\end{equation}
and then we draw a pair $(a,b)$ of \emph{different} row indices in $A_m$
uniformly at random between such pairs. If the row of index $a$ and the row of
index $b$ in $M$ are identical, then we set $M' = M$. Otherwise, consider the
set $H_{a,b}$ of column index pairs $(p,q)$ such that
\[
  M[a,p] = M[b,q] \wedge M[a,q] = M[b,p] \wedge M[a,p] \neq M[a,q] \enspace.
\]
We draw a pair $(c,d)$ from $H_{a,b}$
uniformly at random. Then, either $(a,c),(b,d) \rightarrow (a,d),(b,c)$ or
$(a,d), (b,c) \rightarrow (a,c),(b,d)$ is a RSO by construction, and we set $M'$
to be the matrix obtained by performing this RSO on $M$. If the outcome of the
coin flip is \emph{tails}, we first draw a column sum $1 \le n \le
\card{\dataset}$ with probability
\begin{equation}\label{eq:col-probability}
  \gamma(n) = \slfrac{\binom{\card{B_n}}{2}}{\sum_{j=1}^{\card{\dataset}}
  \binom{\card{B_j}}{2}},
\end{equation}
and then we draw a pair $(c,d)$ of different column indices in $B_n$ uniformly
at random between such pairs. If the column of index $c$ and the column of index
$d$ in $M$ are identical, then we set $M' = M$. Otherwise, consider the set
$K_{c,d}$ of row index pairs $(p,q)$ such that
\[
  M[p,c] = M[q,d] \wedge M[p,d] = M[q,c] \wedge M[p,c] \neq M[p,d] \enspace.
\]
We draw a pair $(a,b)$ from $K_{c,d}$ uniformly at random. Then, either
$(a,c),(b,d) \rightarrow (a,d),(b,c)$ or is also a RSO by construction, and we
set $M'$ to be $(a,d),(b,c) \rightarrow (a,c),(b,d)$ is a RSO by construction,
and we set $M'$ to be the matrix obtained by performing this RSO on $M$.

This procedure induces a probability distribution $\neighdistr{M}$ over
$\neigh{M}$. Let us analyze $\neighdistr{M}(M')$ for $M' \neq M$. W.l.o.g., let
$(a,c),(b,d) \rightarrow (a,d),(b,c)$ be the sampled RSO, and let $M'$ be the neighbor of
$M$ obtained by performing such RSO on $M$\@. Recall that the sampled RSO is the
only RSO from $M$ to $M'$. Consider the following events:
\begin{align*}
  &E_{\mathrm{row}} \doteq \text{``rows $a$ and $b$ of $M$ have the same
  row sum $m$'';}\\
  &E_{\mathrm{col}} \doteq \text{``columns $c$ and $d$ of $M$ have the same
  column sum $n$''.}
\end{align*}
There are three possible cases for the probability $\neighdistr{M}(M')$ of
sampling $M'$:
\begin{itemize}
  \item if only $E_{\mathrm{row}}$ holds, then
    \begin{equation}\label{eq:neighdistrrow}
      \neighdistr{M}(M') = \frac{1}{2}
      \frac{1}{\sum_{i=1}^{\card{\items}} \binom{\card{R_i}}{2}}
      \frac{1}{\card{H_{a,b}}};
    \end{equation}
  \item if only $E_{\mathrm{col}}$ holds, then
    \begin{equation}\label{eq:neighdistrcol}
      \neighdistr{M}(M') = \frac{1}{2}
      \frac{1}{\sum_{j=1}^{\card{\dataset}} \binom{\card{C_j}}{2}}
      \frac{1}{\card{K_{a,b}}};
    \end{equation}
  \item if both $E_{\mathrm{row}}$ and $E_{\mathrm{col}}$ hold, then $M$' (i.e.,
    the RSO) may be sampled regardless of the outcome of the coin flip. Thus,
    $\neighdistr{M}(M')$ is the sum of r.h.s.'s of~\cref{eq:neighdistrrow}
    and~\cref{eq:neighdistrcol}.
\end{itemize}
We do not need to analyze $\neighdistr{M}(M)$ because if $M$ is drawn as the
``neighbor'', then MH will definitively select $M$ as the next state, thus we do
not need to explicitly compute its probability.

It holds that $\neighdistr{M}(M') = \neighdistr{M'}(M)$, which greatly
simplifies the use of MH\@: from~\cref{eq:mh}, we see that, thanks to the
construction of the graph and the definition of the neighbor sampling
distribution, we really only need the distribution $\statdistr$ over
$\matrices$. We define it as
\begin{equation}\label{eq:statdistrrso}
  \statdistr(M) = \frac{\nullprob(\mattodat{M})}{\copiesnum{\mattodat{M}}},
\end{equation}
where $\copiesnum{\mattodat{M}}$ is from~\cref{eq:copiesnum}. The following
lemma shows that \algorso samples a dataset $\dataset$ from
$\nullset$ according to $\nullprob$, i.e., it samples from the null model.

\begin{lemma}\label{lem:swapcorrectness}
  Let $\dataset \in \nullset$. \algorso outputs $\dataset$ with probability $\nullprob(\dataset)$.
\end{lemma}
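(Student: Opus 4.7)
The plan is to argue in two stages: first, that the MH chain used by \algorso has stationary distribution $\statdistr$ as defined in~\cref{eq:statdistrrso}; second, that summing this distribution over the fiber $\dattomat{\dataset}$ yields $\nullprob(\dataset)$. For the first stage, I need to check the three standard prerequisites of the MH theorem (\cite{MitzenmacherU05}, Ch.\ 10): the state graph on $\matrices$ is irreducible, it is aperiodic, and the proposal distribution is symmetric (so the MH acceptance ratio in~\cref{eq:mh} reduces to $\min\{1, \statdistr(M')/\statdistr(M)\}$, which guarantees detailed balance and hence $\statdistr$ as the unique stationary distribution).

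Irreducibility follows from the adaptation of~\citet[Thm.\ 8]{czabarka2015realizations} already noted in the text, and aperiodicity is immediate from the self-loops introduced into the state graph. The main task is the symmetry check, i.e., $\neighdistr{M}(M') = \neighdistr{M'}(M)$ for every pair of distinct neighbors $M, M' \in \matrices$. Here I would exploit the fact that a RSO preserves every row sum and every column sum of the biadjacency matrix. Therefore the row-sum classes $A_m$ and column-sum classes $B_n$, and so the probabilities $\beta(m)$ and $\gamma(n)$ from~\cref{eq:row-probability,eq:col-probability}, coincide for $M$ and $M'$. Moreover, if $(a,c),(b,d) \to (a,d),(b,c)$ is the (unique) RSO transforming $M$ into $M'$, then the same unordered pair of row (resp.\ column) indices defines the reverse RSO on $M'$, and the sets $H_{a,b}$ and $K_{c,d}$ computed on $M'$ coincide with those computed on $M$ (the defining conditions are symmetric under swapping the $0/1$ roles of the involved entries). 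Comparing~\cref{eq:neighdistrrow,eq:neighdistrcol} term by term in each of the three cases ($E_{\mathrm{row}}$ only, $E_{\mathrm{col}}$ only, both) then gives $\neighdistr{M}(M') = \neighdistr{M'}(M)$. This is the step I expect to require the most care, because it must be verified in each of the three cases and it depends on showing that the auxiliary sets $H_{a,b}$, $K_{c,d}$ truly match across $M$ and $M'$.

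With the stationary distribution of the chain established as $\statdistr$, \algorso draws (after sufficiently many steps) a matrix $M \in \matrices$ with probability $\statdistr(M)$ and returns $\mattodat{M}$. For any target dataset $\dataset \in \nullset$, the event ``\algorso outputs $\dataset$'' corresponds to $M \in \dattomat{\dataset}$, so
\begin{equation*}
  \Pr[\text{output} = \dataset] = \sum_{M \in \dattomat{\dataset}} \statdistr(M)
  = \sum_{M \in \dattomat{\dataset}} \frac{\nullprob(\dataset)}{\copiesnum{\dataset}}
  = \card{\dattomat{\dataset}} \cdot \frac{\nullprob(\dataset)}{\copiesnum{\dataset}}
  = \nullprob(\dataset),
\end{equation*}
where the last equality uses $\copiesnum{\dataset} = \card{\dattomat{\dataset}}$ from~\cref{lem:numcopies}. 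This concludes the proof.
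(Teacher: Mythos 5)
Your proof is correct and follows essentially the same route as the paper's: the paper's own proof is a two-line appeal to the correctness of MH plus the observation that $\dataset$ is returned whenever any of its $\copiesnum{\dataset}$ matrices in $\matrices$ is sampled, with the symmetry $\neighdistr{M}(M')=\neighdistr{M'}(M)$, the strong connectivity via \citet[Thm.~8]{czabarka2015realizations}, and the aperiodicity from self-loops all established in the surrounding text exactly as you reconstruct them. One small imprecision: $H_{a,b}$ computed on $M$ and on $M'$ need not coincide as sets (the RSO flips the $(1,0)/(0,1)$ pattern of columns $c$ and $d$ in rows $a,b$), but only the cardinality $\card{H_{a,b}}$ enters $\neighdistr{M}(M')$, and that cardinality is preserved because the RSO leaves the counts of $(1,0)$- and $(0,1)$-patterned columns unchanged, so your conclusion stands.
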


\begin{proof}
  Let $M \in \matrices$. From the correctness of MH we have that \algorso
  samples $M$ according to $\statdistr$ from~\cref{eq:statdistrrso}. The thesis
  then follows from noticing that $\dataset$ is returned in output whenever
  \algorso samples one of the $\copiesnum{\dataset}$ matrices in $\matrices$
  corresponding to $\dataset$.
\end{proof}

\rev{\Cref{alg:alice} illustrates the main steps performed by \algorso to sample a dataset in $\nullset$.
The algorithm receives in input a matrix $M \in \matrices$ and a number of swaps $s$ sufficiently large for convergence.
Previous works estimated that a number of steps in order of the number of $1$s in $M$ is sufficient. We will discuss this aspect in \Cref{sec:exper}.}
\begin{algorithm}[thb]
    \caption{\rev{\algo}}\label{alg:alice}
    \begin{algorithmic}[1]
    \Require Matrix $M \in \matrices$, Number of Swaps $s$
    \Ensure Dataset $\dataset$ sampled from $\nullset$ with probability $\nullprob(\dataset)$ 
    \State $\copiesnum{\mattodat{M}} \gets $\Cref{eq:copiesnum}
    \State $i \gets 0$
    \While{$i < s$}
      \State $i \gets i + 1$
      \State $\mathsf{out} \gets$ flip a fair coin
      \If{$\mathsf{out}$ is \emph{heads}}
        \State $a$, $b$ $\gets$ different row indices drawn u.a.r. such that $C_{ab}$ holds 
        \State $c$, $d$ $\gets$ pair drawn u.a.r. from $H_{ab}$\label{line:rsoa}
      \Else
        \State $c$, $d$ $\gets$ different column indices drawn u.a.r. such that $C_{cd}$ holds
        \State $a$, $b$ $\gets$ pair drawn u.a.r. from $K_{cd}$\label{line:rsob}
      \EndIf
      \State $M' \gets $ perform $(a, c), (b, d) \rightarrow (a, d), (b, c)$ on $M$\label{line:apply}
      \State $\copiesnum{\mattodat{M'}} \gets $\Cref{eq:copiesnum}
      \State $p \gets $ random real number in $[0,1]$
      \State $a \gets \min\left(1, \copiesnum{\dataset}/\copiesnum{\dataset'}\right)$
      \If{$p \leq a$}
        $M \gets M'$
      \EndIf
    \EndWhile
    \State \Return $\mattodat{M}$
    \end{algorithmic}
\end{algorithm}

%\question[from=Matteo,date=4/6]{Shall we say something about the mixing time?
%  \citet{amanatidis2022rapid} show that a \emph{different} Markov chain
%  (similar to, but not the same as, the one based on the rejection sampling
%  procedure), is rapidly mixing, but we cannot really conclude anything about
%  that. Perhaps when we measure the mixing time experimentally, we can cite
%  that paper with a comment.}

\subsection{\algocurve: Adapting Curveball}\label{sec:sampling:curve}
% 20220405 MR: commenting out because it is something that we have to keep in
% mind, so I prefer not to delete it.
%
%\mynote[from=Giulia]{Regarding the Curveball algorithm, some cite this work for
%bipartite networks\cite{strona2014fast}. This one
%\url{https://arxiv.org/pdf/2112.04017.pdf} has not been published yet, hence I
%am not sure whether we should mention it or not.}
%
%\reply[from=Matteo]{Yes, the corrigendum of \citep{strona2014fast}
%  (\url{https://www.nature.com/articles/ncomms13086.pdf}) makes it
%  clear that they essentially ``forgot'' to cite \citet{verhelst2008efficient}.
%
%  Note that any binary matrix with $m$ rows and $n$ columns can be seen as the
%  biadjacency matrix of a bipartite graph.
%
%  I had seen the arxiv one before. To me, it seems mostly an implementation
%  detail, the algorithm is really the same (\citet{verhelst2008efficient} is a
%  mathematician, so he doesn't discuss algorithmic details).
%}
We now introduce a second algorithm, \algocurve, that can essentially perform
multiple RSOs at each step of the Markov chain, thus leading to a faster mixing of
the chain, i.e., to fewer steps needed to sample a dataset from $\nullmodel$.
Our approach adapts the \textsc{CurveBall}
algorithm~\citep{verhelst2008efficient}, which samples a matrix from the space
of binary matrices with fixed row and column sums, to use RSOs. \algocurve is
also an MCMC algorithm that uses MH\@. The vertex set of the graph $G =
(\matrices, E)$ is still the set $\matrices$ previously defined, but \algocurve
uses a different set of edges than \algorso: there is an edge $(M,M') \in E$
from a matrix $M \in \matrices$ to $M' \in \matrices$ iff $M'=M$ or there is a
\emph{Restricted Binomial Swap Operation (RBSO)} on $M$ that results in $M'$.
RBSOs are defined as follows.

\begin{definition}[Restricted Binomial Swap Operation (RBSO)]\label{def:rbso}
  Given a matrix $M \in \matrices$, let $a$ and $b$ be the indices of two
  \emph{distinct and different} rows of $M$ with the same row sum. Let
  $Z_a(M,b)$ be the set of column-indices $q$ such that $M[a,q]=1$ and
  $M[b,q]=0$, and define $Z_b(M,a)$ similarly (it holds $Z_a(M,b) \cap Z_b(M,a) =
  \emptyset$ and $\card{Z_a(M,b)} = \card{Z_b(M,a)}$). Let $U$ be any subset of
  $Z_a(M,b) \cup Z_b(M,a)$ of size $\card{Z_a(M,b)}$.
  The \emph{row Restricted Binomial Swap Operation (rRBSO) $(a, b, U)$ on
  $M$} is the operation that obtains a matrix $M'$ such that $M'[i,j] = M[i,j]$
  except for $i \in \{a,b\}$, and such that the rows of index $a$ and $b$ of
  $M'$ are
  \[
    M'[a,q] \doteq
    \begin{cases}
      M[a,q] & q \notin Z_a(M,b) \cup Z_b(M,a)\\
      1 & q \in U\\
      0 & q \in (Z_a(M,b) \cup Z_b(M,a)) \setminus U
    \end{cases}
  \]
  and
  \[
    M'[b,q] \doteq
    \begin{cases}
      M[b,q] & q \notin Z_a(M,b) \cup Z_b(M,a)\\
      0 & q \in U\\
      1 & q \in (Z_a(M,b) \cup Z_b(M,a)) \setminus U
    \end{cases}
  \]
  A corresponding definition for a \emph{column RBSO (cRBSO)} can be given for
  $a$ and $b$ being the indices of two distinct and different columns with the
  same column sum.

  We use ``RBSO'' to refer to either a rRBSO or a cRBSO, and the set of RBSOs is
  composed by all rRBSOs and cRBSOs.
\end{definition}

\Cref{fig:rbso} (left) depicts a bipartite graph using the same style used in \cref{fig:rso}. 
Let $a = 1$ and $b = 2$, which are two right nodes with the same degree but different sets of neighbors. 
Then, $Z_a(M,b) = \{A,D\}$ and $Z_b(M,a) = \{B,G\}$.
For $U = \{B, G\}$, the RBSO $(a,b,U)$ generates the graph in \cref{fig:rbso}
(right). Dashed edges are edges involved in the RBSO.%

\begin{figure}[htb]
  \centering
  \includegraphics[width=.7\columnwidth]{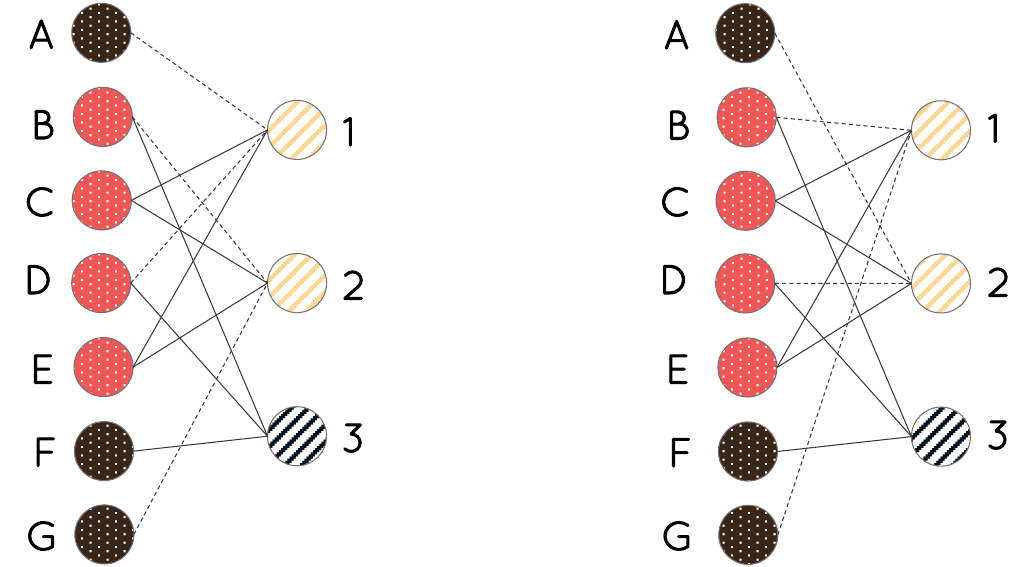}
  \caption{The RBSO denoted with dashed edges transforms the left graph into the right graph. Different patterns denote nodes on different sides of the graph, while different colors denote different degrees.}%
  \label{fig:rbso}
\end{figure}

Any RBSO on a matrix $M$ preserves $\bjdm{M}$, and any RBSO can be seen as a
sequence of RSOs. For any RSO $(a,c), (b,d) \rightarrow (a,d), (b,c)$ on $M$
there is an equivalent RBSO $(a, b, (Z_a(M,b) \setminus \{c\}) \cup \{ d\})$ from
$M$, and thus the graph $G = (\matrices, E)$ is also strongly connected, as it has
all the edges which are created by RSOs, plus potentially others.

\begin{fact}\label{fact:inverserbso}
  Let $(a, b, U)$ be a cRBSO (resp.~rRBSO) from $M$ to $M' \in \neigh{M}$ with $M'
  \neq M$. Then $(a, b, Z_a(M,b))$ is a cRBSO (resp.~rRBSO) from $M'$ to $M$.
\end{fact}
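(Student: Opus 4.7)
The plan is to verify the statement by directly unpacking \cref{def:rbso}, treating only the rRBSO case (the cRBSO case will follow by the symmetric argument on columns). I would proceed in two stages: first, show that $(a,b,Z_a(M,b))$ satisfies the validity requirements of an rRBSO on $M'$; second, verify entrywise that applying this rRBSO to $M'$ recovers $M$.

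For the first stage, since any RBSO preserves row sums (more strongly, the entire BJDM), rows $a$ and $b$ still share the same row sum in $M'$. The key observation enabling everything else is that the symmetric-difference support of rows $a$ and $b$ is invariant under the rRBSO, i.e.,
\[
  Z_a(M',b)\cup Z_b(M',a) \;=\; Z_a(M,b)\cup Z_b(M,a) \enspace .
\]
I would prove this by a case split on a column index $q$: if $q$ lies outside $Z_a(M,b)\cup Z_b(M,a)$, then $M[a,q]=M[b,q]$ and the rRBSO leaves both entries unchanged, so $M'[a,q]=M'[b,q]$; if $q$ lies inside, \cref{def:rbso} forces $M'[a,q]$ and $M'[b,q]$ to be exactly opposite, determined by membership in $U$. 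Consequently $Z_a(M',b)=U$ and $Z_b(M',a)=(Z_a(M,b)\cup Z_b(M,a))\setminus U$, so $Z_a(M,b)$ is a subset of $Z_a(M',b)\cup Z_b(M',a)$ of the required cardinality $\card{Z_a(M',b)}$, making $(a,b,Z_a(M,b))$ a valid rRBSO on $M'$.

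For the second stage, let $M''$ be the result of applying $(a,b,Z_a(M,b))$ to $M'$. For any $q$ outside the common symmetric-difference support, $M''$ inherits the entries of $M'$, which already coincide with those of $M$. For $q\in Z_a(M,b)$ the definition forces $M''[a,q]=1$ and $M''[b,q]=0$, matching $M$; for $q$ in the complementary part $Z_b(M,a)$ of the symmetric-difference support, the definition forces $M''[a,q]=0$ and $M''[b,q]=1$, again matching $M$. Hence $M''=M$, and the reversal is complete.

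The only real obstacle is bookkeeping: disentangling the four index sets $Z_a(M,b)$, $Z_b(M,a)$, $Z_a(M',b)$, $Z_b(M',a)$ and re-expressing the latter two in terms of $U$. Once the invariance of their union under the rRBSO is established, both the validity of the triple $(a,b,Z_a(M,b))$ and the fact that it undoes the original operation fall out immediately; the cRBSO case is identical after transposing the roles of rows and columns.
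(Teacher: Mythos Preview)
Your proposal is correct. In the paper this statement is recorded as a \emph{Fact} with no accompanying proof, so there is nothing to compare your argument against; your verification via \cref{def:rbso}---establishing the invariance $Z_a(M',b)\cup Z_b(M',a)=Z_a(M,b)\cup Z_b(M,a)$ and the explicit identification $Z_a(M',b)=U$, then checking entrywise that the second rRBSO undoes the first---is exactly the routine unpacking the authors leave implicit. One small point you might make explicit for completeness: the definition of an rRBSO requires rows $a$ and $b$ to be \emph{different} (not merely distinct indices) in the source matrix, and your computation $Z_a(M',b)=U\neq\emptyset$ already certifies this for $M'$, so the validity check is in fact complete.
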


\begin{lemma}\label{lem:rbsonumber}
  There are either one or two RBSOs from $M \in \matrices$ to $M' \in \neigh{M}$ with
  $M' \neq M$. When there are \emph{two} RBSOs, one is a cRBSO and the other is
  a rRBSO.%
\end{lemma}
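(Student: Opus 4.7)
The plan is to bound separately the number of rRBSOs and the number of cRBSOs from $M$ to $M'$, show each is at most one, and then combine. The key structural observation I rely on is that a rRBSO $(a,b,U)$ only modifies entries of $M$ lying in rows $a$ and $b$, and that, by direct inspection of \cref{def:rbso}, every RBSO preserves both the individual row sums and the individual column sums of $M$ (not merely their multisets). The analogous statement for cRBSOs is obtained by exchanging the roles of rows and columns.

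First I would prove that at most one rRBSO takes $M$ to $M'$. Suppose $(a,b,U)$ and $(a',b',U')$ are two such rRBSOs. If $\{a,b\}=\{a',b'\}$, then one can read $U$ directly off $M'$ via $U = \{q \in Z_a(M,b) \cup Z_b(M,a) \suchthat M'[a,q]=1\}$, so $U=U'$ and the two operations coincide. Otherwise $\card{\{a,b\}\cap\{a',b'\}}\le 1$: the row-support of $M'-M$ is contained in $\{a,b\}$ by the first rRBSO and in $\{a',b'\}$ by the second, hence in their intersection, so $M'$ differs from $M$ in at most one row. But since RBSOs preserve the column sums, a difference confined to a single row forces $M = M'$, contradicting the hypothesis $M'\neq M$. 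Exchanging rows and columns, the same argument gives at most one cRBSO from $M$ to $M'$.

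Combining the two bounds shows the total number of RBSOs from $M$ to $M'$ is at most two, and that if it equals two then exactly one is a rRBSO and the other a cRBSO. Finally, since $M' \in \neigh{M}$ with $M' \neq M$, the construction of the graph $G$ in \cref{sec:sampling:curve} ensures at least one RBSO exists from $M$ to $M'$, yielding the desired "one or two" conclusion. The only step that could have been subtle is the uniqueness in the case $\{a,b\}\neq\{a',b'\}$, but once one notices that rRBSOs affect only two rows while preserving column sums, excluding a second distinct rRBSO reduces to the trivial observation that a single-row change cannot preserve column sums; no delicate combinatorics is required.
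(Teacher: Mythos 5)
Your proof is correct, but it follows a genuinely different route from the paper's. The paper's proof is built on \cref{fact:inverserbso}: it deduces the ``at most one rRBSO and at most one cRBSO'' bound from the invertibility of RBSOs, and establishes the mixed-type claim by explicitly constructing, from a cRBSO $(a,b,\{c\})$ with singleton $U$, the companion rRBSO $(c,(Z_a(M,b)\cup Z_b(M,a))\setminus\{c\},\{a\})$ to the same target (this is why, as noted right after the lemma, the two-RBSO case forces $\card{Z_a(M,b)}=\card{Z_b(M,a)}=1$). You instead bound the count of each type directly: an rRBSO touches only two rows while preserving every individual column sum, so two rRBSOs on distinct row pairs would confine $M'-M$ to a single row and hence force $M'=M$, while two rRBSOs on the same row pair must use the same $U$ since $U$ is read off $M'$. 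Your argument is more self-contained and makes the uniqueness step fully explicit, where the paper's appeal to \cref{fact:inverserbso} is terse; the paper's version, on the other hand, yields as a by-product the exact structural characterization of when two RBSOs coexist. One minor point: in your ``same row pair'' case you should note that $(a,b,U)$ and $(b,a,(Z_a(M,b)\cup Z_b(M,a))\setminus U)$ are the same operation up to relabeling, so reading $U$ off $M'$ determines the rRBSO only once the ordered pair is fixed; this is a labeling convention, not a gap.
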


\begin{proof}
  Let us start from the second part of the thesis. If $(a, b, \{c\})$ is a
  cRBSO (resp.~rRBSO) from $M$ to $M'$, then
  \[
    (c, (Z_a(M,b) \cup Z_b(M,a)) \setminus \{c\}, \{a\})
  \]
  is a rRBSO (resp.~cRBSO) from $M$ to $M'$.

  The fact that there can only be one or two RBSOs is a consequence of
  \cref{fact:inverserbso}.
\end{proof}

In order for two RBSOs from $M$ to $M'$ to exist, it is necessary that
$\card{Z_a(M,b)} = \card{Z_b(M,a)} = 1$, the columns at indices $a$ and $b$
have the same sum, and the rows at indices $c$ and $(Z_a(M,b) \cup Z_b(M,a)) \setminus
\{c\}$ have the same sum.

\begin{corollary}\label{corol:samenumber}
  For any two $M$ and $M'$, there is the same number of RBSOs from $M$ to $M'$
  as from $M'$ to $M$.
\end{corollary}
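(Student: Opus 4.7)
The plan is to exhibit an explicit involutive bijection between the set of RBSOs from $M$ to $M'$ and the set of RBSOs from $M'$ to $M$, from which the equality of cardinalities is immediate. The case $M = M'$ is trivial, so assume $M' \neq M$.

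First I would define the map $\Phi$ that sends an RBSO $(a,b,U)$ from $M$ to $M'$ to the triple $(a, b, Z_a(M,b))$. By \cref{fact:inverserbso}, $\Phi((a,b,U))$ is itself an RBSO from $M'$ to $M$, and it is a cRBSO (resp.\ rRBSO) whenever $(a,b,U)$ is. Hence $\Phi$ is a well-defined map from RBSOs-from-$M$-to-$M'$ into RBSOs-from-$M'$-to-$M$, and by \cref{lem:rbsonumber} it also preserves the type (c or r) of the operation.

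Next I would check that $\Phi$ is an involution, where the analogous map in the reverse direction is denoted $\Phi'$: I want $\Phi' \circ \Phi$ to be the identity. Unpacking \cref{def:rbso} for the matrix $M'$ produced from $M$ by the cRBSO $(a,b,U)$, a direct inspection of the updated columns $a$ and $b$ gives $Z_a(M', b) = U$ and $Z_b(M',a) = (Z_a(M,b) \cup Z_b(M,a)) \setminus U$; the same bookkeeping goes through for rRBSOs with rows in place of columns. Therefore
\[
  \Phi'(\Phi((a,b,U))) = \Phi'((a, b, Z_a(M,b))) = (a, b, Z_a(M', b)) = (a,b,U),
\]
and symmetrically $\Phi \circ \Phi' = \mathrm{id}$. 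Hence $\Phi$ is a bijection between the two sets of RBSOs, so they have the same cardinality.

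The main (and really only) obstacle is keeping the index-bookkeeping in \cref{def:rbso} straight: one must verify that the sets $Z_a(\cdot, b)$ and $Z_b(\cdot, a)$ swap cleanly along the partition induced by $U$ when passing from $M$ to $M'$. Once this is done for the cRBSO case, the rRBSO case is obtained by transposition, so no additional work is needed.
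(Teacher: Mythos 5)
Your proof is correct. It rests on the same key ingredient the paper uses, namely \cref{fact:inverserbso}, which supplies the candidate inverse operation $(a,b,Z_a(M,b))$, but you complete the argument differently: the paper's corollary is meant to follow from \cref{fact:inverserbso} \emph{together with} \cref{lem:rbsonumber} (the count in each direction is one or two, and when it is two there is exactly one operation of each type, so the type-preserving inverse map forces the counts to agree), whereas you bypass the cardinality bound entirely by verifying that the map $(a,b,U)\mapsto(a,b,Z_a(M,b))$ is an involution, via the bookkeeping identity $Z_a(M',b)=U$ and $Z_b(M',a)=(Z_a(M,b)\cup Z_b(M,a))\setminus U$, which is indeed what \cref{def:rbso} gives. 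Your route is slightly more work but also more robust: it would prove the statement even if there could be arbitrarily many RBSOs between two states, and it does not depend on \cref{lem:rbsonumber} at all. One trivial slip: the preservation of the operation's type (cRBSO vs.\ rRBSO) is asserted by \cref{fact:inverserbso} itself, not by \cref{lem:rbsonumber}; in any case that property is not needed for your bijection argument, only the involutivity is.
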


Let us now give the procedure to sample a neighbor $M' \in \neigh{M}$ of $M$.
The procedure is similar to the one for \algorso. First, we flip a fair coin. If
the outcome is \emph{heads}, we draw a row sum $1 \le m \le \card{\items}$
with probability
as per \cref{eq:row-probability},
%\[
%  \slfrac{\binom{\card{R_m}}{2}}{\sum_{j=1}^{\card{\items}}\binom{\card{R_j}}{2}},
%\]
and then we draw a pair $(a,b)$ of different row indices in $R_m$ uniformly at
random between such pairs. If the row of index $a$ and the row of index $b$ in
$M$ are identical, then we set $M' = M$. Otherwise, we compute the set $Z_a(M,b) \cup
Z_b(M,a)$ defined in \cref{def:rbso} and the cardinality $\card{Z_a(M,b)}$ with a linear
scan of the rows $a$ and $b$. By using reservoir sampling~\citep{Vitter85}, we obtain $U$ through a
linear scan of $Z_a(M,b) \cup Z_b(M,a)$. If the outcome of the coin flip is \emph{tails},
we first draw a column sum $1 \le n \le \card{\dataset}$ with probability
as per \cref{eq:col-probability},
%\[
%  \slfrac{\binom{\card{C_n}}{2}}{\sum_{j=1}^{\card{\dataset}}\binom{\card{C_j}}{2}},
%\]
then we draw a pair $(a,b)$ of different column indices in $C_n$ uniformly at
random between such pairs. We then proceed in a fashion similar as for the row
case. The purpose of flipping the coin at the start is to ensure that we can
sample both rRBSOs (when the outcome is heads), and cRBSOs (otherwise).

The probability $\neighdistr{M}(M')$ of sampling
a RBSO $(a, b, U)$ on $M$ that results in $M'$, is not uniform. Rather than
giving the expression for it, we use the fact that, in order to use MH, we
really only need the distribution $\statdistr$ over $\matrices$, and the
\emph{ratio} $\sfrac{\neighdistr{M'}(M)}{\neighdistr{M}(M')}$
(see~\cref{eq:mh}), and we now show that $\neighdistr{M}(M') =
\neighdistr{M'}(M)$, i.e., the ratio is always 1.

\begin{lemma}
  Let $M \in \matrices$ and $M' \in \neigh{M}$. Then $\neighdistr{M}(M') =
  \neighdistr{M'}(M)$.
\end{lemma}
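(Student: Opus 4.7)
The plan is to show that every RBSO from $M$ to $M'$ (possibly more than one, by \cref{lem:rbsonumber}) is generated with the same probability as its inverse RBSO from $M'$ to $M$ (guaranteed to exist by \cref{fact:inverserbso}), and then to sum these equal contributions using \cref{corol:samenumber}. Since the procedure starts with a fair coin flip to decide between sampling a rRBSO and a cRBSO, I would handle the rRBSO case explicitly and argue the cRBSO case by symmetry.

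First I would write out the probability of sampling a specific rRBSO $(a,b,U)$ on $M$ that yields $M'$: it factors as $\tfrac{1}{2} \cdot \beta(m) \cdot \binom{\card{A_m}}{2}^{-1} \cdot \binom{\card{Z_a(M,b) \cup Z_b(M,a)}}{\card{Z_a(M,b)}}^{-1}$, where $m$ is the common row sum of rows $a$ and $b$. The key observation is that any rRBSO preserves all row sums and all column sums of $M$ (a quick inspection of \cref{def:rbso} confirms this, since the operation only reshuffles 1s between rows $a$ and $b$ within the columns in $Z_a(M,b) \cup Z_b(M,a)$, and $\card{U}=\card{Z_a(M,b)}$). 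Therefore the sets $A_{m'}$ (and $B_{n'}$) are identical for $M$ and $M'$, whence $\beta$ and $\gamma$ are unchanged as well.

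Next I would apply \cref{fact:inverserbso} to identify the inverse rRBSO from $M'$ to $M$, namely $(a, b, Z_a(M,b))$. To match the probability factor involving $U$, I only need to verify that $Z_a(M',b) \cup Z_b(M',a) = Z_a(M,b) \cup Z_b(M,a)$ and $\card{Z_a(M',b)} = \card{Z_a(M,b)}$. The first equality follows by direct inspection of which entries of rows $a$ and $b$ differ in $M'$ (exactly those in the original symmetric difference), and the second is immediate since row sums are preserved. Consequently the probability of the inverse rRBSO on $M'$ coincides with that of $(a,b,U)$ on $M$. The symmetric argument, using the tails branch of the coin flip and \cref{eq:col-probability}, handles cRBSOs.

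Finally, to aggregate, I would invoke \cref{corol:samenumber}: the number of RBSOs from $M$ to $M'$ equals that from $M'$ to $M$, and by \cref{lem:rbsonumber} when this number is two the pair consists of one rRBSO and one cRBSO in each direction, so the pairing between RBSOs and their inverses respects the rRBSO/cRBSO type. Summing the per-RBSO equalities yields $\neighdistr{M}(M') = \neighdistr{M'}(M)$. The only mildly delicate step is the bookkeeping of the symmetric-difference set and the cardinality of $U$ under the RBSO: getting this right is what makes the per-RBSO probabilities match, and everything else is essentially a consequence of preservation of row and column sums combined with the symmetry built into the sampling procedure.
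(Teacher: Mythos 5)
Your proposal is correct and follows essentially the same route as the paper's proof: both rest on \cref{fact:inverserbso}, on the preservation of row and column sums (so that $\beta$ and $\gamma$ are unchanged), and on the key bookkeeping facts $Z_a(M,b) \cup Z_b(M,a) = Z_a(M',b) \cup Z_b(M',a)$ and $\card{U} = \card{Z_a(M,b)}$, which make the per-RBSO sampling probabilities match. The paper presents the single-cRBSO case and dismisses the rRBSO and combined cases as analogous, whereas you present the rRBSO case and spell out the aggregation over the one or two type-respecting RBSO pairs via \cref{corol:samenumber}; this is a presentational difference only.
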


\begin{proof}
  We assume that $M' \neq M$, otherwise the thesis is obviously true. For ease
  of presentation, we focus on the case where there is only a cRBSO $(a, b, U)$
  from $M$ to $M'$. The analysis for the case when there is only a rRBSO follows
  the same steps, and the one for the case when there is both a cRBSO and a
  rRBSO follows by combining the two cases.

  From \cref{fact:inverserbso}, the cRBSO $(a, b, Z_a(M,b))$ goes from $M'$ to $M$.
  The probability that the coin flip is tails is the same no matter whether the
  current state is $M$ or if it is $M$, as is the probability, given that the
  outcome was tails, of sampling the columns indices $a$ and $b$. By definition,
  it holds that $\card{U} =  \card{Z_a(M,b)}$, and it is easy to see that
  $Z_a(M,b) \cup Z_b(M,a) = Z_a(M',b) \cup Z_b(M',a)$, thus the probability of
  sampling $U$ when the current state is $M$ and we have sampled $a$ and $b$,
  and the probability of sampling $Z_a(M,b)$ when the current state is $M'$ and
  we have sampled $a$ and $b$ are the same. Thus, the probability of sampling
  $(a, b, U)$ when the current state is $M$ is the same as the probability of
  sampling $(a, b, Z_a(M,b))$ when the current state is $M'$, and the proof is
  complete.
\end{proof}

Thus, to use MH, we really only need the distribution $\statdistr$ over
$\matrices$. As in \cref{sec:sampling:swaps}, in order to sample a dataset $D
\in \nullset$ according to $\nullprob$, we want to sample a matrix $M \in
\matrices$ with the probability given in~\cref{eq:statdistrrso}. We thus have
all the ingredients to use MH, and our description of \algocurve is complete.
\rev{Note that \algocurve follows the same structure presented in \Cref{alg:alice} but samples a rRBSO $(a,b,U)$ at line \ref{line:rsoa}:}
\begin{algorithm}[hb]
    \begin{algorithmic}[1]
    \small
    \Statex \rev{$U \subset Z_a(M,b) \cup Z_b(M,a)$ s.t. $\card{U} = \card{Z_a(M,b)}$ obtained via reservoir sampling}
    \end{algorithmic}
\end{algorithm}

\rev{and a cRBSO $(c,d,U)$ at line \ref{line:rsob}:}
\begin{algorithm}[hb]
    \begin{algorithmic}[1]
    \small
    \Statex \rev{$U \subset Z_c(M,d) \cup Z_d(M,c)$ s.t. $\card{U} = \card{Z_c(M,d)}$ obtained via reservoir sampling}
    \end{algorithmic}
\end{algorithm}
% !TEX root = ../main-kais.tex
\section{Sequence Datasets}\label{sec:seq}

Previous work studied null models for testing the statistical
significance of results obtained from other kinds of datasets, such as
sequence datasets~\citep{TononV19,PinxterenC21,JenkinsWGR22,LowKRKP13}. We now
define a new null model for sequence datasets to also preserve the
BJDM, and we introduce a new algorithm \algoseq to sample from this null model.

\subsection{Preliminaries on sequence datasets and multi-graphs}\label{sec:seq:prelims}

Let us start with a brief description of sequence datasets and related concepts.
A \emph{sequence} is a finite \emph{ordered list} (or a \emph{vector}) of
not-necessarily-distinct itemsets, i.e., $S = \seq{A_1, \dotsc, A_\ell}$ for
some $\ell \geq 1$, with $A_i \subseteq \items$, $1 \leq i \leq \ell$. Itemsets
$A_i$ \emph{participate} in $S$, and we denote this fact with $A_i \in S$, $1
\leq i \leq \ell$. The \emph{length} $\card{S}$ of a sequence is the number of
itemsets participating in it. A \emph{sequence dataset} $\dataset$ is a finite
bag of sequences, which, as elements of $\dataset$, are known as
\emph{seq-transactions}. The \emph{support} $\supp{\dataset}{A}$ of an itemset
$A$ in $\dataset$ is the number of seq-transactions of $\dataset$ in which $A$
participates. The \emph{multi-support} $\msupp{\dataset}{A}$ of $A$ in
$\dataset$ is the number of times that $A$ participates \emph{in total} in the
seq-transactions of $\dataset$. For example, in the dataset $\dataset = \{
\seq{A,B}, \seq{A,C,A}, \seq{B,C} \}$, it holds that $\supp{\dataset}{A} = 2$ and
$\msupp{\dataset}{A} = 3$.

\begin{figure}
  \centering
  \includegraphics[width=.75\textwidth]{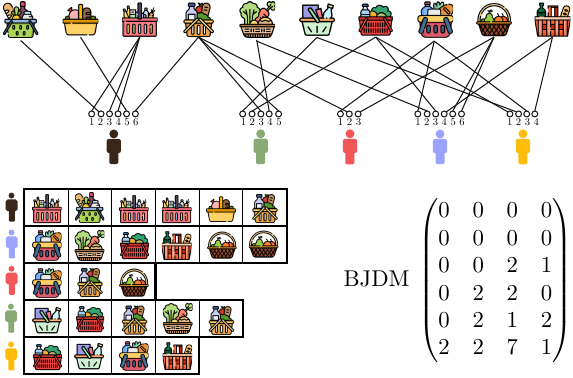}
  \caption{Example of sequence dataset (lower left), corresponding multi-graph
    (top), and BJDM of the multi-graph (lower right).}\label{fig:seqex}
\end{figure}

A sequence dataset $\dataset$ can be represented as a bipartite \emph{multi-}graph
$G_\dataset = (L \cup R, E)$, where $L$ are the seq-transactions of
$\dataset$, and $R$ is the \emph{set} of all and only the
itemsets with support at least $1$ in $\dataset$, i.e., participating in at
least one seq-transaction of $\dataset$. Each vertex $v \in L$ has
degree\footnote{In multi-graphs, the degree of a vertex $v$ is still the number
of edges incident to it, so each edge is counted, even if multiple edges connect
$v$ to the same vertex.} equal to the length of the corresponding
seq-transaction $S_v$ of $\dataset$, i.e., $\degree{v} = \card{S_v}$. Each
vertex $v \in L$ has $\degree{v}$ \emph{ports}, which can be thought as the
``locations'' where the edges ``connect'' to $v$. The ports are arbitrarily
labeled from $1$ to $\degree{v}$. This labeling is needed to define the edge
\emph{multi-}set $E$ as follows: there is an edge between $v \in L$ and $w \in
R$ using port $k$ of $v$ iff the itemset $B_w$ corresponding to the vertex $w$
appears in position $k$ of $S_v$, i.e., iff $S_v = \seq{A_1, \dotsc, A_{k-1},
B_w, A_{k+1}, \dotsc, A_{\card{S_v}}}$. We denote this edge as $(v,k,w)$, thus
$E$ can also be thought as a set of such tuples. To the best of our knowledge,
the one we just gave is the first description of sequence datasets as bipartite
multi-graphs, which is somewhat surprising because representing transactional
datasets as bipartite graphs has been a standard practice for a long time.

The definition of BJDM from \cref{def:bjdm} is also valid for multi-graphs.
\Cref{fig:seqex} shows an example of a sequence dataset (lower left), the
corresponding multi-graph (top), and its BJDM (lower right).

\subsection{BJDM-preserving null model for sequence datasets}\label{sec:seq:null}

\Citet{TononV19} introduce a null model $\nullmodel = (\nullset, \nullprob)$ for
sequence datasets that can be seen as an adaptation of \citet{GionisMMT07}'s
null model for transactional datasets. It preserves the following two properties
of an observed dataset $\odataset$:

\begin{itemize}
  \item the distribution of the seq-transaction lengths, i.e., for any
    seq-transaction length $\ell \in [1, \max_{S \in \odataset} \card{S}]$, any
    $\dataset \in \nullset$ contains the same number of transactions of length
    $\ell$ as $\odataset$; and
  \item the multi-support of the itemsets participating in the
    seq-transactions of $\odataset$, i.e., for any $A \subseteq \items$ and
    $\dataset \in \nullset$, $\msupp{\odataset}{A} =
    \msupp{\dataset}{A}$.
\end{itemize}

It should be evident how these two properties can be mapped to the two
fundamental properties defined in \cref{sec:prelims:hyptest} for transactional
datasets, with the difference that itemsets participating in seq-transactions
play the role that was of items in transactional datasets. \Citet{TononV19} gave
a MCMC algorithm to sample from this null model, while \citet{JenkinsWGR22} gave
an exact sampling algorithm.

The null model we define for sequence datasets preserves the BJDM of the
multi-graph corresponding to the observed dataset.
The following property can be derived in a way similar to that from
  \cref{corol:preserved}, and confirms that preserving the BJDM
  also preserves the two above properties.

\begin{corollary}\label{corol:preserved-seq}
  For any sequence dataset $\dataset$, the BJDM $\bjdm{G_\dataset}$ determines,
  for every $1 \le j \le \max_{S \in \dataset} \card{S}$, the number of
  seq-transactions in $\dataset$ with length $j$. Also, it determines, for every
  $1 \le i \le \card{\dataset}$, the number of itemsets with multi-support $i$ in
$\dataset$.
\end{corollary}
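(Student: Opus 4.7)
The plan is to mirror, essentially verbatim, the argument that yielded \cref{corol:preserved} in the simple-graph case, after first noting that \cref{fact:degree} carries over unchanged to bipartite multi-graphs. The counting identities~\eqref{eq:rightdegree} and~\eqref{eq:leftdegree} are proved by double-counting edge endpoints on each side of the bipartition: a vertex of degree $d$ contributes exactly $d$ edge-endpoints, regardless of whether some of those edges are parallel. Since the BJDM of a multi-graph (in the sense of \cref{def:bjdm}, applied to the edge multi-set $E$) also counts edges with multiplicity, the same summation gives $j \cdot \card{\{v \in R \suchthat \degree{v}=j\}}$ on the column side and $i \cdot \card{\{u \in L \suchthat \degree{u}=i\}}$ on the row side. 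I would state this as a short remark before invoking \cref{fact:degree} in the multi-graph setting.

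The second step is to translate left- and right-vertex degrees in $G_\dataset$ back into the quantities mentioned in the statement, using the construction of $G_\dataset$ from \cref{sec:seq:prelims}. For a left vertex $v \in L$ corresponding to seq-transaction $S_v$, the definition gives one edge per port of $v$, hence $\degree{v}=\card{S_v}$. For a right vertex $w \in R$ corresponding to itemset $B_w$, each occurrence of $B_w$ at some position $k$ of some seq-transaction $S_v$ contributes exactly one edge $(v,k,w)$ to $E$, so $\degree{w}=\msupp{\dataset}{B_w}$. Hence the number of seq-transactions of length $j$ equals the number of left vertices of degree $j$, and the number of itemsets with multi-support $i$ equals the number of right vertices of degree $i$.

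Putting the two ingredients together, applying~\eqref{eq:leftdegree} with $i=j$ in the range $1 \le j \le \max_{S \in \dataset}\card{S}$ yields the first part of the corollary, and applying~\eqref{eq:rightdegree} with $j=i$ in the range $1 \le i \le \card{\dataset}$ yields the second part, since each right-vertex degree is at most the number of seq-transactions it participates in (counted with multiplicity of occurrence), which is bounded by $\card{\dataset}$ times the maximum length, but in the statement we only need to identify, for every such $i$, the number of itemsets with multi-support exactly $i$, and that quantity is read directly off the column sums of $\bjdm{G_\dataset}$.

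I do not anticipate a real obstacle here: the only thing worth being explicit about is the extension of \cref{fact:degree} to multi-graphs, which is purely bookkeeping once one confirms that the definition of BJDM and of vertex degree in \cref{sec:seq:prelims} treat parallel edges consistently. Everything else is a direct rereading of \cref{corol:preserved} in the vocabulary of seq-transaction lengths and itemset multi-supports.
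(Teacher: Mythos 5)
Your proposal is correct and matches the paper's intended argument: the paper itself offers no explicit proof, stating only that the corollary ``can be derived in a way similar to that from \cref{corol:preserved}'', i.e., by applying \cref{fact:degree} to the multi-graph $G_\dataset$ and identifying left-vertex degrees with seq-transaction lengths and right-vertex degrees with itemset multi-supports, exactly as you do. Your explicit remark that \cref{fact:degree} carries over to multi-graphs because both the degree and the BJDM count parallel edges with multiplicity is a worthwhile addition rather than a deviation.
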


On the other hand, it is not true that preserving the BJDM also preserves the
number of caterpillars on multi-graphs, i.e., there is no equivalent of
\cref{lem:caternum,corol:caternum}. The reason is that the BJDM does not encode
information that allows the distinction between simple and multiple edges, i.e.,
the fact that a vertex with degree $x$ may have any number of neighbors between
1 and $x$. It is also easy to come up with examples showing that it is not true
that preserving the BJDM on multi-graphs preserves the number of
\emph{not-necessarily simple} paths of length three composed of three distinct
edges.
For instance, the multi-graph in \cref{fig:mrso-counter} (left) includes the following $10$ paths of length three: $(\beta,3,D)-(\beta,1,B)-(\alpha,3,B)$, $(\beta,3,D)-(\beta,1,B)-(\alpha,2,B)$, $(\beta,2,C)-(\beta,1,B)-(\alpha,3,B)$, $(\beta,2,C)-(\beta,1,B)-(\alpha,2,B)$, $(\beta,1,B)-(\alpha,3,B)-(\alpha,2,B)$, $(\beta,1,B)-(\alpha,2,B)-(\alpha,3,B)$, $(\beta,1,B)-(\alpha,3,B)-(\alpha,1,A)$, $(\beta,1,B)-(\alpha,2,B)-(\alpha,1,A)$, $(\alpha,3,B)-(\alpha,2,B)-(\alpha,1,A)$, and $(\alpha,2,B)-(\alpha,3,B)-(\alpha,1,A)$.
The multi-graph on the right, which can be obtained by applying the mRSO $(\alpha,1,A), (\beta,1,B) \rightarrow (\alpha,1,B), (\beta,1,A)$ has the same BJDM but only six paths of length three: $(\alpha,1,B)-(\alpha,2,B)-(\alpha,3,B)$, $(\alpha,1,B)-(\alpha,3,B)-(\alpha,2,B)$, $(\alpha,2,B)-(\alpha,1,B)-(\alpha,3,B)$, $(\alpha,2,B)-(\alpha,3,B)-(\alpha,1,B)$, $(\alpha,3,B)-(\alpha,2,B)-(\alpha,1,B)$, and $(\alpha,3,B)-(\alpha,1,B)-(\alpha,2,B)$.

\begin{figure}[htb]
  \centering
  \includegraphics[width=\textwidth, trim={0 60 0 60}, clip]{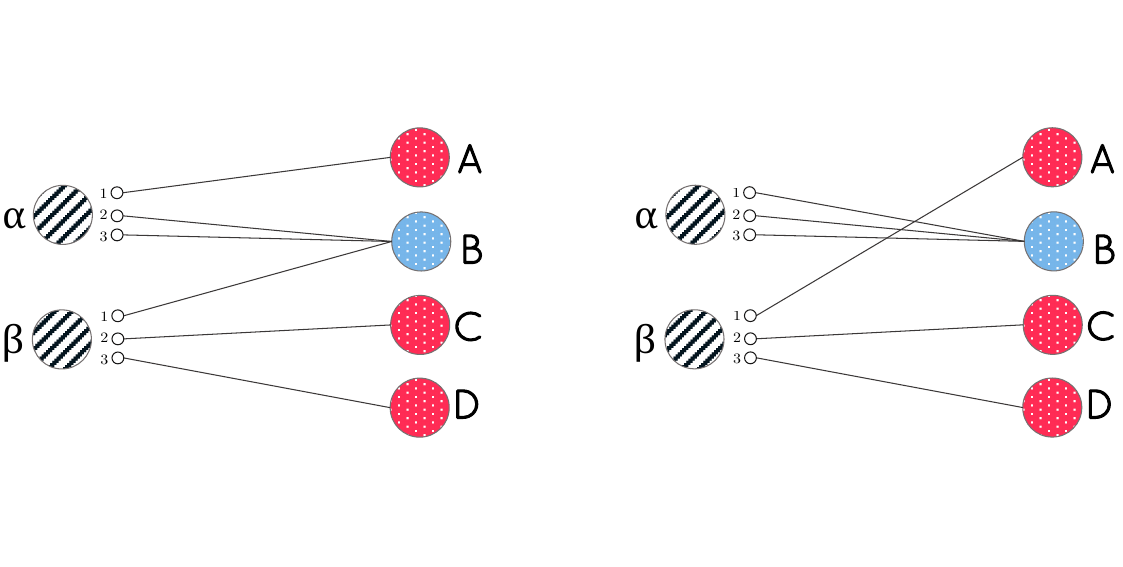}
  \caption{Two bipartite multi-graphs with the same BJDM but different numbers of paths of length three. Different patterns denote nodes on different sides of the multi-graph, while different colors denote different degrees.}
  \label{fig:mrso-counter}
\end{figure}

Nevertheless, since the multi-graph corresponding to a sequence dataset
may actually be a simple graph, preserving the BJDM preserves more structure of
the observed dataset than just the two fundamental properties, as we discussed
for the counterexample from \cref{fig:counter}.

\subsection{\algoseq: \algo for sequence datasets}\label{sec:seq:algo}

We now discuss \algoseq, our algorithm for sampling from the BJDM-preserving
null model for sequence dataset, which was defined in the previous section. Like
the other members of the \algo family, \algoseq also takes the MCMC approach
with MH\@. Its set of states though, is no longer the set $\matrices$ of
biadjacency matrices, but a set $\graphs$ of bipartite multi-graphs defined as follows.
Given the observed sequence dataset $\odataset$, let $G_{\odataset}=(L \cup R,
E)$ be the multi-graph corresponding to it. $\graphs$ contains all and only the
bipartite multi-graphs with node sets $L$ and $R$, and with the same BJDM as
$G_{\odataset}$. We remark that $\graphs$ therefore includes also bipartite
multi-graphs that are isomorphic to each other but differ for the ports to which
the edges are connected, as such graphs represent different sequence
datasets where the order of the itemsets in (some of) the sequences is shuffled.

The reason for not using the set $\matrices$ of biadjacency matrices as the
state space of \algoseq is that a biadjacency matrix does not capture the
entirety of the structure of a multi-graph corresponding to a sequence
dataset, as it does not encode the information about the ports. It is important
to understand that \algorso could have been easily presented in
\cref{sec:sampling:swaps} with a state space composed of graphs, rather than
biadjacency matrices. We chose not to do that because the presentation of
\algocurve greatly benefits from using matrices, although even in this case we
could have used graphs, given that in the simple graph case, there is a
bijection between bipartite graphs and biadjacency matrices. The flow and the
notation in the following presentation of \algoseq are similar to the one for
\algorso, to highlight the many similarities between the two algorithms, but
there are also many crucial differences.

We now define the concept of \emph{multi-graph Restricted Swap Operation (mRSO)}
as an operation that is applied to a multi-graph $G$ to obtain another
multi-graph $G'$.

\begin{definition}[multi-graph Restricted Swap Operation (mRSO)]\label{def:mrso}
  Let $G = (L \cup R, E)$ be a multi-graph, $a$ and $b$ be two
  non-necessarily distinct vertices in $L$, and $c$ and $d$ be two distinct
  vertices in $R$, such that there exist a port $x$ of $a$ and a port $y$ of $b$
  such that
  \[
    \{ (a,x,c), (b,y,d) \} \subseteq E \wedge  ( \degree{a} = \degree{b} \vee
    \degree{c} = \degree{d}) \enspace.
  \]
  The mRSO $(a,x,c), (b,y,d) \rightarrow (a,x,d), (b,y,c)$ is an operation that
  transforms $G$ into the multi-graph $G' = (L \cup R, E')$ such that $E' = (E
  \setminus \{ (a,x,c), (b,y,d) \}) \cup \{ (a,x,d), (b,y,c) \}$.
\end{definition}

It is easy to see that the multi-graph $G'$ obtained by applying an mRSO to $G$
is such that $\bjdm{G'} = \bjdm{G}$. There are zero or one mRSO between any two
multi-graphs in $\graphs$.
As an example, the mRSO $(\alpha,1,C), (\beta,4,E) \rightarrow (\alpha,1,E), (\beta,4,C)$ transforms the graph in \cref{fig:mrso} (left) to the graph on the right of such figure.
Here, patterns denote the side of nodes on the graph, and colors denote different degrees.
Dotted edges are the ones involved in the mRSO.%

\begin{figure}[htb]
  \centering
  \includegraphics[width=\textwidth]{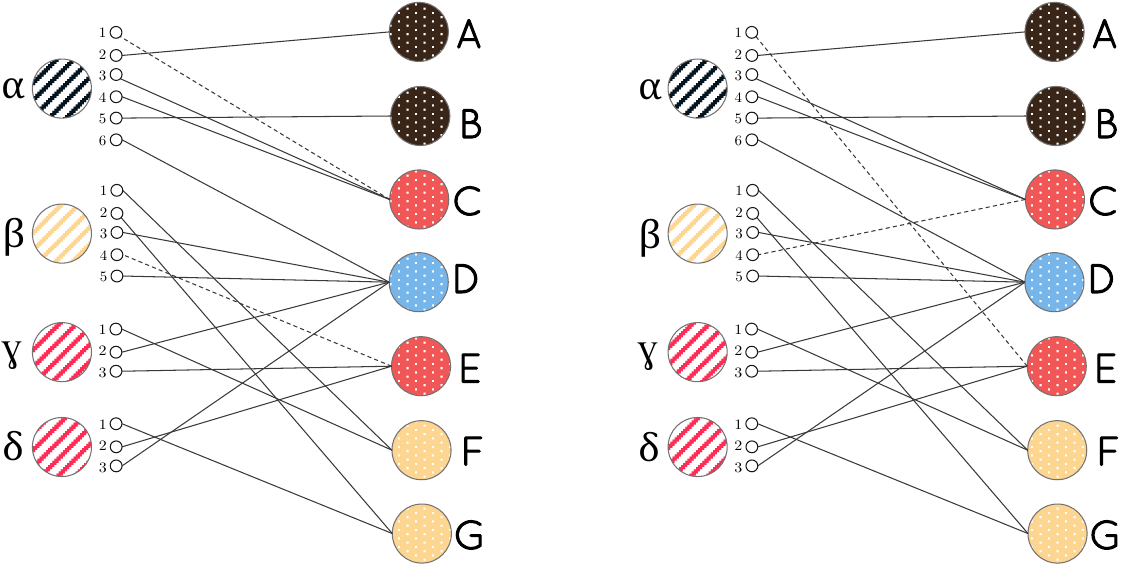}
  \caption{Example of an mRSO\@. Dotted edges are edges involved in the mRSO, different patterns denote nodes on different sides of the graph, and different colors denote different degrees.}\label{fig:mrso}
\end{figure}

The neighborhood structure of the state space $\graphs$ is such that there is an
edge from a multi-graph $G$ to a multi-graph $G'$ iff there is an mRSO
transforming $G$ into $G'$. In addition to these edges, there is a self-loop
from each state to itself. This structure results in a strongly connected space,
as can be seen by straightforwardly adapting \citep[Thm.\
8]{czabarka2015realizations} in a way similar to what was done also for the
bipartite simple graph case discussed in \cref{sec:sampling:swaps}.

We now move to defining the neighbor sampling distribution $\neighdistr{G}$ that
is used to propose the next state $G' \in \neigh{G}$ when the chain is at state
$G \in \graphs$. As in \cref{sec:sampling:swaps}, we first describe how to
sample a neighbor of $G$, and then analyze the resulting distribution.

For any $1 \le m \le \card{R}$ (resp.~$1\le n\le \card{\odataset}$), let $A_m$
(resp.~$B_n)$ be the subset of $L$ (resp.~of $R$) containing all and only the
vertices with degree $m$ in $G$ (but really, in any $G' \in \graphs$). The first
operation to sample a neighbor of $G$, is flipping a fair coin. If the outcome
is \emph{heads}, then we sample a degree $m$ proportional to the number of pairs
of \emph{not-necessarily-distinct} vertices in $L$ with degree $m$, i.e., we
draw  $1 \le m \le \card{R}$ with probability
\[
  \beta(m) = \slfrac{\binom{\card{A_m} + 1}{2}}{\sum_{j=1}^{\card{R}}
  \binom{\card{A_j} + 1}{2}}
\]
and then we draw two vertices $a$ and $b$ by sampling uniformly at random, with
replacement, from $A_m$. By sampling with replacement, we ensure that $a$ and
$b$ may be the same vertex. Consider now the set
\[
  H_{a,b} \doteq \{((a,x,f), (b,y,g)) \suchthat (a,x,f) \in E \wedge (b,y,g) \in
  E \wedge f \neq g\}
\]
of pairs of edges one incident to $a$ and one incident to $b$ and with different
endpoints in $R$, and sample a pair $((a,x,c), (b,y,d))$ uniformly at random
from this set.

If the outcome of the fair coin flip is \emph{tails}, we first sample a degree
$1 \le n \le \card{L}$ proportional to the number of pairs of \emph{distinct}
vertices in $R$ with degree $n$, i.e., we draw $1 \le n \le \card{L}$ with
probability
\[
  \gamma(n) = \slfrac{\binom{\card{B_n}}{2}}{\sum_{j=1}^{\card{\odataset}}
  \binom{\card{B_j}}{2}},
\]
and then we sample two \emph{distinct} vertices $c$ and $d$ from $B_n$ uniformly
at random without replacement. Let now $(a,x,c)$ (resp.~$(b,y,d)$) be an edge
sampled uniformly at random from those incident to $c$ (resp.~to $d$).

The mRSO $(a,x,c), (b,y,d) \rightarrow (a,x,d), (b,y,c)$, when applied to $G$,
gives the neighbor $G'$ which is the proposed next state for the Markov chain.

We now analyze the distribution $\neighdistr{G}$ over $\neigh{G}$ induced by
this procedure. Let $(a,x,c), (b,y,d) \rightarrow (a,x,d),
(b,y,c)$ be the sampled mRSO, and let $G' \in \neigh{G}$ be the multi-graph
obtained by applying this mRSO to $G$. It must be $G' \neq G$. Recall that this
mRSO is the only one leading from $G$ to $G'$. Consider the following events:
\begin{align*}
  E_\mathrm{\ell} &\doteq \text{``} \degree{a} = \degree{b} = m \text{''};\\
  E_\mathrm{r} &\doteq \text{``} \degree{c} = \degree{d} = n \text{''}\enspace.
\end{align*}
There are three possible cases for $\neighdistr{G}(G')$:
\begin{itemize}
  \item If only $E_\mathrm{\ell}$ holds, then
    \begin{equation}\label{eq:neighdistrleft}
      \neighdistr{G}(G') = \frac{1}{2} \frac{1}{\sum_{j=1}^{\card{R}} \binom{\card{A_j} + 1}{2}}
      \frac{1}{H_{a,b}}\enspace.
    \end{equation}
  \item If only $E_\mathrm{r}$ holds, then
    \begin{equation}\label{eq:neighdistrright}
      \neighdistr{G}(G') = \frac{1}{2} \frac{1}{\sum_{j=1}^{\card{\odataset}} \binom{\card{B_j}}{2}}
      \frac{1}{n^2}\enspace.
    \end{equation}
\item If both $E_\mathrm{\ell}$ and $E_\mathrm{r}$ hold, then
  $\neighdistr{G}(G')$ is the sum of the r.h.s.'s of
  \cref{eq:neighdistrleft,eq:neighdistrright}.
\end{itemize}

It is easy to see that $\neighdistr{G}(G') = \neighdistr{G'}(G)$, which, like
for \algorso, greatly simplifies the use of MH\@. As in that case, we define
$\statdistr$ over $\graphs$ as
\[
  \statdistr(G) \doteq \frac{\nullprob(\mattodat{G})}{\copiesnum{\mattodat{G}}},
\]
where $\copiesnum{\mattodat{G}}$ is still as in \cref{eq:copiesnum} because the
same result also holds for sequence datasets under the null model we are
considering~\citep[Lemma 4]{AbuissaLR23}. We can then conclude on the
correctness as follows, with the proof that is the same as that of
\cref{lem:swapcorrectness}.

\begin{lemma}\label{lem:seqcorrectness}
  Let $\dataset \in \nullset$. \algoseq outputs $\dataset$ with probability
  $\nullprob(\dataset)$.
\end{lemma}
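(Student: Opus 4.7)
The plan is to mirror the structure of the proof of \cref{lem:swapcorrectness}, adapting it to the multi-graph state space $\graphs$ instead of the matrix state space $\matrices$. The key insight is that \algoseq is an instance of Metropolis--Hastings on a well-defined Markov chain, and the output dataset $\dataset$ is returned whenever the chain samples any multi-graph in $\graphs$ that corresponds to $\dataset$.

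First, I would invoke the correctness of Metropolis--Hastings to conclude that \algoseq samples a multi-graph $G \in \graphs$ according to $\statdistr$. For this to apply, I need to verify that all the preconditions stated in \cref{sec:prelims:mcmc} hold: the graph over $\graphs$ induced by mRSOs (plus self-loops) is directed, weighted, strongly connected, and aperiodic. Strong connectivity follows from the adaptation of \citep[Thm.~8]{czabarka2015realizations} mentioned in the text, aperiodicity follows from the presence of self-loops at every state, and the symmetry $\neighdistr{G}(G') = \neighdistr{G'}(G)$ established earlier ensures that the acceptance ratio in~\eqref{eq:mh} reduces to $\min\{1, \statdistr(G')/\statdistr(G)\}$, which depends only on $\statdistr$ and makes the application of MH immediate.

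Next, I would translate this matrix-level (here, graph-level) statement into the dataset-level statement in the thesis. By definition, $\statdistr(G) = \nullprob(\mattodat{G})/\copiesnum{\mattodat{G}}$. The algorithm outputs $\dataset$ precisely when it samples any of the multi-graphs in $\graphs$ whose corresponding sequence dataset is $\dataset$. By the extension of \citep[Lemma~3]{AbuissaLR23} to sequence datasets (cited as \citep[Lemma~4]{AbuissaLR23}), there are exactly $\copiesnum{\dataset}$ such multi-graphs, each assigned probability $\nullprob(\dataset)/\copiesnum{\dataset}$ by $\statdistr$. Summing over these multi-graphs yields $\nullprob(\dataset)$, which is the desired conclusion.

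The main subtlety, rather than a serious obstacle, lies in making sure the counting of multi-graphs corresponding to a given sequence dataset truly matches the simple-graph count from \cref{lem:numcopies}; this is precisely what \citep[Lemma~4]{AbuissaLR23} guarantees. Beyond that, the only thing worth double-checking is the aperiodicity argument, because in the multi-graph setting one must confirm that self-loops can actually be produced by the neighbor-sampling procedure with positive probability (e.g., when the sampled rows or columns are identical, or when the sampled pair of vertices coincides in the $E_\ell$ case where we draw with replacement). Once these are in place, the proof is a verbatim rewriting of the proof of \cref{lem:swapcorrectness} with $\graphs$ in place of $\matrices$ and $G$ in place of $M$.
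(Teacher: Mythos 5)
Your proposal is correct and follows essentially the same route as the paper, which simply states that the proof is identical to that of \cref{lem:swapcorrectness}: invoke the correctness of MH to get that a state $G \in \graphs$ is sampled according to $\statdistr$, then observe that $\dataset$ is output whenever any of the $\copiesnum{\dataset}$ corresponding multi-graphs is sampled, so the probabilities $\nullprob(\dataset)/\copiesnum{\dataset}$ sum to $\nullprob(\dataset)$. Your additional checks (strong connectivity, aperiodicity via self-loops, symmetry of $\neighdistr{G}$, and the counting result from \citep[Lemma~4]{AbuissaLR23}) are exactly the ingredients the paper relies on implicitly.
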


\rev{\Cref{alg:alices} reports the operations performed by \algoseq to sample a sequence dataset in $\nullset$.
The algorithm receives in input the bipartite multi-graph $G \in \graphs$ corresponding to the observed sequence dataset $\odataset$ and a number of swaps $s$ sufficiently large for convergence.}
\begin{algorithm}[thb]
    \caption{\rev{\algoseq}}\label{alg:alices}
    \begin{algorithmic}[1]
    \Require Multi-Graph $G \in \graphs$, Number of Swaps $s$
    \Ensure Sequence Dataset $\dataset$ sampled from $\nullset$ with probability $\nullprob(\dataset)$ 
    \State $\copiesnum{\mattodat{G}} \gets $\Cref{eq:copiesnum}
    \State $i \gets 0$
    \While{$i < s$}
      \State $i \gets i + 1$
      \State $\mathsf{out} \gets$ flip a fair coin
      \If{$\mathsf{out}$ is \emph{heads}}
        \State $a$, $b$ $\gets$ vertices in $L$ drawn u.a.r. such that $\degree{a} = \degree{b}$ 
        \State $(a,x,c)$, $(b,y,d)$ $\gets$ pair drawn u.a.r. from $H_{ab}$
      \Else
        \State $c$, $d$ $\gets$ different vertices in $R$ drawn u.a.r. such that $\degree{c} = \degree{d}$
        \State $(a,x,c)$, $(b,y,d)$ $\gets$ edges drawn u.a.r. from those incident to $c$,$d$
      \EndIf
      \State $G' \gets $ perform $(a, x, c), (b, y, d) \rightarrow (a, x, d), (b, y, c)$ on $G$
      \State $\copiesnum{\mattodat{G'}} \gets $\Cref{eq:copiesnum}
      \State $p \gets $ random real number in $[0,1]$
      \State $a \gets \min\left(1, \copiesnum{\dataset}/\copiesnum{\dataset'}\right)$
      \If{$p \leq a$}
        $G \gets G'$
      \EndIf
    \EndWhile
    \State \Return $\mattodat{G}$
    \end{algorithmic}
\end{algorithm}

We leave for future work the development of a Curveball-like approach for
sampling sequence datasets from the null model. \Citet{JenkinsWGR22} propose
other two null models for sequence datasets. Extending these null models to
also preserve the BJDM is an interesting direction for future work.

% !TEX root = ../main-kais.tex
\section{Experimental Evaluation}\label{sec:exper}

We now report on the results of our experimental evaluation of \rev{\algorso, \algocurve, and \algoseq}.
% , as the main takeaways of
% the evaluation of this algorithm would be similar to those for \algorso, given
% the many similarities between the two algorithms, as discussed in
% \cref{sec:seq:algo}.
Our evaluation pursues three goals: empirically study the mixing time of
the sampling algorithms, evaluate their scalability as the number of 
transactions increases, and show that the null model we introduce differs from
that which only preserves the two fundamental properties, by showing that it
leads to marking different hypotheses as significant.

\begin{table}[htb]
\centering
\scriptsize
\caption{Datasets statistics: num.\ of transactions, num.\ of items, sum of
transaction lengths, avg.\ transaction length, density, and number of caterpillars.}%
\label{tbl:datasets}
	\begin{tabular}{lrrrrrr}
	\toprule
	Dataset &  Trans. &  Item &  Sum Trans. &  AVG Trans. &  Density & \rev{Num.}\\
	        & Num & Num & Lengths & Length & & \rev{Cater.} \\
	\midrule
	iewiki & 137 & 558 & 651 & 4.752 &   0.0085 & \rev{10K}\\
	kosarak & 3000 & 5767 & 23664 & 7.888 &   0.0014 & \rev{88M}\\
	chess & 3196 & 75 & 118252 & 37.000 &   0.4933 & \rev{9.93B}\\
	foodmart & 4141 & 1559 & 18319 & 4.424 &   0.0028 & \rev{954K}\\
	db-occ & 10000 & 8984 & 19729 & 1.973 &   0.0002 & \rev{7.5M}\\
	BMS1 & 59602 & 497 & 149639 & 2.511 &   0.0051 & \rev{1.13B}\\
	BMS2 & 77512 & 3340 & 358278 & 4.622 &   0.0014 & \rev{1.96B}\\
	retail & 88162 & 16470 & 908576 & 10.306 &   0.0006 & \rev{60B}\\
	\midrule
	\rev{SIGN} & \rev{730} & \rev{269} & \rev{76646} & \rev{104.994} & \rev{0.3903} & \rev{696M} \\
	\rev{LEVIATHAN} & \rev{5834} & \rev{9027} & \rev{400336} & \rev{68.621} & \rev{0.0076} & \rev{22B} \\
	\rev{FIFA} & \rev{20450} & \rev{2992} & \rev{1502634} & \rev{73.478} & \rev{0.0246} & \rev{159B} \\
 	\rev{BIKE} & \rev{21078} & \rev{69} & \rev{327844} & \rev{15.554} &  \rev{0.2254} & \rev{5.88B}\\
 	\rev{BIBLE} & \rev{36369} & \rev{13907} & \rev{1610501} & \rev{44.282} & \rev{0.0032} & \rev{259B} \\
  \rev{BMS1} & \rev{59601} & \rev{499} & \rev{358877} & \rev{6.021} & \rev{0.0121} & \rev{1.13B}\\
	\bottomrule
	\end{tabular}
\end{table}

\spara{Datasets.}
We use eight real-world transactional datasets \rev{and six real-world sequence datasets},\footnote{From \url{www.philippe-fournier-viger.com/spmf/index.php} and \url{http://konect.cc/networks}.} listed in \cref{tbl:datasets}.
Density is the ratio between the average transaction length and the number of items.
\textbf{iewiki} is a user-edit dataset, where each transaction is a set of Wikibooks pages edited by the same user;
\textbf{kosarak}, \textbf{BMS1}, \textbf{BMS2}, \rev{and \textbf{FIFA}} are click-stream datasets;
\textbf{chess} is a board-description datasets adapted from the UCI Chess (King-Rook vs King-Pawn) dataset;
\textbf{foodmart} and \textbf{retail} are retail transaction datasets;
\textbf{db-occ} includes user occupations taken from dbpedia;
\rev{\textbf{SIGN} is a dataset of sign language utterance;}
\rev{\textbf{LEVIATHAN} and \textbf{BIBLE} are sentence datasets created from the novel Leviathan by Thomas Hobbes (1651) and the Bible, respectively; and}
\rev{in \textbf{BIKE} each sequence indicate the bike sharing stations where a bike was parked in Los Angeles over time.}

\spara{Experimental Environment.}
We run our experiments on a $40$-Core ($2.40$ GHz) Intel\textregistered\ 
Xeon\textregistered\ Silver 4210R machine, with 384GB of RAM, and running
FreeBSD 14.0. Results are compared against GMMT~\citep{GionisMMT07}, which is a
swap randomization algorithm that samples from the null model that only
maintains the two fundamental properties. 
\rev{The sampler GMMT-S is a variant of the SelfLoop version of GMMT that preserves the left and right degree sequences of the bipartite multi-graph representation of the observed sequence dataset.}
All the samplers are implemented in Java 1.8, and the code is available at
\url{https://github.com/acdmammoths/alice}.

\begin{figure}[thb]
	\centering
	\includegraphics[width=.9\textwidth]{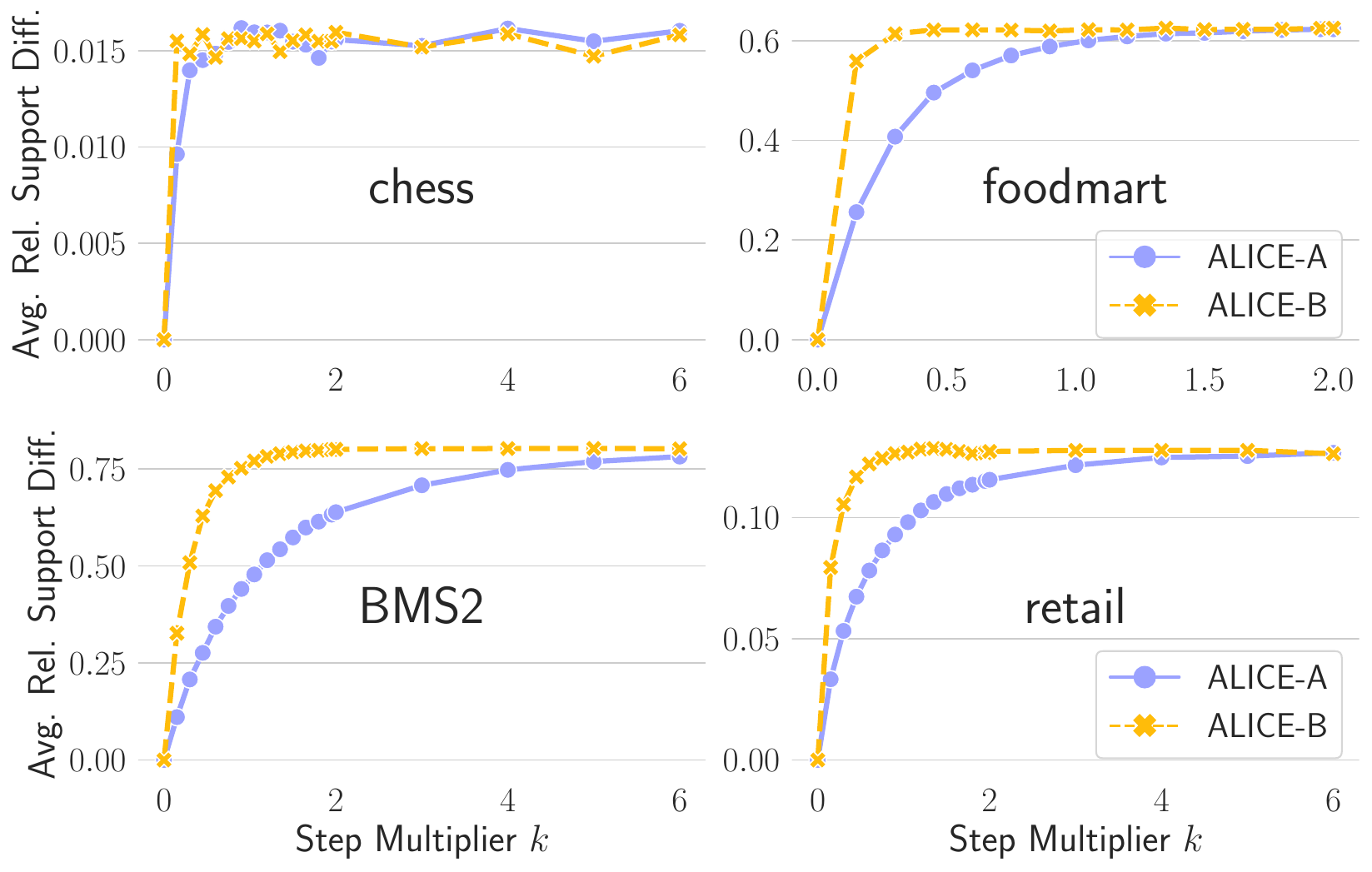}
	\caption{\rev{Convergence of the samplers increasing the step number multiplier $k$,
	for chess (upper left), foodmart (upper right), BMS2 (lower left), and retail (lower right).}}%
	\label{fig:convergence}
\end{figure}

\begin{figure}[thb]
	\centering
	\includegraphics[width=\textwidth]{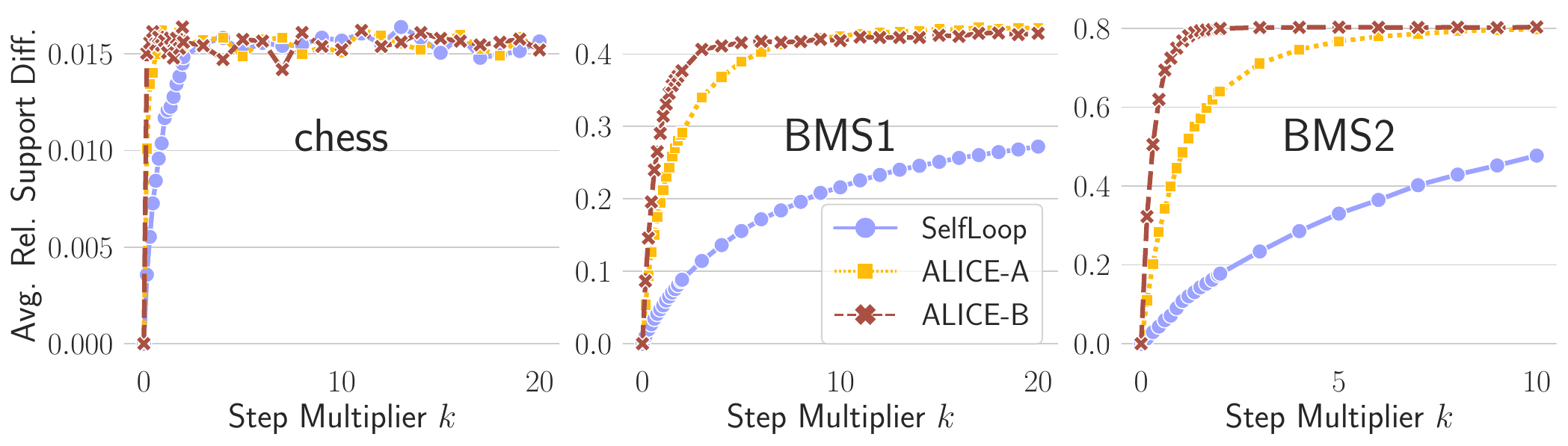}
	\caption{\rev{Convergence of \algorso and \algocurve vs SelfLoop increasing the step number multiplier $k$,
	for chess (left), BMS1 (middle), and BMS2 (right).}}%
	\label{fig:convergence_self_loop}
\end{figure}

\spara{Convergence.}
To study the convergence of our samplers, we follow a procedure similar to the 
one proposed by \citet{GionisMMT07}. The mixing time, i.e., the number of steps
needed for the state of the chain to be distributed according to $\nullprob$, is
estimated by looking at the convergence of the \emph{Average Relative Support
Difference (ARSD)}, defined as 
\[
ARSD(\dataset^s) = \frac{1}{\card{\fis{\odataset}{\thresh}}}\sum_{A \in
\fis{\odataset}{\thresh}}\frac{\abs{\supp{\odataset}{A} - \supp{\dataset^s}{A}}}{\card{\supp{\odataset}{A}}},
\]
where $\dataset^s$ is the dataset obtained by the sampler after $s$ steps.
\Cref{fig:convergence} reports this quantity for \rev{chess (upper left)}, foodmart (upper right), BMS2 (lower left), and retail (lower right), for $s = \lfloor k \cdot w \rfloor$ with $k \in \{0,
0.15, 0.3, \dots, 2, 3, \dots, 6\}$ and $w = \sum_{t \in \odataset}\card{t}$.
Results for other datasets were qualitatively similar. %
%The parameter $k$ is a step multiplier factor, used to set the number of steps $s$ 
%proportional to the sum of transaction lengths of the dataset.
\algocurve needs $\sfrac{1}{3}$ or even fewer steps than \algorso, thanks
to to the fact that it essentially performs multiple RSOs at each step (as each
RBSO corresponds to one or more RSOs).
%In particular, ARSD stabilizes roughly at 0.45, 1.5, and 1.05, for \algocurve in foodmart, 
%BMS2, and retail; whereas for \algorso the values are 1.5, 6, and 5, respectively.

Despite the fewer number of \emph{steps} needed, the \emph{(wall clock) time} to
convergence of \algocurve (not reported in figures), however, is higher than
that of \algorso. This difference is due to the fact that performing an RBSO, which is a
more complex operation than an RSO, requires additional bookkeeping for each
element in the set $U$ (see \cref{def:rbso}). In the worst cases (BMS1, and
chess), \algocurve takes almost 10x the time of \algorso to reach
convergence. An interesting direction for future work is to study how to avoid
this additional bookkeeping in \algocurve to obtain the same advantage over
\algorso observed for the number of steps to convergence also for the wall clock time.

\rev{\Cref{fig:convergence_self_loop} compares the ARSD values obtained by
\algo with those measured in the states of the chain traversed by the na\"{\i}ve approach introduced in \Cref{sec:sampling:swaps} (called SelfLoop in the figure). Recall that, at each step, this approach draws two pairs $(a,c)$ and $(b,d)$ of row-column indices uniformly at random, and moves to the next state if $(a,c), (b,d) \rightarrow (a,d), (b,c)$ is a RSO.}
\rev{Especially for larger datasets, we observe that SelfLoop moves slowly in the state space, which prevents the ARSD from stabilizing even after $10w$ steps.}
\rev{As a result, a large number of steps is required to increase the likelihood of convergence, thus rendering SelfLoop impractical for use.}
\rev{In fact, the running time increases with the number of steps. In BMS1, for example, the ARSD for \algocurve stabilizes around $k=4$, with \algocurve taking roughly 17s to perform the $4w$ steps. In contrast, SelfLoop takes 397s to perform the $10w$ steps.}

\begin{figure}[thb]
	\centering
	\includegraphics[width=.9\textwidth]{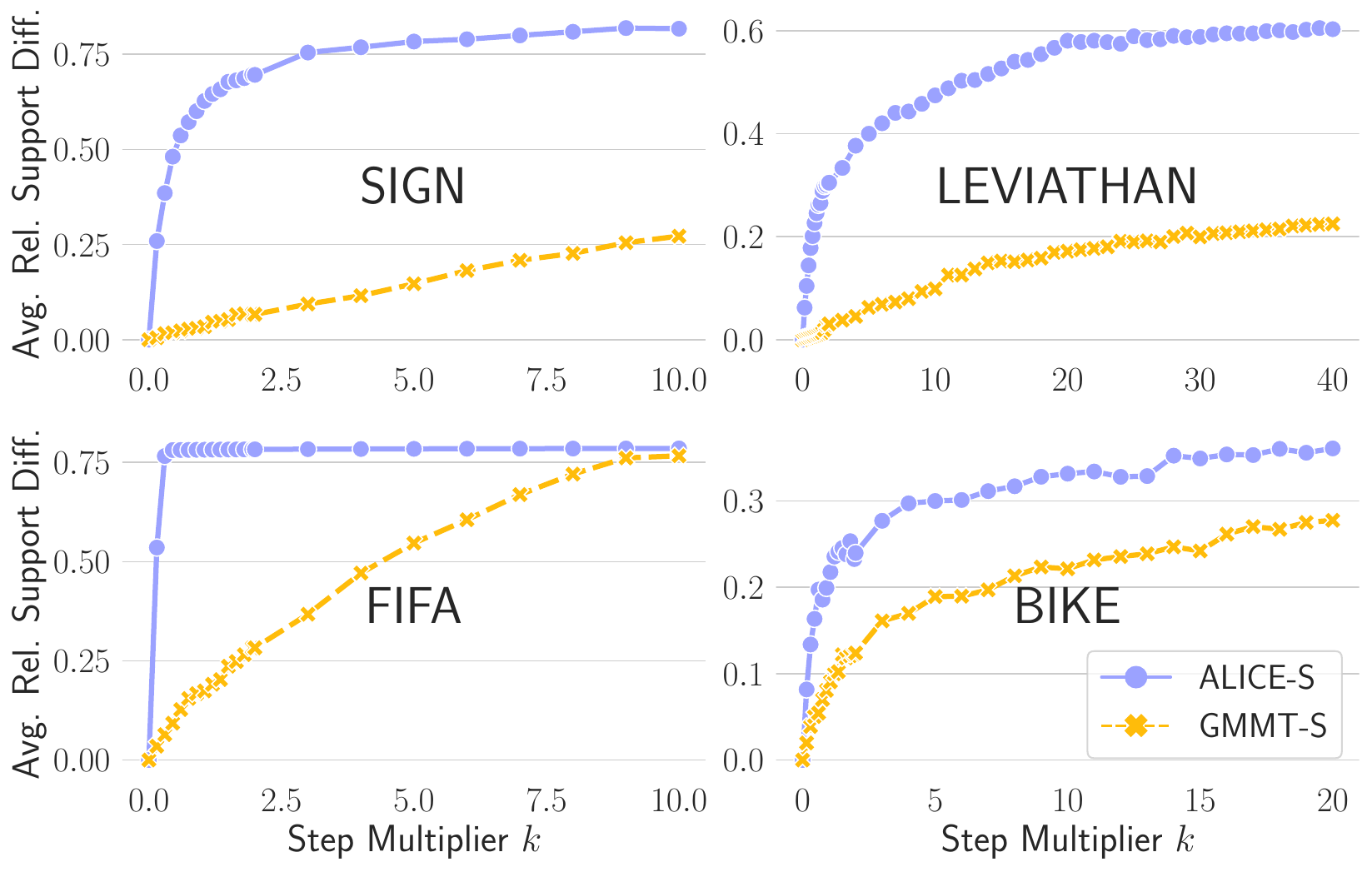}
	\caption{\rev{Convergence of \algoseq and GMMT-S increasing the step number multiplier $k$,	for SIGN (upper left), LEVIATHAN (upper right), FIFA (lower left), and BIKE (lower right).}}%
	\label{fig:convergence_seq}
\end{figure}  

\rev{We notice a similar behavior in \Cref{fig:convergence_seq}, which illustrates the convergence of \algoseq and GMMT-S for the sequence datasets SIGN (upper left), LEVIATHAN (upper right), FIFA (lower left), and BIKE (lower right). In this case, $w=|E|$, i.e. the number of edges in the multi-graph corresponding to the dataset. In SIGN and FIFA the ARSD stabilizes before $k=3$ for \algoseq, whereas for GMMT-S it stabilizes only in FIFA. In BIKE and LEVIATHAN both samplers move slowly, and thus convergence is reached after almost $20w$ and $30w$ steps, respectively.}

\begin{figure}[htb]
	\centering
	\includegraphics[width=.8\textwidth]{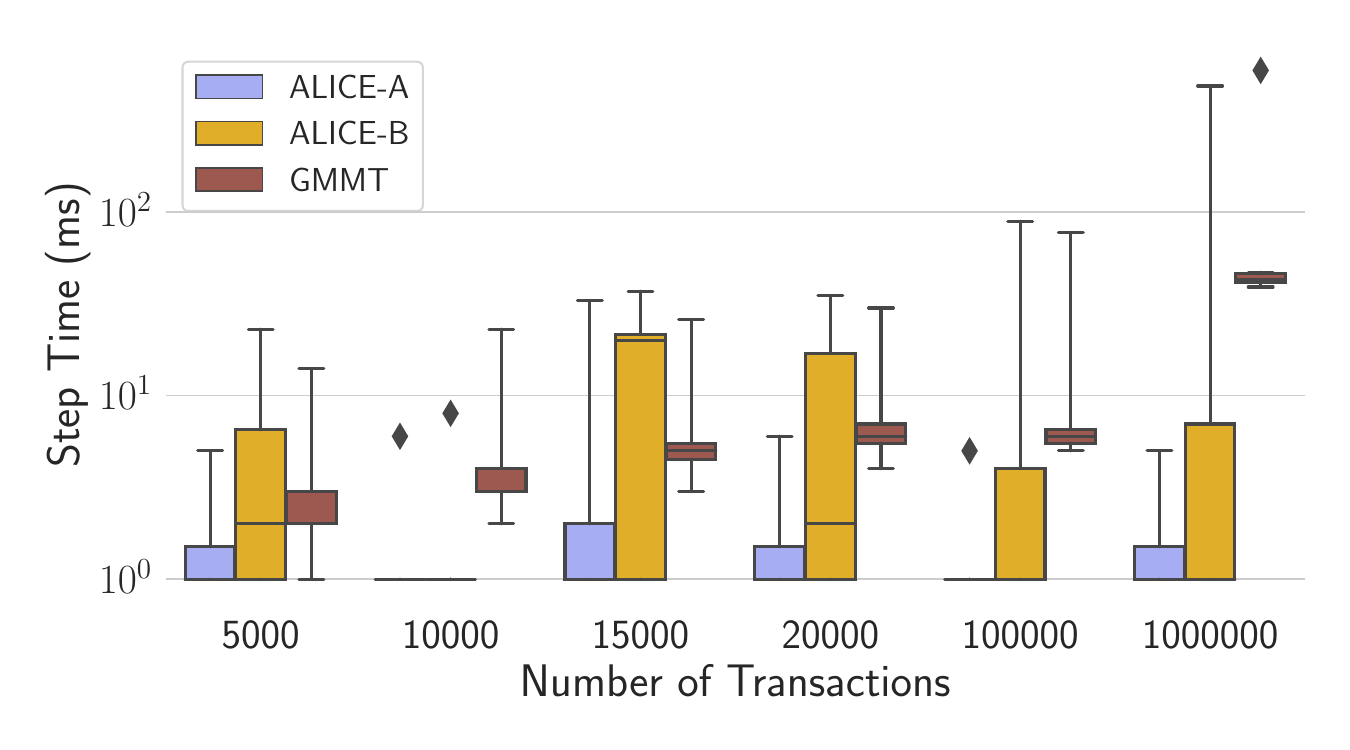}
	\caption{\rev{Step times of the samplers in the synthetic datasets.}}%
	\label{fig:scalability}
\end{figure}

\spara{Scalability.}
\rev{To study the scalability of \algo, we create synthetic datasets with
increasing number of transactions and
average transaction length $25$, by using the IBM Quest
generator~\citep{AgrawalS94}: five datasets with $100$ items and $5$k, $10$k, $15$k, $20$k, and $100$k transactions, and one dataset with $10$k items and $1$M transactions.} 
For each sampler \rev{for transactional datasets}, we perform 10k steps and
compute the distribution of step times, reported in \cref{fig:scalability} (log values). 
For
completeness, we include the step times of GMMT, although they are not really comparable
to those of our algorithms, because GMMT samples from a different null
set $\nullset$ which includes datasets with different BJDMs. The median step
time scales linearly with the size of the dataset. \algorso is the fastest
sampler, requiring less than 8ms to perform a step in the largest dataset, and
less than 1ms in most of the cases. In contrast, the step times of \algocurve
are characterized by more variability, as they depend on \emph{(i)} whether
the performed RBSO is an rRBSO or a cRBSO, and \emph{(ii)} the size of the set
$U$: the time required to compute $\copiesnum{\dataset}$ is larger for cRBSO,
and it grows with the size of $U$. %, as discussed above about the wall clock step time for \algocurve.

\begin{figure}[htb]
	\centering
	\includegraphics[width=.8\textwidth]{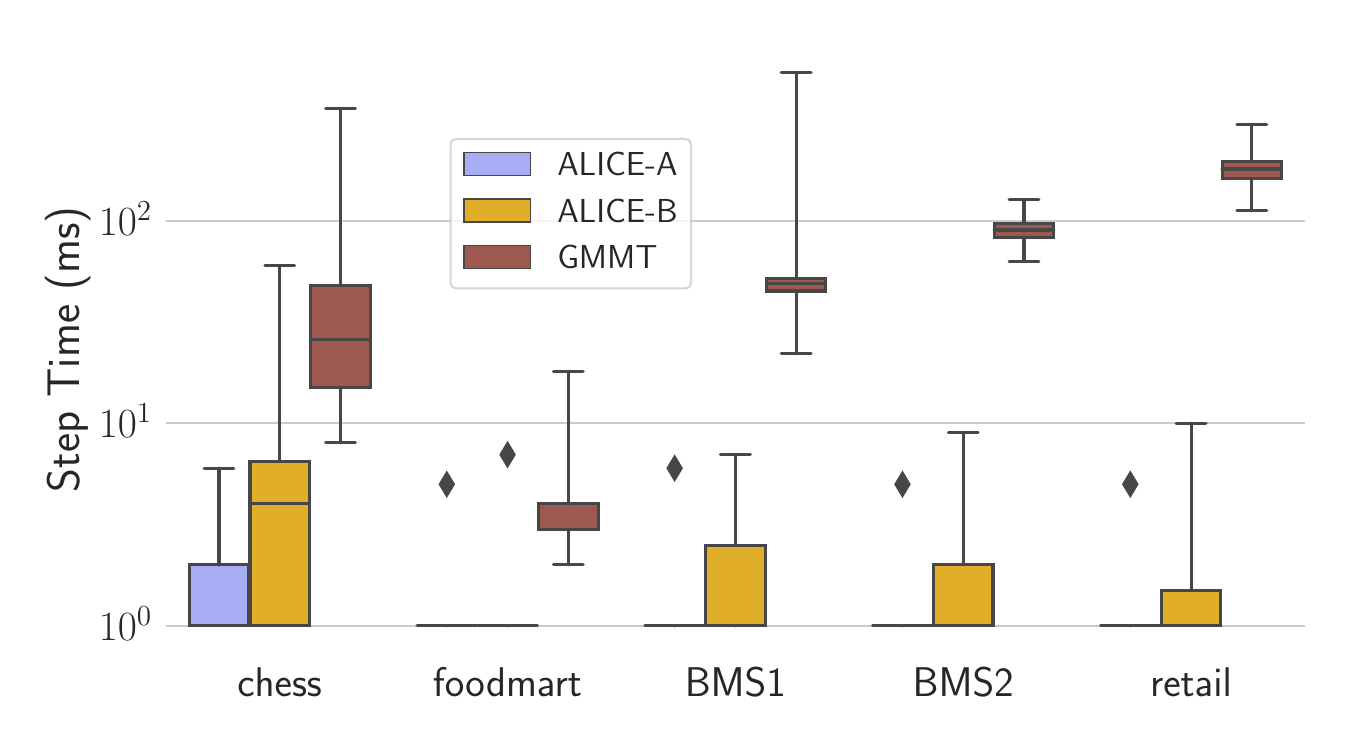}
	\caption{\rev{Step times of the samplers in the real datasets (log times).}}
	\label{fig:scalability_real}
\end{figure}

\begin{figure}[htb]
	\centering
	\includegraphics[width=.8\textwidth]{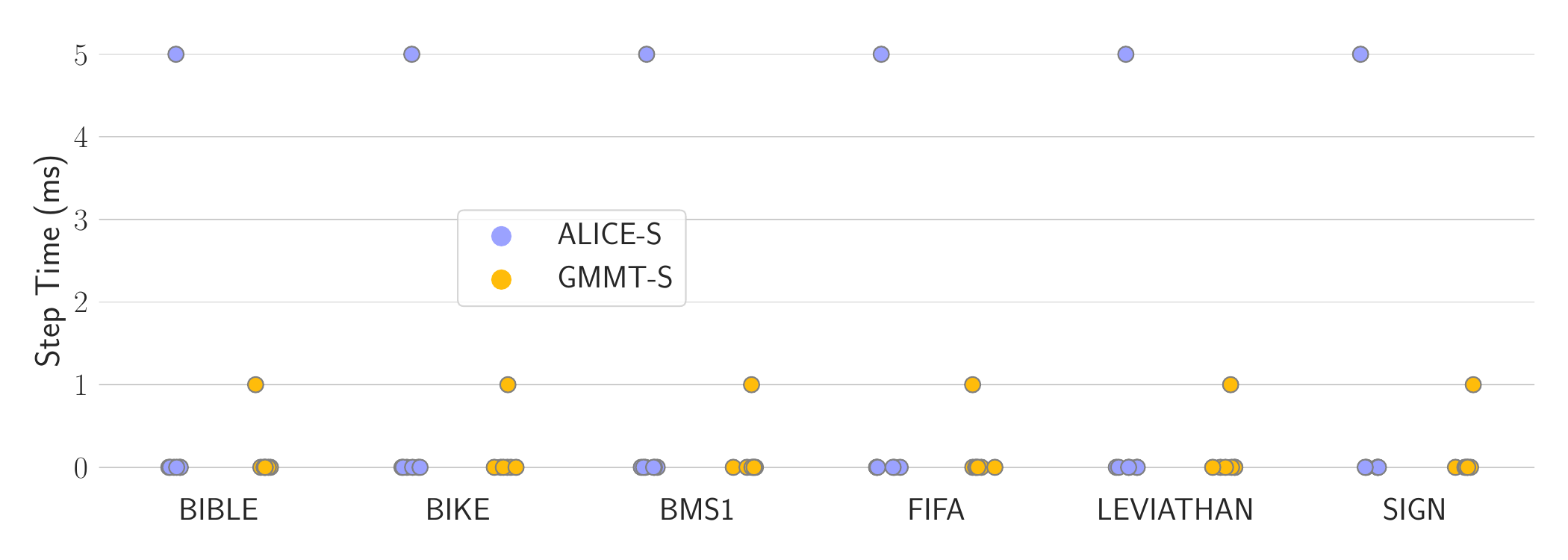}
	\caption{\rev{Step times of the samplers in the real sequence datasets.}}%
	\label{fig:scalability_seq}
\end{figure}

\rev{\Cref{fig:scalability_real} reports the distribution of the time required to perform a step}
% min, Q1, median, Q3, and max time required to perform a step, 
\rev{for each sampler in each transactional dataset}. The step time of \algocurve tends to be larger 
in chess, despite
it not being the largest dataset. This fact is due to the high density of this dataset,
and its large transaction length (37). Hence, the size of $U$ is usually high.
In foodmart, on the other hand, the average
transaction length is $4.42$ and the average item support is $5.6$, so the size
of $U$ is often $1$.
An algorithmic improvement in the bookkeeping due to the size of $U$
would results in better performance of \algocurve, as mentioned above.

\rev{\Cref{fig:scalability_seq} shows the distribution of step times for \algoseq and GMMT-S in the sequence datasets. The performance of \algoseq is comparable with that of \algorso, as they follow a similar approach to sample the swap operations to perform. The median step time is always $<1$, and the algorithm takes at most $5$ms to perform a step. The step times of GMMT-S are far lower than its counterpart for transactional datasets, because this algorithm does not require bookkeeping to compute the transition acceptance probability. we recall that also in this case the running time of GMMT-S is not really comparable with that of our algorithm because they sample from different null models.}  
% \begin{figure}[t!]
% 	\centering
% 	\begin{minipage}[b]{.49\linewidth}
% 		\centering
% 		\includegraphics[width=\textwidth]{figures/iewiki_numSigItemsets.pdf}
% 	\end{minipage}
% 	\begin{minipage}[b]{.49\linewidth}
% 		\centering
% 		\includegraphics[width=\textwidth]{figures/chess_numSigItemsets.pdf}
% 	\end{minipage}
% 	\caption{Number of significant itemsets per length for \algorso, \algocurve, and GMMT, 
% 	in iewiki (left) and chess (right).}
% 	\label{fig:signfi}
% \end{figure}

\spara{Significance of the Number of Frequent Itemsets.}
To show that the null model we introduce is different than the one that only
preserves the two fundamental properties, We test the null hypothesis $H_0$
from \cref{eq:nullhyp}, and estimate the $p$-value as in \cref{eq:epval} with
$T=4352$ samples from the null model, for each sampler.\footnote{The number of
steps is empirically fixed according to the results obtained in the convergence
experiment.} We remark that this kind of hypothesis is just a simple but clear
example of the tasks that can (and should) be formed to assess the statistical
validity of results obtained from transactional datasets. Other tasks include,
for example, mining the statistically-significant frequent itemsets. We limit
ourselves to this task because it is straightforward to present and it is
sufficient to show the significant (pun intended) difference between preserving
the BJDM, as our null model does, and not preserving it.

\begin{table}[htb]
	\centering
	\scriptsize
	\caption{No.\ of FIs in the original dataset $\odataset$, avg.\ no.\ of 
	FIs in the sample $\dataset_i$, estimated p-value $\epval{\odataset, H_0}$ for $H_0$ from \cref{eq:nullhyp}.}%
	\label{tab:freqsignit}
	\begin{tabular}{crlrr}
		\toprule
    Dataset & $\card{\fis{\odataset}{\thresh}}$ & Sampler &
    $\frac{\sum_1^T{\card{\fis{\dataset_i}{\thresh}}}}{T}$ & \epval{\odataset, H_0}\\
		\midrule
		\multirow{3}{*}{iewiki} & \multirow{3}{*}{65665}
		& \algorso & 173 & $2.3$E-$4$ \\
		\multirow{3}{*}{$\thresh=1.4$E-$2$} & & \algocurve & 171 & $2.3$E-$4$ \\
		& & GMMT & 2257 & $1.8$E-$2$ \\
		\midrule
		\multirow{3}{*}{kosarak} & \multirow{3}{*}{6277}
		& \algorso & 4865 & $2.3$E-$4$ \\
		\multirow{3}{*}{$\thresh=3.0$E-$3$} & & \algocurve & 4130 & $2.3$E-$4$ \\
		& & GMMT & 31774 & $1.0$E-$0$ \\
		\midrule
		\multirow{3}{*}{chess\footnote{For chess and BMS1, $T=2176$, due to the prohibitive running time of GMMT.}} & \multirow{3}{*}{8227}
		& \algorso & 6183 & $4.6$E-$4$ \\
		\multirow{3}{*}{$\thresh=0.8$} & & \algocurve & 6182 & $4.6$E-$4$ \\
		& & GMMT & 6179 & $4.6$E-$4$ \\
		\midrule
		\multirow{3}{*}{foodmart} & \multirow{3}{*}{4247}
		 & \algorso & 2229 & $2.3$E-$4$ \\
		\multirow{3}{*}{$\thresh=3.0$E-$4$} & & \algocurve & 2228 & $2.3$E-$4$ \\
		& & GMMT & 2226 & $2.3$E-$4$ \\
		\midrule
		\multirow{3}{*}{db-occ} & \multirow{3}{*}{834} 
		& \algorso & 702 & $2.3$E-$4$ \\
		\multirow{3}{*}{$\thresh=5.0$E-$4$} & & \algocurve & 703 & $2.3$E-$4$ \\
		& &    GMMT & 598 & $2.3$E-$4$ \\
		\midrule
		\multirow{3}{*}{\rev{BMS1}} & \multirow{3}{*}{\rev{3991}} 
		& \rev{\algorso} & \rev{1998} & \rev{$4.6$E-$4$}\\
		\multirow{3}{*}{\rev{$\thresh=0.001$}} & & \rev{\algocurve} & \rev{1609} & \rev{$4.6$E-$4$}\\
		& &    \rev{GMMT} & \rev{1800} & \rev{$4.6$E-$4$}\\
		\bottomrule
	\end{tabular}
\end{table}

\Cref{tab:freqsignit} reports the
number of FIs in the observed dataset, the average number of FIs in the sampled
datasets, and the empirical $p$\nobreakdash-value, for datasets where GMMT
terminated within two days. The fact that (very) different $p$-values can be
obtained with \algo and with GMMT, which sample from a different null model,
highlights the striking impact of preserving the BJDM. As an example, for any
critical value in $(0.00023,0.01815)$, in iewiki $H_0$ would be rejected under the null model we introduce, but not under the null model that
only preserves the two fundamental properties. 

\begin{figure}[ht]
	\centering
		\includegraphics[width=.9\textwidth]{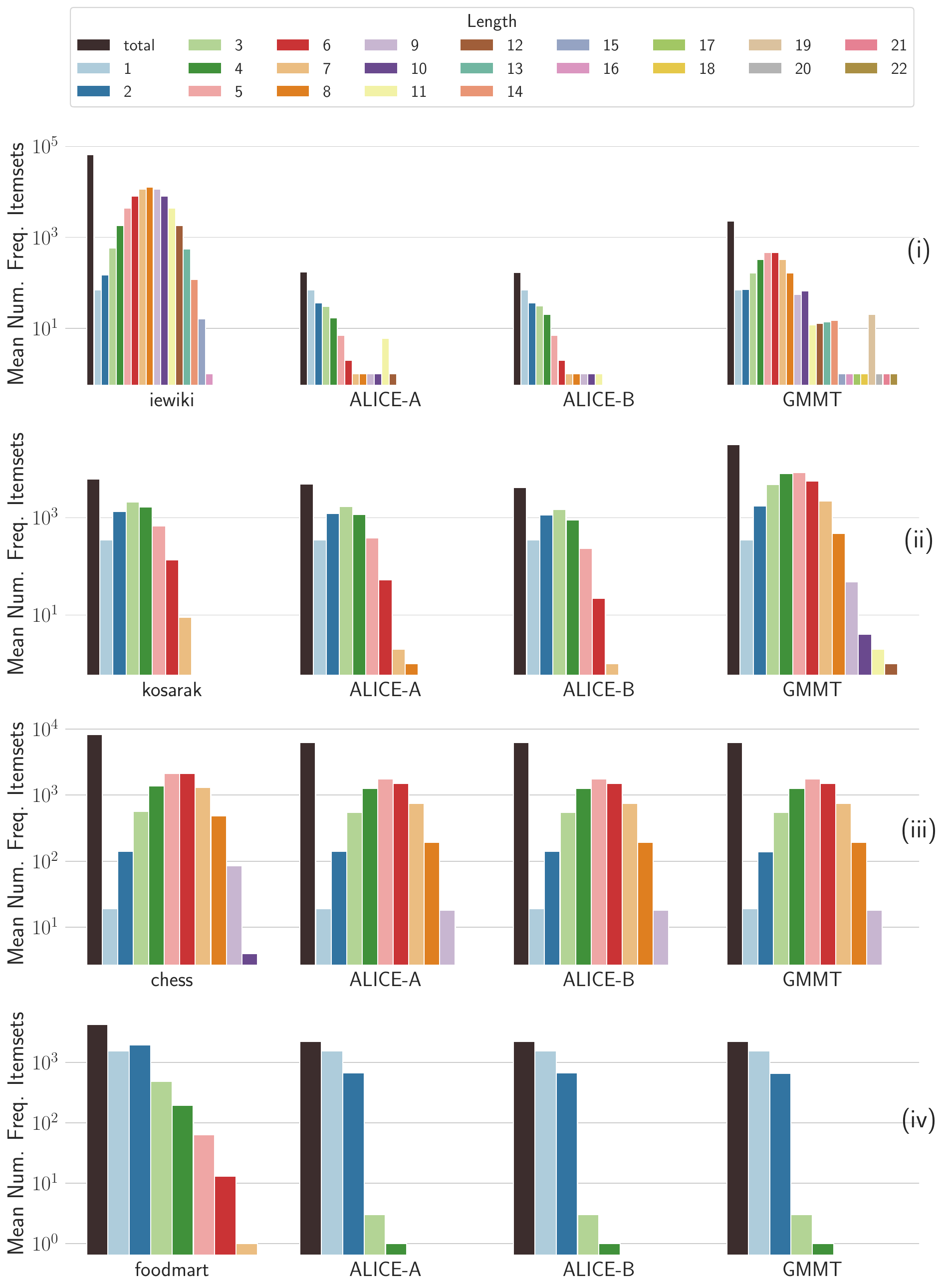}
	\caption{Mean number of frequent itemsets per length for \algorso, \algocurve, and GMMT, in iewiki (i), kosarak (ii), chess (iii), and foodmart (iv).}%
	\label{fig:freqit_S}
\end{figure}

\begin{figure}[ht]
	\centering
		\includegraphics[width=.9\textwidth]{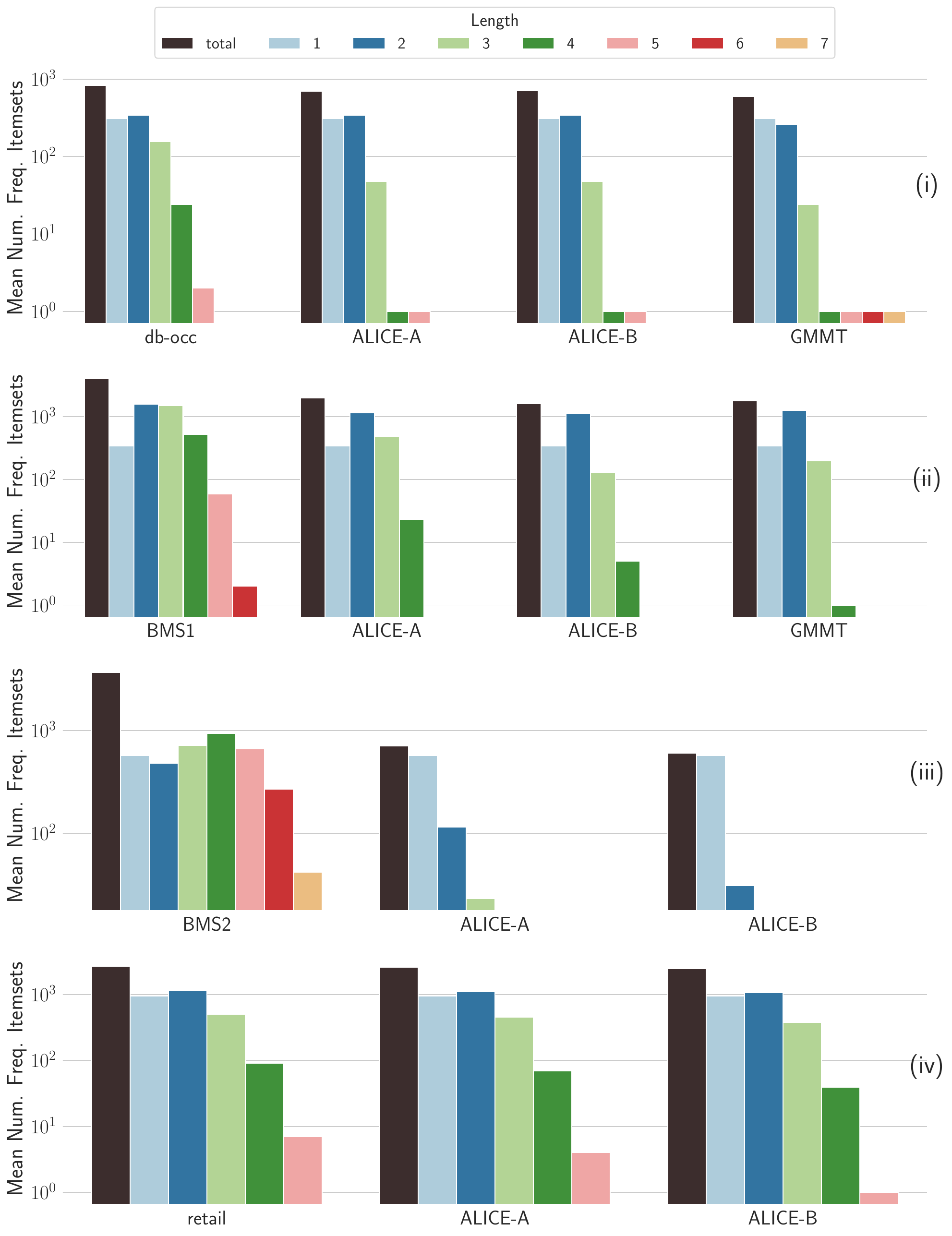}
	\caption{\rev{Mean number of frequent itemsets per length for \algorso, \algocurve, and GMMT (when available), in db-occ (i), BMS1 (ii), BMS2 (iii), and retail (iv).}}%
	\label{fig:freqit_L}
\end{figure}

\begin{figure}[thb]
	\centering
		\includegraphics[width=.9\textwidth]{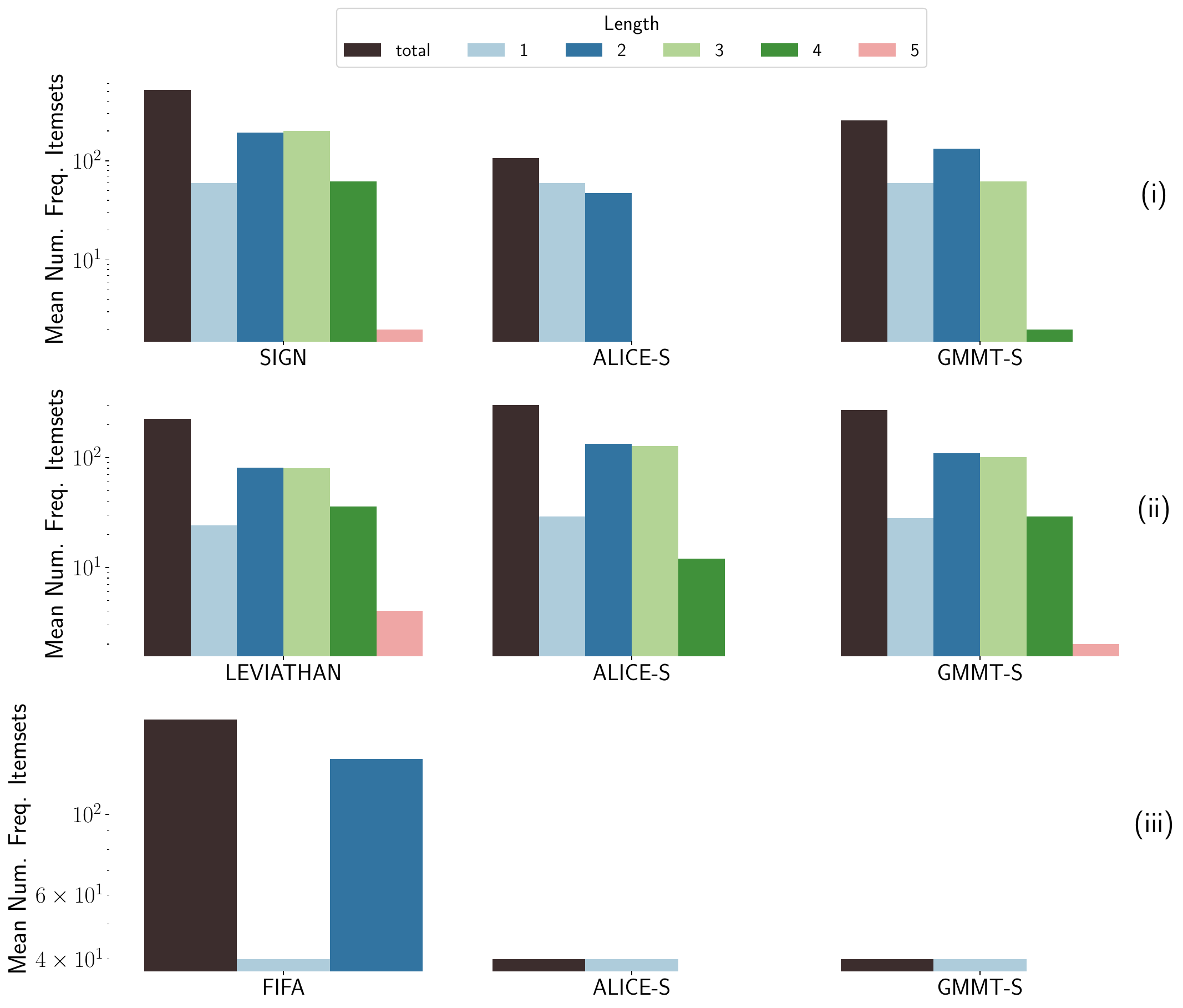}
	\caption{\rev{Mean number of frequent itemsets per length for \algoseq and GMMT-S, in SIGN (i), LEVIATHAN (ii), and FIFA (iii).}}%
	\label{fig:freqit_seq_S}
\end{figure}

\Cref{fig:freqit_S} and \Cref{fig:freqit_L} show the
distribution of the number of FIs of different lengths in the original datasets,
and the average of the same quantity over the datasets sampled by the different
samplers. 
For BMS2 and retail we do not report results for GMMT, due to its prohibitive running time.
Since they sample from the same null model, \algorso and \algocurve
obtain the same distribution (up to sampling noise), which is quite different
than the one obtained by GMMT. Note that whether the sampled
datasets have more or less FIs than the observed dataset depends both on the
null model and on the dataset.
For instance, in iewiki (\cref{fig:freqit_S}, i) datasets sampled from all null
models have fewer FIs than the observed one.
Conversely, in
kosarak (\cref{fig:freqit_S}, ii) the BJDM-preserving null model
produces samples with a similar number of FIs, while the datasets sampled from
the null model that preserves the  two fundamental properties have a larger number of FIs.
In addition, in iewiki, the samples from this latter model usually contain FIs
of length larger than any FIs in the observed dataset: the max length of a FI in iewiki is 16, whereas it grows to 22 in the datasets sampled by GMMT.
In kosarak, the datasets sampled by GMMT contain both a larger number of FIs per
length and FIs of larger length (12 vs. 7). The increase in the number of FIs of
length three, leads to a substantial difference in the number of FIs of length in
the range $[4,7]$: we observe up to 246x more FIs in the sampled datasets.
In contrast, since all the transactions in chess have the same length, we
observe (\cref{fig:freqit_S}, iii) similar average numbers of FIs across the
samplers. In this dataset, any swap operation performed by GMMT is actually a
RBSO, and hence also the datasets sampled by GMMT preserve the BJDM. Similarly,
the fact that the nodes in the graph representation of foodmart (\cref{fig:freqit_S}, iv) display high
assortativity indicates that most of the swap operations of GMMT are RBSO. 
In fact, when the product between the two marginals is close to the BJDM in terms of Frobenius norm, preserving the marginals \emph{almost} preserves the BJDM\@
As a consequence, also in this case, the
distribution of the numbers of FIs for GMMT is similar to that for \algo.

We can see that the distribution of the number of FIs in the observed dataset is always different from those obtained from the sampled datasets.
In particular, the longer itemsets are, in general, less frequent in the sampled datasets than in the original dataset.
As an example, BMS2 (\cref{fig:freqit_L}, iii) contains many FIs of length larger
than three (roughly 52\% of the FIs), while most of the FIs in the datasets
sampled by \algo have length one.

\rev{\Cref{fig:freqit_seq_S} and \Cref{fig:freqit_seq_L} present the distribution of frequent sequential itemsets of different lengths in the original sequence datasets, and the average of the same quantity over the datasets sampled by \algoseq and GMMT-S. The frequency thresholds used are taken from \cite{TononV19}: $0.4$ for SIGN, $0.15$ for LEVIATHAN, $0.275$ for FIFA, $0.025$ for BIKE, $0.1$ for BIBLE, and $0.002$ for BMS1. The number of samples extracted is always $4352$ and the number of steps performed by \algoseq is $10w$, while it is $50w$ for GMMT-S. Also in this case, $w$ is the number of edges in the multi-graph corresponding to the dataset. Similarly to the transactional dataset case, we tend to observe frequent itemsets of larger size in the datasets sampled by GMMT-S, except in the case of few frequent itemsets in the original dataset (e.g. FIFA and BIKE). In such cases, only trivial itemsets are frequent, and their frequencies tend to be preserved by preserving the two fundamental properties.}

\begin{figure}[thb]
	\centering
		\includegraphics[width=.9\textwidth]{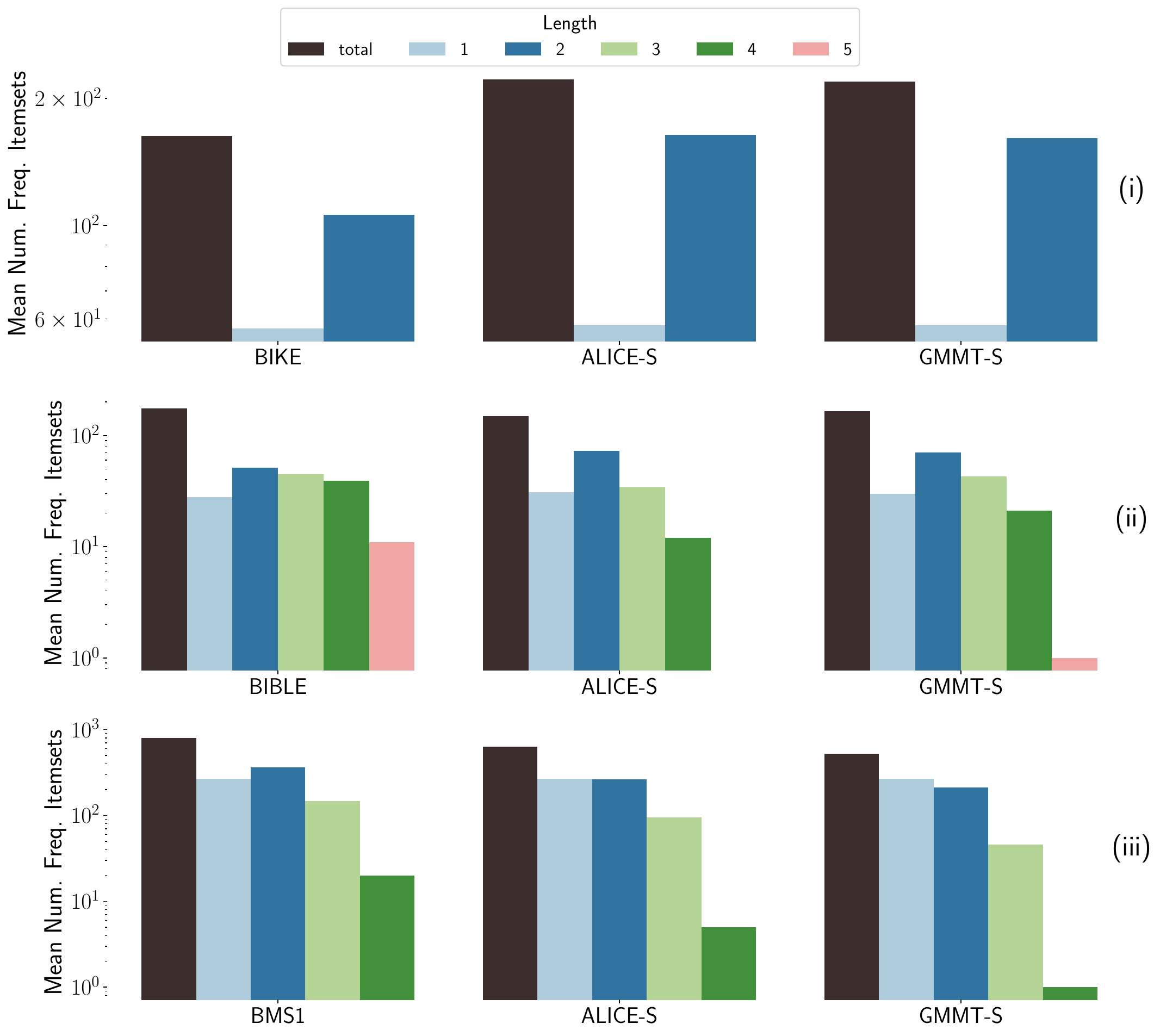}
	\caption{\rev{Mean number of frequent itemsets per length for \algoseq and GMMT-S, in BIKE (i), BIBLE (ii), and BMS1 (iii).}}%
	\label{fig:freqit_seq_L}
\end{figure}

Thanks to these results, we conclude that the BJDM captures important additional
information about the data generation process. Therefore, using a null model
that preserves it may lead to very different conclusions about the data
generation process compared to one that does not. These results highlight, once
more, how the choice of the null model by the user must be extremely deliberate.

\section{Conclusion}\label{sec:concl}

We introduced a novel null model for statistically assessing the results obtained
from an observed transactional or sequence dataset, preserving its Bipartite
Joint Degree Matrix (BJDM).  On transactional datasets, maintaining this
property enforces, in addition to the dataset size, transaction lengths, and
item supports, also the preservation of the number of \emph{caterpillars} of the
bipartite graph corresponding to the observed dataset, which is a natural and
important property that captures additional structure. We describe \algo, a
suite of Markov-Chain-Monte-Carlo algorithms for sampling datasets from the null
models. The results of our experimental evaluation show that \algo scales well
and that, when testing results w.r.t.\ our null models, different results are
marked as significant than when using existing null models.

A good direction for future work includes a rigorous theoretical analysis and/or
experimental evaluation of the trade-offs between the time taken to perform a single
step and the mixing time of the Markov chain when using different neighbor
sampling distribution. Towards making statistically-sound knowledge discovery a
reality, we also suggest the development of even more descriptive null models
(e.g., by preserving the number of
\emph{butterflies}~\citep{sanei2018butterfly}), and of efficient procedures to
sample from them, which is usually the challenging aspect.  Another interesting
direction is proposing null models for real-valued transactional datasets, such
as those used for high-utility itemsets mining.

\section*{Acknowledgments}
This work is sponsored in part by NSF award IIS-2006765.

\bibliography{refs}

\end{document}